\documentclass[12pt,draftclsnofoot,onecolumn]{IEEEtran}
\usepackage[colorlinks,bookmarksopen,bookmarksnumbered,citecolor=red,urlcolor=red,]{hyperref}

\usepackage{color}
\definecolor{MyDarkBlue}{rgb}{0,0.08,0.45}
\definecolor{yellow}{rgb}{0.99,0.99,0.70}
\definecolor{myback}{RGB}{204,232,207}  
\definecolor{white}{rgb}{1.0,1.0,1.0}                             
\definecolor{black}{rgb}{0.00,0.00,0.00}

\usepackage{verbatim}
\usepackage{epsfig,bbm}
\usepackage[colorlinks,bookmarksopen,bookmarksnumbered,citecolor=red,urlcolor=red,]{hyperref}
\usepackage{CJK}
\usepackage{indentfirst}
\usepackage{multirow}
\usepackage{epstopdf}
\usepackage{graphicx}
\usepackage{footmisc}
\graphicspath{{./figures/}}
\usepackage{amsfonts}
\usepackage{mathrsfs}
\usepackage{setspace}
\usepackage{amsmath}
\usepackage{algorithm,algorithmic,amsbsy,amsmath,amssymb,epsfig,bbm,mathrsfs, bbm} 
\usepackage{amsthm}
\usepackage{verbatim}
\usepackage[noadjust]{cite}
\hyphenation{op-tical net-works semi-conduc-tor}
\usepackage[subfigure]{graphfig}

\newtheorem{theorem}{Theorem}

\newtheorem{corollary}{Corollary}
\newtheorem{prop}{Proposition}

 \usepackage{geometry}
 \geometry{left=1.8cm,right=1.8cm,top=1.7cm,bottom=1.7cm} 

\usepackage{xcolor}
\allowdisplaybreaks
\usepackage{xcolor}

\begin{document}

\title{Ambient Backscatter-Assisted Wireless-Powered Relaying 
}

\author{Xiao Lu, Dusit Niyato, Hai Jiang, Ekram Hossain and Ping Wang \\ 
 \thanks{\hspace{-5mm} Some initial results of this paper have been accepted by the IEEE International Conference on Communications in 2018~\cite{X.May_2018Lu}}
}

\markboth{}{Shell \MakeLowercase{\textit{et al.}}: Bare Demo of
IEEEtran.cls for Journals}
  
\maketitle


\begin{abstract}
Internet-of-Things (IoT) is featured with low-power communications among a massive number of ubiquitously-deployed and energy-constrained electronics, e.g., sensors and actuators. To cope with the demand, wireless-powered cooperative relaying emerges as a promising communication paradigm 
to extend data transmission coverage and solve energy scarcity for the IoT devices. In this paper, we propose a novel hybrid relaying strategy by combining wireless-powered communication and ambient backscattering functions to improve applicability and performance of data transfer. In particular, the hybrid relay can harvest energy from radio frequency (RF) signals and use the energy for active transmission. Alternatively, the hybrid relay can choose to perform ambient backscattering of incident RF signals for passive transmission. 
To efficiently utilize the ambient RF resource, we design mode selection protocols to coordinate between the active and passive relaying in circumstances with and without instantaneous channel gain. With different mode selection protocols, we characterize the success probability and ergodic capacity of a dual-hop relaying system with the hybrid relay in the field of randomly located ambient transmitters. 
The analytical and the numerical results demonstrate the effectiveness of the mode selection protocols in adapting the hybrid relaying into the network environment and reveal the impacts of system parameters on the performance gain of the hybrid relaying. As applications of our analytical framework which is computationally tractable, we formulate optimization problems based on the derived expressions to optimize the system parameters with different objectives. The optimal solutions exhibit a tradeoff between the maximum energy efficiency and target success  probability.

\end{abstract}

\emph{Index terms- Large-scale IoT, wireless-powered relaying, ambient backscatter communications, 
performance analysis, energy efficiency}.


\section{Introduction}

Wireless-powered communication \cite{Q.Nov.2015Wu,H.April2016Lee,Q.Jul.2016Wu,E.May_2017Boshkovska,D.May2016Niyato} is a green networking paradigm that allows wireless devices to harvest energy from ambient signals, especially radio-frequency (RF) waves. The harvested energy is then used for data transmission without requiring an external energy supply. Therefore, it is considered to be an enabling technology to support 
machine-type communications (MTC) for Internet-of-Things (IoT) applications \cite{S.April2016Bi,X.April2015Lu,D.Aug2016Niyato}, such as smart sensors and wearable devices~\cite{Y.2014Toh}.
It has also been shown that for traditional wireless networking, cooperative relaying can improve the reliability of wireless data transmission~\cite{X.2014Lu}. This concept is particularly attractive for wireless-powered communication, due to the resource limitation of the devices. In this regard, 
wireless-powered relaying has attracted a great deal of recent research attention. 
Existing literature \cite{Y.April_2015Zeng,Y.May_2016Zeng,L.2016Tang,Z.2017Chen} has shown that the use of wireless-powered relaying can provide remarkable performance gains in terms of spatial efficiency and energy efficiency.

However, employing wireless-powered relaying poses some notable challenge. Specifically, a wireless-powered relay involves active transmission, i.e., RF signals are actively generated at the RF front-end. 
This will require the relay to adopt power-hungry analog circuit components such as oscillator, amplifier, filter, and mixer~\cite{V.2013Liu}.
Consequently, the relay may need to spend a long time to harvest and accumulate energy for the relaying operation~\cite{X.2nd2015Lu}. The situation becomes much worse when the ambient RF signals are weak or intermittent. 
To solve such issues, some existing literature mainly emphasizes on designing network protocols
to utilize network resources more efficiently. For example, the authors in~\cite{H.April2015Chen} introduced a harvest-then-cooperate protocol for the relay to optimize the time allocation for energy harvesting in a downlink and information forwarding in an uplink to maximize throughput. 
These network protocols 
manage to enhance the performance of a wireless-powered relay under different system configurations. However, the performance is still limited by the high circuit power consumption because of the active transmission nature.


 
To tackle the power consumption limitation caused by active transmission, the work in~\cite{X.2017Lu} 
proposed to integrate the wireless-powered transmission 
with 
a passive communication technology, called ambient backscattering. Ambient backscattering, first introduced in~\cite{V.2013Liu}, can transmit data 
by passively reflecting the existing RF signals in the air.
In particular, ambient backscatter is generated by a
load modulator. The load modulator tunes its load impedance to match and mismatch with that of the antenna to absorb and reflect the incident RF signals. Either an absorbing state or reflecting
state can be used to indicate a ``0" or ``1" bit to the backscatter receiver.
The incident RF signals can be from existing RF sources such as TV stations, cellular base stations, and Wi-Fi access points. There are three major advantages of ambient backscattering. Firstly, an ambient backscatter transmitter does not actively generate any RF signals and thus require much lower operational power compared with traditional wireless-powered communication based on active transmission\footnote{Ambient backscatter communication typically incurs a few micro-Watts 
power consumption rate at the transmitter side and can achieve a transmission rate ranging from 1 kbps to hundreds of kbps~\cite{V.2013Liu,N.2014Parks}.}. Secondly, an ambient backscatter transmitter utilizes ambient emitters as its carrier sources. 
Thus, it does not require a dedicated spectrum.
Thirdly, an ambient backscatter receiver performs data decoding by averaging the received signals plus modulated backscatter and detect the variations therein for demodulation. Thus, self-interference from the ambient emitters does not significantly degrade the performance of ambient backscatter communication. 
Therefore, ambient backscattering becomes an effective alternative transmission approach to 
improve the applicability and energy efficiency of wireless-powered devices~\cite{V.2018Huynh}.

Reference~\cite{T.2017Hoang} investigated a cognitive radio network where the wireless-powered secondary user is equipped with ambient backscattering capability. 
In this network, when the primary user is on transmission, the secondary user can either harvest energy or perform ambient backscattering. When the primary user is off, the secondary user can perform active transmission with the harvested energy. 
Optimal transmission policies are designed to maximize the throughput of the secondary network. 
Additionally, 
the work in~\cite{X.March2018Lu} investigated an integrated wireless-powered transmitter with ambient backscattering for device-to-device communication. 
It was shown that ambient backscattering 
is suitable to work as an alternative to wireless-powered transmission when the ambient energy is not sufficient and/or when the interference level is high.   
However, to the best of our knowledge, none of the existing studies has investigated cooperative relaying enabled with both wireless-powered transmission and ambient backscattering.



This paper investigates a dual-hop cooperative relay system with a hybrid relay that integrates wireless-powered communication and ambient backscattering for data forwarding. 
Specifically, the hybrid relay first harvests energy from the transmission of 
ambient emitters, then decodes and forwards the information from the source node to the destination node through either  
wireless-powered communication or ambient backscattering. The former and the latter are referred to as the wireless-powered relaying (WPR) mode and  ambient backscatter relaying (ABR) mode, respectively.
As the power for both circuit operation and transmission comes from the ambient RF signals, the performance of the hybrid relay largely depends on the environment factors, e.g., spatial density, distribution, and transmit power of surrounding transmitters and how the relay adapts to them.
To analyze and gain insight into operation of the hybrid relaying, 
we develop analytical models built upon stochastic geometry analysis. 
In our system model, the ambient emitters and interferers are spatially randomly located following general point processes, i.e., $\alpha$-Ginibre point processes (GPPs)~\cite{L.2015Decreusefond}.  
The processes are able to capture the spatial repulsion among the randomly located points by a repulsion factor $\alpha \in (0,1]$ in which the typical Poisson point process (PPP) cannot do\footnote{In practical communication systems, transmitters may not be located very close to each other because of network planning, e.g., for interference mitigation, and physical constraints, e.g., obstacles. 
Due to the tractability in characterizing the correlation among the spatial points, 
$\alpha$-GPP has recently being applied to model different networks with repulsion, e.g., cellular networks~\cite{X.2016JLu} and 
wireless-powered networks~\cite{I.May_2015Flint}.
These references have demonstrated that spatial repulsion can cause a non-trivial impact on network performance.}. 
As aforementioned, the performance of the hybrid relay heavily relies on its ambient environment. Therefore, the repulsion among the ambient transmitters is an important factor that cannot be ignored. 
In this paper, we adopt $\alpha$-GPP due to its flexibility to evaluate different degrees of repulsion. 

Based on the $\alpha$-GPP modeling and analytical framework, 
our main contributions in this paper are summarized as follows.

\begin{itemize}

\item  We propose  
two mode selection protocols that dynamically instruct the hybrid relay to 
switch between wireless-powered communication and ambient backscattering
for the cases with and without instantaneous information of the system. 
For the former case, 
the hybrid relay selects the mode based on predefined knowledge and current network conditions. For the latter case, 
the hybrid relay 
explores network conditions and uses the learned knowledge to infer future mode selection. 

\item We derive analytically and computationally tractable expressions 
to characterize the end-to-end success probability and ergodic capacity of the hybrid relaying system. 
We demonstrate analytically and numerically that, with the proposed mode selection protocols, wireless-powered communication and ambient backscattering can  complement each other to achieve improved performance for cooperative relaying.  
In particular, for the case with instantaneous channel state information (CSI), the hybrid relaying  strictly outperforms both the WPR and the ABR. For the case without instantaneous CSI, the hybrid relaying renders a better performance than the cooperative relaying that randomly selects between the WPR and the ABR. 

\item We demonstrate an example of our analytical framework in 
optimal resource allocation 
of the hybrid relaying system with an objective to maximize the energy efficiency (in bits/Joule), which is the normalized ergodic capacity averaged over the transmit power of the source node. The optimal solutions reveal the tradeoff between the maximum energy efficiency and the achieved success probability. 

\end{itemize}

  
The remainder of the paper is organized as follows. Section II introduces the system model and the modeling framework. 
Section III designs mode selection protocols of the hybrid relaying. Section IV and Section V perform the analysis of the hybrid relaying in terms of end-to-end success probabilities and ergodic capacity, respectively, in various settings. 
Section VI 
presents the numerical results and demonstrates an applications of our analytical framework in optimizing the system parameters with different objectives. Finally, Section VII concludes our work.


{\bf Notations:}  In this paper, 
$\mathbb{E}_{X}[\cdot]$ and $\mathbb{E}[\cdot]$ are used to denote the average over the random variable $X$ and all the random variables in $[\cdot]$, respectively. $\mathbb{P}[E]$ denotes the probability that an event $E$ occurs and $\bf{1}$$_{E}$ is an indicator function that equals 1 if event $E$ occurs and equals 0 otherwise. 
$\mathbf x_{a}$ represents the location of $a$ and $\bar{\mathbf x}$ denotes the conjugate of a complex scalar $\mathbf x$. $i$ is the imaginary unit, i.e., $i=\sqrt{-1}$. 
The operator $\|\cdot \|$ represents the Euclidean 2-norm. 
Besides, let ${n \choose i}=\frac{n(n-1)\cdots(n-i-1)}{i(i-1)\cdots 1}$ denote the binomial coefficient and  $[a,b]^{+}\triangleq\max(a,b)$.

\begin{figure}
\centering
\includegraphics[width=0.5\textwidth]{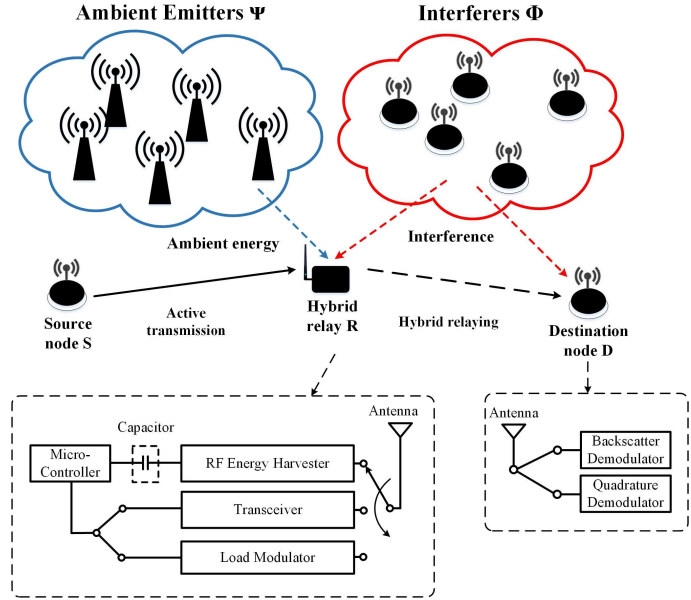} 
\caption{A dual-hop relaying system in an $\alpha$-GPP field of ambient emitters and interferers.} \label{fig:system_model}
\end{figure}


\section{System Model and Stochastic Geometry Characterization} 
\label{sec:network_model}

This section introduces the system model under consideration and then describes some 
preliminary results of the adopted stochastic geometry framework. 


\subsection{System and Relaying Protocol}
\label{subsection:system_model}

 
As shown in Fig.~\ref{fig:system_model}, we consider that a source node $\mathrm{S}$ needs to transmit to a destination node $\mathrm{D}$ which 
is located far away 
and cannot achieve a reliable direct communication under its transmit power budget~\cite{A.2014Aalo}. As such, a hybrid relay  $\mathrm{R}$ composed of an active transceiver and an ambient backscatter transmitter is adopted to forward the information from $\mathrm{S}$ to $\mathrm{D}$\footnote{An ambient backscatter transmitter can be combined with an active transceiver with low implementation complexity and cost~\cite{X.2017Lu}. Only an additional load modulator is needed to be integrated to the RF front-end. Other components such as an antenna, capacitor, and logic circuit can be shared from the active transceiver circuit. A recent prototype with both Bluetooth and backscattering functions,  namely {\em Briadio}~\cite{P.2016Hu}, has demonstrated the feasibility of such an integration. }. 
Accordingly, 
$\mathrm{D}$ is equipped with both a quadrature demodulator and backscatter demodulator to decode the  information from $\mathrm{R}$.  
All three nodes $\mathrm{S}$, $\mathrm{R}$, and $\mathrm{D}$ are equipped with a single antenna and work in a half-duplex fashion.  Similar to the system models in  \cite{A.2013Nasir,A.2015Nasir}, both $\mathrm{S}$ and $\mathrm{D}$ are assumed to have sufficient energy to supply their operations. Thus, the focus is on $\mathrm{R}$ which is equipped with an energy harvester and an on-board 
capacitor with capacity $E_{C}$
to store the harvested energy for the relaying operation. We assume the harvested
energy can only be consumed in the current time slot and there is no energy left for the subsequent time slot due to capacitor leakage~\cite{K.Dec2013Huang,I.2014Krikidis}.

As shown in the left block diagram in Fig.~\ref{fig:system_model}, 
a time switching-based receiver architecture~\cite{XLuSurvey} is adopted to enable $\mathrm{R}$ 
to work in either
energy harvesting, information decoding or transmission.
$\mathrm{R}$ performs energy harvesting/ambient backscattering and active transmission on two different frequency bands. For example, the hybrid relay can be designed to harvest energy and perform ambient backscattering by using the downlink signals of ambient base stations (BSs) 
while the hybrid relay can actively transmit data on 
the operating frequency of
the source node $\mathrm{S}$. Let $\Psi$ and $\Phi$ denote the point processes, i.e., the sets, of ambient emitters and the 
interferers, respectively. The ambient emitters are the surrounding transmitters working on the frequency band that the hybrid relay performs energy harvesting/ambient backscattering. The interferers are the surrounding transmitters working on the band that the hybrid relay performs active transmission. Let $\widetilde{\zeta}$ ($\zeta$ ) and $\widetilde{P}_{T}$ ($P_{T}$) denote the spatial density and transmit power of the transmitters in $\Psi$ ($\Phi$), respectively. We assume that the distributions of $\Psi$ and $\Phi$ follow independent homogeneous $\alpha$-GPPs~\cite{L.2015Decreusefond}, which are of a general class of repulsive point process. The mathematical details about $\alpha$-GPP can be found in~\cite{L.2015Decreusefond,L.August2015Decreusefond}.

Fig.~\ref{fig:relay_protocol} shows the time slots of the relaying protocol adopted by the considered hybrid relaying system to 
coordinate among energy harvesting and the dual-hop relaying. Let $T$ denote the duration of a time slot. At the beginning of each time slot, an $\omega$ ($0 \!<\! \omega \!<\! 1$) fraction of $T$ is allocated for $\mathrm{R}$ to harvest ambient energy. 
Then, the first half and second half of the remaining time (i.e., with duration $\frac{(1-\omega) T}{2}$) are allocated for the source-to-relay transmission and relay-to-destination transmission, respectively. During the relay-to-destination transmission, 
$\mathrm{R}$ 
operates in either the WPR mode or the ABR mode, denoted as $\mathrm{W}$ and   $\mathrm{A}$\footnote{$\mathrm{W}$ and $\mathrm{A}$ appear in the subscripts of variables to represent the operation mode.}, respectively, based on the system environment. The mode selection protocols are to be introduced in Sec.~\ref{sec:MP}. 
If the WPR mode is chosen, $\mathrm{R}$ adopts the decode-and-forward protocol for relaying the data from $\mathrm{S}$ to $\mathrm{D}$ which incurs a circuit power consumption of $E_{\mathrm{W}}  \! < \! E_{C}$ per time slot. Otherwise, $\mathrm{R}$ first decodes the active transmission from $\mathrm{S}$ and then forwards the decoded message to $\mathrm{D}$ through ambient backscattering which incurs a circuit power consumption of $E_{\mathrm{A}} \! \ll \! E_{\mathrm{W}}$ per time slot.
The hybrid relay is able to function in the WPR mode and ABR mode only if the harvested energy during the energy harvesting phase of each time slot $E_{\mathrm{R}}$ exceeds $E_{\mathrm{W}}$ and $E_{\mathrm{A}}$, 
respectively.

\begin{figure*}
\centering
\includegraphics[width=1\textwidth]{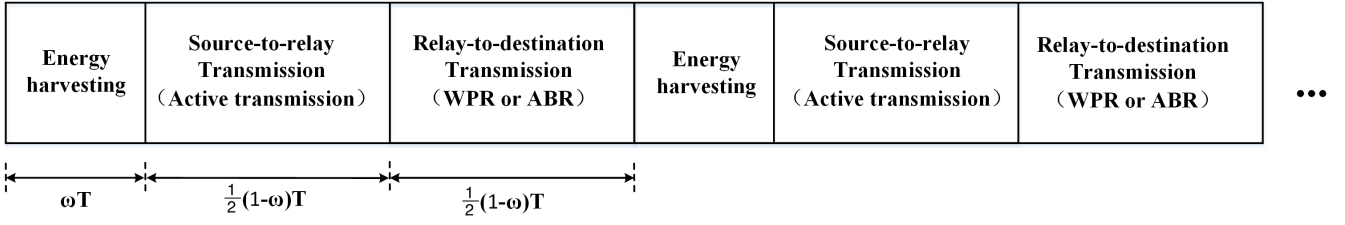} 
\caption{A time-slot based relaying protocol.} \label{fig:relay_protocol}
\end{figure*}

\subsection{Channel Model and SINR/SNR Formulation}
All the channels in the systems are considered to experience independent and identically distributed (i.i.d.) block Rayleigh fading. In particular, the fading channel gains follow exponential distributions. Besides, we consider i.i.d. zero-mean additive white Gaussian noise (AWGN) with variance $\widetilde{\sigma}^2$ and $\sigma^2$ on the transmit frequency of $\Psi$ and $\Phi$, respectively.

\subsubsection{Harvested Energy} 
The hybrid relay harvests energy from the ambient emitters in $\Psi$ for its operation. 
The power of the RF signals from $\Psi$ received by the hybrid relay can be computed as
\begin{equation} \label{eqn:Q_R}
	Q_{\mathrm{R}}= \widetilde{P}_{T} \sum_{\mathbf{x}_{k} \in \Psi} \widetilde{h}_{k,\mathrm{R}} \|\mathbf{x}_{k}-\mathbf{x}_{\mathrm{R}}\|^{-\widetilde{\mu}}, 
\end{equation}
where $\widetilde{h}_{a,b}\sim \exp(1)$ is the fading channel gain between nodes $a$ and $b$ on the transmit frequency of $\Psi$, and $\widetilde{\mu}$ represents 
the path-loss exponent for the signals from the ambient emitters in $\Psi$.

Then, the instantaneous energy harvesting rate (in Watt) at the hybrid relay can be expressed as $ \rho_{\mathrm{R}} = \beta  Q_{\mathrm{R}}$, where $\beta \in (0,1]$ denotes the RF-to-DC conversion efficiency. And the amount of energy harvested during the energy harvesting phase is obtained as $E_{\mathrm{R}} = \omega  T 
\rho_{\mathrm{R}}$. 

\subsubsection{Source-to-Relay Transmission} 

During the source-to-relay transmission phase, the receive SINR at the relay node $\mathrm{R}$ is given by 
\begin{equation} 
	\nu_{\mathrm{R}}= \frac{ P_{\mathrm{S}} h_{\mathrm{S},\mathrm{R}} d^{-\mu }_{\mathrm{S},\mathrm{R}} }{  I_{\mathrm{R}} + \sigma^2  }, 
\end{equation}
where $P_{\mathrm{S}}$ is the transmit power of the source node, $h_{a,b}\sim \exp(1)$ denotes the fading channel gain between $a$ and $b$ on the transmit frequency of $\Phi$, $d_{x,y}$ represents the distance between $x$ and $y$, $\mu$ is the path-loss exponent for signals from the interferers $\Phi$, $\mathrm{S}$ and $\mathrm{R}$, $I_{\mathrm{R}} = \sum_{j \in \Phi} P_{T} h_{j,\mathrm{R}} \|\mathbf{x}_{j}-\mathbf{x}_{\mathrm{R}}\|^{-\mu} $ is the aggregated interference received at $\mathrm{R}$.

\subsubsection{Relay-to-Destination Transmission} 
Let $\rho_{\mathrm{W}}= \frac{E_{\mathrm{W}}}{T}$ and $\rho_{\mathrm{A}}= \frac{E_{\mathrm{A}}}{T}$ denote the average circuit power consumption rate of $\mathrm{R}$ in 
the WPR mode and the ABR mode, respectively, and $\rho_{C} \triangleq \frac{E_{C}}{T}$ denotes the normalized energy capacity over a time slot duration.  
Moreover, let us define $\varrho_{\mathrm{W}} \triangleq \frac{ \rho_{\mathrm{W}}  }{\omega \beta }$, $\varrho_{\mathrm{A}} \triangleq \frac{ \rho_{\mathrm{A}}  }{\omega \beta }$ and $\varrho_{C} \triangleq \frac{\rho_{C} }{\omega \beta }$.
If the WPR mode is chosen, 
$\mathrm{R}$ uses all the available energy for active transmission during the relay-to-destination transmission phase. Then, the transmit power can be calculated as: 

\begin{equation} 
\label{eqn:AR_R_dual}
 	P^{\mathrm{W}}_{\mathrm{R}} \!=\!
	\begin{cases}
  0 & \text{if} \quad \!\!\! Q_{\mathrm{R}} \! \leq \! \varrho_{\mathrm{W}}, \\
  \frac{ T \omega \beta Q_{\mathrm{R}}  - E_{\mathrm{W}} }{ \frac{T(1-\omega)}{2} } = \frac{2(\omega \beta Q_{\mathrm{R}} - \rho_{\mathrm{W}})}{ 1-\omega} & \text{if} \quad \!\!\! 
  \varrho_{\mathrm{W}} \! < \! Q_{\mathrm{R}} \! \leq \! 
  \varrho_{\mathrm{W}} + \varrho_{C}
  , \\
	\frac{E_{C}}{\frac{T(1-\omega)}{2}} = \frac{2\rho_{C}}{1-\omega}  & \text{ otherwise},  
	\end{cases}
\end{equation}

Correspondingly, the receive SINR at $\mathrm{D}$ is 
\begin{equation}
\nu^{\mathrm{W}}_{\mathrm{D}}= \frac{ P^{\mathrm{W}}_{\mathrm{R}} h_{\mathrm{R},\mathrm{D}} d^{-\mu}_{\mathrm{R},\mathrm{D}} }{ \sum_{j \in \Phi} P_{T} h_{j,\mathrm{D}} \|\mathbf{x}_{\mathrm{R}}-\mathbf{x}_{\mathrm{D}}\|^{-\mu} \! + \! \sigma^2  }.  
\end{equation}

Then, the end-to-end capacity achieved can be computed as 
 \begin{equation} 
 	C_{\mathrm{W}} \!=\!
	\begin{cases}
  \frac{(1-\omega)}{2}W \log_{2} \big ( 1 + \nu \big) & \text{if} \quad \!\!\! \nu \!\geq\! \tau_{\mathrm{W}}, \\
  0 & \text{ otherwise},  
	\end{cases}
\end{equation} 
where $W$ denotes the transmission bandwidth of $\mathrm{R}$ in the WPR mode, $\nu = \min( \nu_{\mathrm{R}}, \nu^{\mathrm{W}}_{\mathrm{D}})$, and $\tau_{\mathrm{W}}$ is the minimum SINR threshold to decode data from active transmission at $\mathrm{R}$ and $\mathrm{D}$.

Otherwise, if $\mathrm{R}$ decides to forward data through ambient backscattering, the transmit power of modulated backscatter can be calculated as~\cite{C.2012Boyer} 
\begin{equation}  \label{eqn:MB_R}
	P^{\mathrm{A}}_{\mathrm{R}} = \eta \xi Q_{\mathrm{R}}, 
\end{equation}
where $\eta$ is the fraction of the incoming RF signals reflected during backscattering and $\xi \in (0,1]$ is the backscattering efficiency of the transmit antenna \cite{S.2020Gurucharya}. $\xi$ represents the portion of the reflected signals that are effectively used to carry the modulated data. The values of $\eta$ and $\xi$ are dependent on the tag-encoding scheme~\cite{C.2014Boyer} and tag antenna aperture~\cite{Y.2009Kim}, respectively. 
 
The receive signal-to-noise ratio (SNR) from ambient backscatter at $\mathrm{D}$ is 
\begin{align}
	\nu^{\mathrm{A}}_{\mathrm{D}} = \frac{ P^{\mathrm{A}}_{ \mathrm{R} } \widetilde{h}_{\mathrm{R},\mathrm{D}} d^{-\mu}_{\mathrm{R},\mathrm{D}} }{\widetilde{\sigma}^2}. 
\end{align}
 
Due to the adopted simple amplitude demodulation based on envelope detection~\cite{V.2013Liu}, an ambient backscatter receiver typically requires much higher receive SNR to achieve a low bit error rate compared with a quadrature demodulator. For example, a bit error rate around $10^{-3}$ can be obtained with $10$ dB receive SNR~\cite{J.Nov.2012Kimionis}.    
Once the receive SNR at $\mathrm{D}$ is above the threshold $\tau_{\mathrm{A}}$, a pre-designed channel capacity $C_{\mathrm{A}}$ can be attained over the relay-to-destination link. The value of $C_{\mathrm{A}}$ is dependent on the encoding scheme adopted and the setting of the resistor-capacitor of the ambient backscattering circuit~\cite{V.2013Liu,T.2017Hoang}. As ambient backscattering usually adopts very simple modulation schemes, such as amplitude shift keying and phase shift keying, it is reasonable to assume that $C_{\mathrm{A}} < C_{\mathrm{W}}$.

Table~\ref{notation} summaries the main notations used in this paper.

\begin{table*}
\centering
\caption{\footnotesize NOTATIONs.} \label{notation} 
\begin{tabular}{|l|l|}
\hline
Symbol & Definition\\ \hline
\hline
 $\Psi$, $\Phi$ & The point processes representing the ambient emitters and interferers, respectively \\
 $\widetilde{\alpha}$, $\alpha$ & Repulsion factors for $\Psi$ and $\Phi$, respectively \\
 $\widetilde{P}_{T}$, $P_{T}$ & The transmit power of the transmitters in $\Psi$ and $\Phi$, respectively \\ 
 $P_{\mathrm{S}}$ & The transmit power of the source node $\mathrm{S}$ \\
$\widetilde{\sigma}^{2}$, $\sigma^{2}$ & The variance of AWGN in the transmit frequency of $\Psi$ and $\Phi$, respectively \\
$E_{\mathrm{R}}$ & The amount of harvested energy at the hybrid relay $\mathrm{R}$    \\
$E_{C}$ & The capacitor capacity of $\mathrm{R}$    \\
$\xi$ & Backscattering efficiency \\
$\nu^{\mathrm{W}}_{\mathrm{R}}$, $\nu^{\mathrm{A}}_{\mathrm{R}}$ & The receive SINR and SNR at $\mathrm{R}$ from active transmission and ambient backscatter, respectively \\
$\tau_{\mathrm{W}}$, $\tau_{\mathrm{A}}$ & The SINR threshold to decode from active transmission and ambient backscatter, respectively \\
\hline
\end{tabular}
\end{table*}

\subsection{Preliminaries}
This subsection describes some 
primarily results 
of $\alpha$-GPP modeling which are applied later in the analysis of this paper.
We consider that the hybrid relay $\mathrm{R}$ locates at the origin of the Euclidean space surrounded by the $\alpha$-GPP distributed ambient emitters $\Psi$ with transmit power $\widetilde{P}_{T}$.  
In the Rayleigh fading environment, the distribution of the received signal power from $\Psi$ at $\mathrm{R}$, i.e., $Q_{\mathrm{R}}$ in (\ref{eqn:Q_R}), are presented in the following proposition.
 
\begin{prop} {\em \cite[Theorem 1]{X.March2018Lu} }
The probability density function (PDF) and cumulative distribution function (CDF) of $Q_{\mathrm{R}}$ are given, respectively, as: 
\begin{align}   \label{pdf} 
f_{Q_{\mathrm{R}}}(q) =  \mathcal{L}^{-1}\Big\{ \mathrm{Det}\big(\mathrm{Id}+\widetilde{\alpha}  
\mathbb{G}_{\Psi}(\mathbf{x},\mathbf{y}) 
\varpi_\mathbf{x}(s)  \varpi_\mathbf{y}(s) 
\big)^{\!-\frac{1}{\widetilde{\alpha}}}\Big\}(q),
\end{align} 
\begin{align} \label{CDF} 
\text{and} \quad & F_{Q_{\mathrm{R}}}(q) 
 = \mathcal{L}^{-1} \!\left \{ \! \frac{1}{s} \mathrm{Det}\big(\mathrm{Id}+\widetilde{\alpha} 
 \mathbb{G}_{\Psi}(\mathbf{x},\mathbf{y}) 
 \varpi_\mathbf{x}(s)  \varpi_\mathbf{y}(s) 
 \big)^{-\frac{1}{\widetilde{\alpha}} } \!\right \} \! (q),
 \end{align} 
where $\mathcal{L}^{-1}\{\cdot\}(x)$ represents the inverse Laplace transform which can be evaluated by the Mellin's inverse formula~\cite{P.1995Flajolet}, 
$ \varpi_\mathbf{z}(s) \! \triangleq \! \sqrt{\!1\!-\!\big(   1\!+\! s \widetilde{P}_{T} \|\mathbf{z} \|^{-\widetilde{\mu}} \big)^{\!-1} } $, and $\mathbb{G}_{\Psi}(\mathbf{x},\mathbf{y})
\triangleq  \widetilde{\zeta} \exp \! \left( \!  - \frac{\pi\widetilde{\zeta}}{2} (\|\mathbf x\|^2 \! \! + \! \! \|\mathbf y\|^2 \! \! - \! 2 \mathbf x \bar{\mathbf y} ) \! \right) $ is the Ginibre kernel of $\Psi$, which represents the correlation force among different spatial points in $\Psi$. 
\end{prop}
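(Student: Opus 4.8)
The plan is to obtain the PDF and CDF of $Q_{\mathrm{R}}$ by first computing its Laplace transform $\mathcal{L}_{Q_{\mathrm{R}}}(s)=\mathbb{E}[e^{-sQ_{\mathrm{R}}}]$ in closed form and then inverting. Placing $\mathbf{x}_{\mathrm{R}}$ at the origin as in the preliminaries, I would start from
\[
\mathcal{L}_{Q_{\mathrm{R}}}(s)=\mathbb{E}_{\Psi,\{\widetilde{h}_{k,\mathrm{R}}\}}\!\left[\exp\!\Big(-s\widetilde{P}_{T}\sum_{\mathbf{x}_{k}\in\Psi}\widetilde{h}_{k,\mathrm{R}}\,\|\mathbf{x}_{k}\|^{-\widetilde{\mu}}\Big)\right].
\]
The crucial first move is to discharge the fading. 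Since the gains $\widetilde{h}_{k,\mathrm{R}}\sim\exp(1)$ are i.i.d.\ and independent of $\Psi$, I would condition on $\Psi$ and evaluate the expectation factor by factor using $\mathbb{E}_{\widetilde{h}}[e^{-a\widetilde{h}}]=(1+a)^{-1}$, turning the exponential of a sum into a product over the points:
\[
\mathcal{L}_{Q_{\mathrm{R}}}(s)=\mathbb{E}_{\Psi}\!\left[\prod_{\mathbf{x}_{k}\in\Psi}\big(1+s\widetilde{P}_{T}\|\mathbf{x}_{k}\|^{-\widetilde{\mu}}\big)^{-1}\right].
\]

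The second step is to recognize this surviving product as the probability generating functional of the $\alpha$-GPP. Writing each factor as $e^{-\varphi(\mathbf{x}_{k})}$ with $\varphi(\mathbf{x})=\ln\!\big(1+s\widetilde{P}_{T}\|\mathbf{x}\|^{-\widetilde{\mu}}\big)\ge 0$, the product equals $\exp\!\big(-\sum_{k}\varphi(\mathbf{x}_{k})\big)$, so the $\alpha$-GPP Laplace-functional identity of \cite{L.August2015Decreusefond} applies with repulsion factor $\widetilde{\alpha}$ and Ginibre kernel $\mathbb{G}_{\Psi}$. That identity returns $\mathrm{Det}(\mathrm{Id}+\widetilde{\alpha}\,\mathbb{K})^{-1/\widetilde{\alpha}}$, where the (symmetrized) kernel carries a factor $\sqrt{1-e^{-\varphi(\cdot)}}$ at each argument. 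Substituting $1-e^{-\varphi(\mathbf{x})}=1-(1+s\widetilde{P}_{T}\|\mathbf{x}\|^{-\widetilde{\mu}})^{-1}$ identifies this factor with $\varpi_{\mathbf{x}}(s)$ exactly, so that $\mathbb{K}(\mathbf{x},\mathbf{y})=\mathbb{G}_{\Psi}(\mathbf{x},\mathbf{y})\varpi_{\mathbf{x}}(s)\varpi_{\mathbf{y}}(s)$ and
\[
\mathcal{L}_{Q_{\mathrm{R}}}(s)=\mathrm{Det}\big(\mathrm{Id}+\widetilde{\alpha}\,\mathbb{G}_{\Psi}(\mathbf{x},\mathbf{y})\,\varpi_{\mathbf{x}}(s)\,\varpi_{\mathbf{y}}(s)\big)^{-1/\widetilde{\alpha}}.
\]

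Finally I would invert. Since the displayed determinant is precisely the Laplace transform of the density, applying the Mellin inversion formula \cite{P.1995Flajolet} yields $f_{Q_{\mathrm{R}}}(q)$ as stated in \eqref{pdf}. For the CDF, I would use that the Laplace transform of $F_{Q_{\mathrm{R}}}(q)=\int_{0}^{q}f_{Q_{\mathrm{R}}}(t)\,dt$ is $\tfrac{1}{s}\mathcal{L}_{Q_{\mathrm{R}}}(s)$, so that \eqref{CDF} follows immediately by taking $\mathcal{L}^{-1}\{\cdot\}$. The main obstacle is analytic rather than algebraic: one must ensure that the operator with kernel $\mathbb{G}_{\Psi}\varpi_{\mathbf{x}}\varpi_{\mathbf{y}}$ is trace-class so that the Fredholm determinant and the generating-functional identity are well defined. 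This is where restricting to the bounded observation window $\mathbb{O}_{\mathbf{x}}$ of radius $R$ (and the rapid decay of $\varphi$ through $\|\mathbf{x}\|^{-\widetilde{\mu}}$) is used, together with the nonnegativity of $\varphi$ that legitimizes the generating-functional form; I would also flag that the passage from the one-sided kernel $\mathbb{G}_{\Psi}(\mathbf{x},\mathbf{y})\varpi_{\mathbf{y}}(s)$ to the symmetric form $\mathbb{G}_{\Psi}(\mathbf{x},\mathbf{y})\varpi_{\mathbf{x}}(s)\varpi_{\mathbf{y}}(s)$ is exactly the symmetrization produced by the generating functional, and the inverse transform is understood in the usual formal (Bromwich/Mellin) sense.
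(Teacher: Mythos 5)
Your proposal is correct and follows essentially the same route as the paper (and the cited source): condition on $\Psi$, average out the i.i.d.\ exponential fading via $\mathbb{E}[e^{-a\widetilde{h}}]=(1+a)^{-1}$ to obtain the probability generating functional of the $\alpha$-GPP, apply the Fredholm-determinant identity of \cite{L.August2015Decreusefond} with the symmetrized kernel $\mathbb{G}_{\Psi}\varpi_{\mathbf{x}}\varpi_{\mathbf{y}}$, and invert via Mellin's formula, with the CDF obtained from the $\tfrac{1}{s}$ factor. Your additional remarks on trace-class well-posedness over the bounded observation window are a sensible flag but do not change the argument.
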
 


\section{Operational Model Selection Protocols}
\label{sec:MP}
 
The advantage of the hybrid relay lies in the fact that it can select the more suitable operational mode to achieve better performance  
under different network conditions. However, as the hybrid relay is a wireless-powered device, any mode selection protocols that have heavy computational complexity, e.g., based on online optimizations \cite{G.May2019Li,G.2019Li}, are not applicable.   
For implementation practicality, low complexity mode selection based on obtainable information and limited communication overhead needs to be devised.
For the mode selection of the hybrid relay, we consider the situations with and without instantaneous CSI. For the former and the latter situations, we propose two protocols, namely, {\em energy and SINR-aware Protocol (ESAP) } and {\em explore-then-commit protocol (ETCP)}, respectively, for the hybrid relay adapting to the network environment.
The operational procedures of the mode selection protocols are specified as follows.

\begin{itemize}
 
\item ESAP: 
The idea of the ESAP is using ABR to assist data forwarding when the WPR is not feasible (i.e., either when $\mathrm{{R}}$ does not harvest sufficient energy or the forwarded data is not successfully decoded by $\mathrm{D}$) based on the current system conditions. 
The mode selection of ESAP is done at the end of each energy harvesting phase. In particular,
$\mathrm{S}$ first sends preamble signals to $\mathrm{R}$, and $\mathrm{R}$ detects its receive SINR $\nu_{\mathrm{R}}$ and the amount of harvested energy  $E_{\mathrm{R}}$ in its capacitor. If $\nu_{\mathrm{R}} > \tau_{\mathrm{R}}$ and 
$E_{\mathrm{R}} > E_{\mathrm{W}}$,  
$\mathrm{R}$ transmits preambles to $\mathrm{D}$ through WPR with transmit power $P^{\mathrm{W}}_{\mathrm{R}}=\big[\frac{2( 
E_{\mathrm{R}} 
- E_{\mathrm{W}} )}{(1-\omega)T}, \frac{2E_{C}}{(1-\omega)T}\big]^{+}$. Otherwise, $\mathrm{R}$ chooses the ABR mode. Upon receiving the preamble signals from $\mathrm{R}$, $\mathrm{D}$ provides a feedback of its receive SINR to $\mathrm{R}$ through signaling. If the receive SINR at $\mathrm{D}$ is greater than $\tau_{\mathrm{W}}$, $\mathrm{R}$ chooses the WPR mode for relaying. Otherwise, $\mathrm{R}$ chooses the ABR mode for relaying. 
For ESAP, we consider the ideal case that the mode selection at the end of each energy harvesting phase causes negligible time and energy consumption. 

\item ETCP: The idea of the ETCP is to select the averagely better-performed mode in terms of success probability based on the history information. In particular, ECTP begins with an exploration period that occupies the first initial 2$n$ ($n \in \mathbb{N}+$) to learn the network conditions before committing to a certain operation mode for steady-state transmission based on the learned knowledge.  
Specifically, in the exploration period, $\mathrm{R}$ works in each operational mode for $n$ time slots in an arbitrary sequence. Afterward, $\mathrm{D}$ feeds back the numbers of successful transmissions in the WPR mode and ABR mode, denoted as $N_{\mathrm{WPR}}$ and $N_{\mathrm{ABR}}$, respectively, to $\mathrm{S}$. Since then, $\mathrm{R}$ always selects the WPR mode, if $N_{\mathrm{WPR}} > N_{\mathrm{ABR}}$, and the ABR mode, if $N_{\mathrm{ABR}} > N_{\mathrm{WPR}}$, and uniformly selects between the ABR and WPR modes at random, otherwise. 
 
\end{itemize}


 
Note that 
both of the proposed mode selection protocols are 
adopted only at the hybrid relay. 
ESAP operates based on the information including
the physical parameters of $\mathrm{R}$, i.e., $E_{\mathrm{W}}$ and $E_{C}$, 
and network environment-dependent parameters, i.e., $E_{\mathrm{R}}$, 
and 
$\nu_{\mathrm{D}}$.  
The physical parameters can be known by the relay as predefined knowledge, while the network-dependent parameters can be obtained through detection at the beginning of each time slot. By contrast,
ETCP chooses the mode based on the history information without knowing the current channel condition. 
It can be seen that both mode selection protocols are practical as they are based on information attainable at the hybrid relay.

It is worth mentioning that ESAP incurs much higher overhead than ETCP. In particular, ESAP requires  the hybrid relay to know the expression of $P^{\mathrm{W}}_{\mathrm{R}}$ in (\ref{eqn:AR_R_dual}) and values of $\rho_{\mathrm{W}}$ and $\rho_{C}$ and as a priori. Additionally, the hybrid relay with ESAP needs a feedback from the destination node and calculation of $P^{\mathrm{W}}_{\mathrm{R}}$  according to (\ref{eqn:AR_R_dual}) every time slot. By contrast, the hybrid relay with ETCP only needs one feedback from the destination node and one comparison operation between the values of $N_{\mathrm{WPR}} $ and $ N_{\mathrm{ABR}}$ after the first 2$n$ initial time slots.


\section{Analysis of Success Probability}
\label{sec:SP}

In this section, we analyze the performance of the hybrid relaying system in presence of randomly located ambient emitters and interferers.   
For this,
\begin{itemize}

\item we first derive the  mode selection probabilities of the hybrid relay with the proposed protocols, 

\item we characterize the interference distribution under the $\alpha$-GPP modeling framework and derive the success probabilities of the hybrid relaying with both ESAP and ETCP, 

\item and we also simplify the success probabilities in the special cases when one of the operational modes of the hybrid relay is disabled and when the distribution of ambient transmitters follow PPPs.

\end{itemize}

The transmission of the dual-hop hybrid relaying system is considered to be successful if 1) the relay can harvest sufficient energy for its circuit operation and for decoding information transmitted by the source and 2) the destination can decode the information forwarded by the relay 
either through active transmission or ambient backscattering. Let $\mathrm{M} \in \{\mathrm{W}, \mathrm{A}\}$ denote the operational mode indicator of the hybrid relay $\mathrm{R}$.
Mathematically, the success probability of the hybrid relaying 
is expressed as
\begin{align} \label{definition_SP}  
 & \hspace{-5mm} \mathcal{S}_{\textup{HR}}   =  
 \mathbb{P} \Big[\nu_{\mathrm{R}} \! >\! \tau_{\mathrm{W}}, \nu^{\mathrm{W}}_{\mathrm{D}} \! > \! \tau_{\mathrm{W}}, E_{\mathrm{R}} > E_{\mathrm{W}}, \mathrm{M}=\mathrm{W}\Big]  +  \mathbb P \Big[ \nu_{\mathrm{R}} \!>\! \tau_{\mathrm{W}}, 
 \nu^{\mathrm{A}}_{\mathrm{D}} \!> \! \tau_{\mathrm{A}},  E_{\mathrm{R}} > E_{\mathrm{A}}, \mathrm{M} \!=\! \mathrm{A} \Big]  \nonumber \\
 & \hspace{-5mm} \overset{\text{(a)}}{=} \mathbb{P}\Big[\nu_{\mathrm{R}} \! > \! \tau_{\mathrm{W}}, \nu^{\mathrm{W}}_{\mathrm{D}} \! >\! \tau_{\mathrm{W}},  E_{\mathrm{R}} \! > \! E_{\mathrm{W}}| \mathrm{M}\!=\!\mathrm{W}\Big]  \mathbb{P} \Big[\mathrm{M}\!=\!\mathrm{W}\Big] \!+\!  \mathbb P \Big[ \nu_{\mathrm{R}} \! >\! \tau_{\mathrm{W}}, 
  \nu^{\mathrm{A}}_{\mathrm{D}} \! > \! \tau_{\mathrm{A}},  E_{\mathrm{R}} \! > \! E_{\mathrm{A}} | \mathrm{M} \!=\! \mathrm{A} \Big]   \mathbb{P} \Big[\mathrm{M} \!=\! \mathrm{A}\Big], \!\!\!\!\!
\end{align}
where (a) follows the Bayes' theorem  \cite[page 36]{bayesian}.



 
 
\subsection{General-Case Result for ESAP}

We first investigate the hybrid relaying with ESAP. Note that with block Rayleigh fading channels, the 
source-to-relay transmission and relay-to-destination transmission are affected by the same set of interferers with static locations. In other words, the relay and destination nodes experience spatially and temporally correlated interference. 
In this scenario, we characterize the success probability of hybrid relaying defined in (\ref{definition_SP}) in the following theorem.

\begin{theorem}  \label{thm:QSI_SP_HR}
The success probability of the hybrid relaying with ESAP is 
\begin{align} \label{eqn:QI_HR}
& \hspace{0mm} \mathcal{S}^{\textup{ESAP}}_{\textup{HR}} \! = \!  \exp \! \Big(\! - \!  \kappa(\tau_{\mathrm{W}}) \sigma^2  \! \Big) \! \Bigg( \! \chi_{\mathbb{A}} (\tau_{\mathrm{W}},\rho_C)   \bigg(\! 1 \! - \! 
F_{Q_{\mathrm{R}}} \big(\varrho_{\mathrm{W}}\!+\!\varrho_{C} \big)
- \! \int^{\infty}_{ \! \varrho_{\mathrm{W}}\!+\!\varrho_{C} } \! \delta(q)  
 f_{Q_{\mathrm{R}}}(q) 
 \mathrm{d}q  \! \bigg)  + \! \int^{ \varrho_{\mathrm{W}}\!+\!\varrho_{C} }_{\!  \varrho_{\mathrm{W}} } \!\!\! \big (  1 \! - \! 
\delta(q)  \big) 
\nonumber \\
& \hspace{3mm} \! \!   
\times \chi_{\mathbb{A}} (\tau_{\mathrm{W}}, \omega \beta q \! - \! \rho_{\mathrm{W}} )  
f_{Q_{\mathrm{R}}}(q) \mathrm{d}q   +   
\mathrm{Det} \Big( \mathrm{Id} \! + \! \alpha \mathbb{G}_{\Phi}(\mathbf{x},\mathbf{y}) \psi_{\mathbf{x}}\big(\kappa(\tau_{\mathrm{W}})\big)   \psi_{\mathbf{y}}\big(\kappa(\tau_{\mathrm{W}}) \big) \! \Big)^{\!\!-\!\frac{1}{\alpha}}
\!\!\!\! \int^{ \infty }_{\! \varrho_{\mathrm{A}} }  \!\!\!  \delta(q)  
f_{Q_{\mathrm{R}}}(q) \mathrm{d} q \! \Bigg), \hspace{-4mm}
\end{align}  
where  $\kappa(v)\!\triangleq\!\frac{d^{\mu}_{\mathrm{S},\mathrm{R}} v }{P_{\mathrm{S}}}$, $\delta(q) \! \triangleq \! \exp \!  \Big( \!\! -  \! \frac{ d^{\mu}_{\mathrm{R},\mathrm{D}}   \widetilde{\sigma}^2 \tau_{\mathrm{A}}}{ \eta \xi q } \!  \Big)$, $\mathbb{G}_{\Phi}(\mathbf{x},\mathbf{y}) \! \triangleq \! \widetilde{\zeta} \exp \! \Big( \! \! - \! \frac{\pi\widetilde{\zeta}}{2} (\|\mathbf x\|^2 \! \! + \! \|\mathbf y\|^2 \! \! - \! 2 \mathbf x \bar{\mathbf y} ) \!  \Big)$, $ \psi_{\mathbf{z}}(s) \! \triangleq \! \sqrt{\!1\!-\!\Big( \! 1\!+\!  s P_{T} \|\mathbf{z} \|^{-\mu} \! \Big)^{\!\! -1}} $, $f_{Q_{\mathrm{R}}}(q)$ and $F_{Q_{\mathrm{R}}}(x)$ are given in (\ref{pdf}) and (\ref{CDF}), respectively, $\chi_{\mathbb{A}}$ 
is given as:  
\begin{align}  \label{chi1}
\chi_{\mathbb{A}}(v,p) =
\exp \! \Big( \! - \ell(v,p) \sigma^2  \Big) 
\mathrm{Det} \Big(\mathrm{Id} \! + \! \alpha  \mathbb{G}_{\Phi}(\mathbf{x},\mathbf{y})\varphi_\mathbf{x}\big(\ell(v,p)\big)\varphi_\mathbf{y}\big(\ell(v,p)\big) \! \Big)^{\!- \frac{1}{ \alpha}}, 
\end{align}
therein $\ell(v,p) \! \triangleq \! \frac{  d^{\mu}_{\mathrm{R},\mathrm{D}} v (1-\omega) }{ 2 p } $,   and  $\varphi_{\mathbf{z}}(s)\! \triangleq \!\! \sqrt{\!1 \! - \!\Big( \! 1 \! + \! 
 \kappa(\tau_{\mathrm{A}}) P_{T} \|\mathbf{z}  \|^{-\mu}   \!  \Big)^{\!\!\!-1}   \! \Big(\! 1 \! + \! s P_{T}\|\mathbf{z} \! - \! \mathbf{x}_{\mathrm{D}} \|^{-\mu} \!  \Big)^{\!\!\!-1}  } $.   

\end{theorem}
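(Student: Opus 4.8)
The plan is to start from the two-term decomposition in (\ref{definition_SP}) and turn the ESAP rules into explicit events. Under ESAP the relay commits to the WPR mode exactly when it can decode the source (\(\nu_{\mathrm{R}} > \tau_{\mathrm{W}}\)), has harvested enough energy (\(E_{\mathrm{R}} > E_{\mathrm{W}}\)), and has the destination decode the active forward link (\(\nu^{\mathrm{W}}_{\mathrm{D}} > \tau_{\mathrm{W}}\)); otherwise it falls back to ABR. Thus \(\{\mathrm{M}=\mathrm{W}\}\) coincides with \(\{\nu_{\mathrm{R}} > \tau_{\mathrm{W}}, E_{\mathrm{R}} > E_{\mathrm{W}}, \nu^{\mathrm{W}}_{\mathrm{D}} > \tau_{\mathrm{W}}\}\), so the first term of (\ref{definition_SP}) is just the probability of that joint event, while the ABR term requires \(\nu_{\mathrm{R}} > \tau_{\mathrm{W}}\), \(E_{\mathrm{R}} > E_{\mathrm{A}}\), \(\nu^{\mathrm{A}}_{\mathrm{D}} > \tau_{\mathrm{A}}\), together with the complementary failure of the WPR commitment.

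Since \(P^{\mathrm{W}}_{\mathrm{R}}\) is a deterministic function of the harvested power \(Q_{\mathrm{R}}\) through (\ref{eqn:AR_R_dual}) and \(E_{\mathrm{R}} > E_{\mathrm{W}} \Leftrightarrow Q_{\mathrm{R}} > \varrho_{\mathrm{W}}\), I would condition on \(Q_{\mathrm{R}} = q\) and split \((0,\infty)\) at \(\varrho_{\mathrm{A}}, \varrho_{\mathrm{W}}, \varrho_{\mathrm{W}}+\varrho_{C}\), matching the three branches of (\ref{eqn:AR_R_dual}) (using \(E_{\mathrm{A}} \ll E_{\mathrm{W}}\), i.e. \(\varrho_{\mathrm{A}} < \varrho_{\mathrm{W}}\)). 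The crucial independence observation is that the backscatter SNR \(\nu^{\mathrm{A}}_{\mathrm{D}}\) depends only on \(\widetilde{h}_{\mathrm{R},\mathrm{D}}\) and \(Q_{\mathrm{R}}\) and carries no interference, whereas \(\nu_{\mathrm{R}}\) and \(\nu^{\mathrm{W}}_{\mathrm{D}}\) depend on \(\Phi\) and the \(\Phi\)-band channels, which are independent of \(\Psi\). Conditioned on \(Q_{\mathrm{R}} = q\), exponential fading of \(\widetilde{h}_{\mathrm{R},\mathrm{D}}\) yields the clean multiplicative factor \(\mathbb{P}[\nu^{\mathrm{A}}_{\mathrm{D}} > \tau_{\mathrm{A}} \mid Q_{\mathrm{R}} = q] = \delta(q)\), and the conditioning leaves the \(\Phi\)-dependent statistics untouched except through \(P^{\mathrm{W}}_{\mathrm{R}}(q)\).

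Next I would compute the two \(\Phi\)-averaged quantities. For the marginal, \(\nu_{\mathrm{R}} > \tau_{\mathrm{W}}\) holds with conditional probability \(\exp(-\kappa(\tau_{\mathrm{W}})(I_{\mathrm{R}}+\sigma^2))\); averaging \(I_{\mathrm{R}}\) over the \(\alpha\)-GPP via the Laplace functional of \cite[Theorem 1]{L.August2015Decreusefond} gives \(\mathbb{P}[\nu_{\mathrm{R}} > \tau_{\mathrm{W}}] = \exp(-\kappa(\tau_{\mathrm{W}})\sigma^2)\,\mathrm{Det}(\mathrm{Id}+\alpha\mathbb{G}_{\Phi}\psi_{\mathbf{x}}(\kappa(\tau_{\mathrm{W}}))\psi_{\mathbf{y}}(\kappa(\tau_{\mathrm{W}})))^{-1/\alpha}\). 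For the joint \(\mathbb{P}[\nu_{\mathrm{R}} > \tau_{\mathrm{W}}, \nu^{\mathrm{W}}_{\mathrm{D}} > \tau_{\mathrm{W}}]\), I would exploit that \(h_{\mathrm{S},\mathrm{R}}\) and \(h_{\mathrm{R},\mathrm{D}}\) are independent, so that for a fixed interferer at \(\mathbf{x}_j\) the average over its two independent fading gains to \(\mathrm{R}\) and to \(\mathrm{D}\) factorizes into \((1+\kappa(\tau_{\mathrm{W}})P_{T}\|\mathbf{x}_j\|^{-\mu})^{-1}(1+\ell(\tau_{\mathrm{W}},p)P_{T}\|\mathbf{x}_j-\mathbf{x}_{\mathrm{D}}\|^{-\mu})^{-1}\), where \(\ell(\tau_{\mathrm{W}},p)=\tau_{\mathrm{W}}d^{\mu}_{\mathrm{R},\mathrm{D}}/P^{\mathrm{W}}_{\mathrm{R}}\) with \(p = \omega\beta q - \rho_{\mathrm{W}}\) in the linear branch and \(p = \rho_{C}\) in the saturated branch. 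Feeding this product per-point function into the determinantal formula produces exactly the kernel \(\varphi_{\mathbf{z}}\), and after restoring the two noise factors one obtains \(\exp(-\kappa(\tau_{\mathrm{W}})\sigma^2)\,\chi_{\mathbb{A}}(\tau_{\mathrm{W}}, p)\), the \(\exp(-\ell\sigma^2)\) being absorbed into \(\chi_{\mathbb{A}}\) and \(\exp(-\kappa\sigma^2)\) factored to the front.

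Finally I would assemble regime by regime: WPR contributes the joint probability in the linear (\(\varrho_{\mathrm{W}} < q \le \varrho_{\mathrm{W}}+\varrho_{C}\)) and saturated (\(q > \varrho_{\mathrm{W}}+\varrho_{C}\)) branches; ABR contributes \(\delta(q)\,\mathbb{P}[\nu_{\mathrm{R}} > \tau_{\mathrm{W}}]\) where WPR is energetically infeasible (\(\varrho_{\mathrm{A}} < q \le \varrho_{\mathrm{W}}\)) and \(\delta(q)\,\mathbb{P}[\nu_{\mathrm{R}} > \tau_{\mathrm{W}}, \nu^{\mathrm{W}}_{\mathrm{D}} \le \tau_{\mathrm{W}}]\) as the fallback where WPR is feasible but fails at \(\mathrm{D}\). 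Writing \(\mathbb{P}[\nu_{\mathrm{R}} > \tau_{\mathrm{W}}, \nu^{\mathrm{W}}_{\mathrm{D}} \le \tau_{\mathrm{W}}] = \mathbb{P}[\nu_{\mathrm{R}} > \tau_{\mathrm{W}}] - \mathbb{P}[\nu_{\mathrm{R}} > \tau_{\mathrm{W}}, \nu^{\mathrm{W}}_{\mathrm{D}} > \tau_{\mathrm{W}}]\), the \(\chi_{\mathbb{A}}\) pieces collapse into the \((1-\delta(q))\) weightings of the first two summands, while the three \(\mathbb{P}[\nu_{\mathrm{R}} > \tau_{\mathrm{W}}]\,\delta(q)\) integrals over \((\varrho_{\mathrm{A}},\varrho_{\mathrm{W}})\), \((\varrho_{\mathrm{W}},\varrho_{\mathrm{W}}+\varrho_{C})\) and \((\varrho_{\mathrm{W}}+\varrho_{C},\infty)\) merge into the single \(\int_{\varrho_{\mathrm{A}}}^{\infty}\delta(q)f_{Q_{\mathrm{R}}}(q)\,\mathrm{d}q\) of the third summand, whose lower limit records the ABR energy-feasibility threshold. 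Integrating against \(f_{Q_{\mathrm{R}}}\) and using \(1-F_{Q_{\mathrm{R}}}(\varrho_{\mathrm{W}}+\varrho_{C}) = \int_{\varrho_{\mathrm{W}}+\varrho_{C}}^{\infty} f_{Q_{\mathrm{R}}}\) reproduces (\ref{eqn:QI_HR}). The main obstacle is the joint Laplace functional for the spatially correlated interference at \(\mathrm{R}\) and \(\mathrm{D}\): obtaining the product-form per-point contribution (independent fading, shared point locations) and translating it faithfully into the determinantal kernel \(\varphi_{\mathbf{z}}\) is the technical heart, whereas the regime bookkeeping and the fallback complement are routine by comparison.
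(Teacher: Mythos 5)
Your proposal is correct and follows essentially the same route as the paper's Appendix A: decomposing $\mathcal{S}^{\textup{ESAP}}_{\textup{HR}}$ into the WPR-success term and the two ABR fallback terms via the ESAP mode events, conditioning on $Q_{\mathrm{R}}=q$ with the regime split at $\varrho_{\mathrm{A}},\varrho_{\mathrm{W}},\varrho_{\mathrm{W}}+\varrho_{C}$, extracting $\delta(q)$ from the interference-free backscatter SNR, and evaluating the marginal and joint interference probabilities through the $\alpha$-GPP Laplace functional, where the product of the two per-interferer fading averages yields exactly the kernel $\varphi_{\mathbf{z}}$ and the complement identity collapses the fallback terms into the $(1-\delta(q))$ weights and the single $\int_{\varrho_{\mathrm{A}}}^{\infty}\delta(q)f_{Q_{\mathrm{R}}}(q)\,\mathrm{d}q$. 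No gaps; the technical heart you identify (the joint determinantal evaluation of the spatially correlated interference at $\mathrm{R}$ and $\mathrm{D}$) is precisely step (b)--(c) of the paper's derivation.
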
   
 
For readability, the proof of $\textbf{Theorem}$~\ref{thm:QSI_SP_HR} is presented in $\textbf{Appendix A}$.  
  
   

The analytical expression in (\ref{eqn:QI_HR}) appears in terms of the Fredholm determinant~\cite{L.2015Decreusefond}, which allows an efficient numerical evaluation of the relevant quantities~\cite{I.Dec.2014Flint,X.March2015Lu,X.2016JLu}. 
It is observed that the analytical expression of $\mathcal{S}^{\textup{ESAP}}_{\textup{HR}}$ in (\ref{eqn:QI_HR}) has multiple terms. This is due to the fact that the proposed hybrid relaying features with a two-mode operation. The analytical expression involves the joint probabilities that an operational mode is selected and the relay transmission in the selected mode is successful.  
We note that the analytical expression in (\ref{eqn:QI_HR}) has a comparable computational complexity to the analytical results in~\cite{X.March2018Lu,Kong2016,B.June2017Kong}.
The terms that have the highest computational complexity (e.g., the last term in (\ref{eqn:QI_HR})) involve one integral of the inverse Laplace transform of the Fredholm determinant, which can be evaluated relatively easily with numerical integration tools.

\subsection{Special-Case Results}
Next, we investigate some special settings which considerably simplify the general-case result in (\ref{eqn:QI_HR}). 
 
\subsubsection{Pure Ambient Backscatter Relaying}
In the special case when $\mathrm{R}$ forwards information from $\mathrm{S}$ to $\mathrm{D}$ through ambient backscattering only, referred to as {\em pure ABR}, we have the corresponding success probability as follows.

\begin{corollary}   \label{corollary1}
The success probability of the 
pure ABR
\begin{align}
\hspace{-2mm} \mathcal{S}_{\textup{ABR}} \! \hspace{-0.5mm} = \!  
\exp \! \Big( \! \! - \kappa(\tau_{\mathrm{W}}) \sigma^{2} \Big) \mathrm{Det} \Big( \mathrm{Id} \! + \!   
\alpha \mathbb{G}_{\Phi}(\mathbf{x},\mathbf{y}) \psi_{\mathbf{x}}\big(\kappa(\tau_{\mathrm{W}})\big)   \psi_{\mathbf{y}}\big(\kappa(\tau_{\mathrm{W}}) \big)  \! \Big)^{\!\!- \frac{1}{\alpha}} \! \!\! \int^{\infty}_{\!  \varrho_{\mathrm{A}} }  \!   
\delta(q) f_{Q_{\mathrm{R}}}(q) \mathrm{d} q,   \label{eqn:theorem_ABR}  \hspace{-2mm}
\end{align}
where 
$f_{Q_{\mathrm{R}}}(q)$ is given in 
(\ref{pdf}),
respectively. 
\end{corollary}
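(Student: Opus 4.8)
The plan is to treat the pure ABR as the degenerate case of \textbf{Theorem}~\ref{thm:QSI_SP_HR} in which the WPR mode is switched off, so that $\mathrm{R}$ always forwards through ambient backscattering whenever it has harvested enough energy. Then the mode indicator satisfies $\mathrm{M}=\mathrm{A}$ deterministically, and the success event in (\ref{definition_SP}) collapses to the single joint event $\{\nu_{\mathrm{R}}>\tau_{\mathrm{W}},\,\nu^{\mathrm{A}}_{\mathrm{D}}>\tau_{\mathrm{A}},\,E_{\mathrm{R}}>E_{\mathrm{A}}\}$. First I would rewrite the energy constraint $E_{\mathrm{R}}>E_{\mathrm{A}}$ as $Q_{\mathrm{R}}>\varrho_{\mathrm{A}}$ using $E_{\mathrm{R}}=\omega T\beta Q_{\mathrm{R}}$ and $\varrho_{\mathrm{A}}=\rho_{\mathrm{A}}/(\omega\beta)$, so that all three constraints are phrased through the received power $Q_{\mathrm{R}}$ and the SINR/SNR at $\mathrm{R}$ and $\mathrm{D}$.

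The key structural observation is that the first (source-to-relay) hop and the backscatter hop live on disjoint frequency bands and are driven by independent randomness: $\{\nu_{\mathrm{R}}>\tau_{\mathrm{W}}\}$ is measurable with respect to $(h_{\mathrm{S},\mathrm{R}},\Phi,\{h_{j,\mathrm{R}}\})$ on the band of $\Phi$, whereas $\{\nu^{\mathrm{A}}_{\mathrm{D}}>\tau_{\mathrm{A}},\,Q_{\mathrm{R}}>\varrho_{\mathrm{A}}\}$ is measurable with respect to $(\Psi,\{\widetilde h_{k,\mathrm{R}}\},\widetilde h_{\mathrm{R},\mathrm{D}})$ on the band of $\Psi$. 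Since $\Phi$ and $\Psi$ are independent $\alpha$-GPPs and all fading gains are mutually independent, $\mathcal{S}_{\textup{ABR}}$ factorizes into a first-hop decoding probability times a joint backscatter-success/energy probability, which I would then evaluate separately.

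For the first factor I would condition on the interference $I_{\mathrm{R}}$ and use $h_{\mathrm{S},\mathrm{R}}\sim\exp(1)$ to turn $\{\nu_{\mathrm{R}}>\tau_{\mathrm{W}}\}$ into the tail $\exp(-\kappa(\tau_{\mathrm{W}})(I_{\mathrm{R}}+\sigma^2))$, peel off the noise term $\exp(-\kappa(\tau_{\mathrm{W}})\sigma^2)$, and average $\exp(-\kappa(\tau_{\mathrm{W}})I_{\mathrm{R}})$ over the interferer fading. Each interferer contribution becomes $(1+\kappa(\tau_{\mathrm{W}})P_{T}\|\mathbf{x}_j\|^{-\mu})^{-1}=1-\psi_{\mathbf{x}_j}(\kappa(\tau_{\mathrm{W}}))^2$, leaving a spatial product over the points of $\Phi$; invoking the $\alpha$-GPP Laplace-functional/Fredholm-determinant identity~\cite{L.August2015Decreusefond} yields exactly $\mathrm{Det}(\mathrm{Id}+\alpha\mathbb{G}_{\Phi}(\mathbf{x},\mathbf{y})\psi_{\mathbf{x}}(\kappa(\tau_{\mathrm{W}}))\psi_{\mathbf{y}}(\kappa(\tau_{\mathrm{W}})))^{-1/\alpha}$. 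For the second factor I would condition on $Q_{\mathrm{R}}=q$ and use $\widetilde h_{\mathrm{R},\mathrm{D}}\sim\exp(1)$ to get $\mathbb{P}[\nu^{\mathrm{A}}_{\mathrm{D}}>\tau_{\mathrm{A}}\mid Q_{\mathrm{R}}=q]=\delta(q)$, then integrate against the density $f_{Q_{\mathrm{R}}}$ in (\ref{pdf}) over $q>\varrho_{\mathrm{A}}$, giving $\int_{\varrho_{\mathrm{A}}}^{\infty}\delta(q)f_{Q_{\mathrm{R}}}(q)\,\mathrm{d}q$. Multiplying the two factors reproduces (\ref{eqn:theorem_ABR}).

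I expect the main obstacle to be the first factor: justifying the interchange of the fading average and the point-process average, and matching the resulting product $\prod_{j\in\Phi}(1-\psi_{\mathbf{x}_j}(\kappa(\tau_{\mathrm{W}}))^2)$ to the kernel form required by the $\alpha$-GPP determinant identity, i.e., recognizing that $1-(1+\kappa(\tau_{\mathrm{W}}) P_T\|\mathbf{x}_j\|^{-\mu})^{-1}=\psi_{\mathbf{x}_j}(\kappa(\tau_{\mathrm{W}}))^2$ so that $\sqrt{1-f(\mathbf{x}_j)}=\psi_{\mathbf{x}_j}(\kappa(\tau_{\mathrm{W}}))$ enters the kernel symmetrically in $\mathbf{x}$ and $\mathbf{y}$. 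The band-independence factorization and the second integral are comparatively routine once this identification is in place.
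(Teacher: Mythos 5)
Your proposal is correct and follows essentially the same route as the paper: the paper likewise reduces $\mathcal{S}_{\textup{ABR}}$ to $\mathbb{P}[\nu_{\mathrm{R}}>\tau_{\mathrm{W}},\,\nu^{\mathrm{A}}_{\mathrm{D}}>\tau_{\mathrm{A}},\,E_{\mathrm{R}}>E_{\mathrm{A}}]$ and identifies it with the last term of $\mathcal{S}^{\textup{ESAP}}_{\mathrm{A}}$ with $E_{\mathrm{W}}$ sent to $\infty$, whose evaluation (exponential fading tails, the cross-band independence factorization, the $\alpha$-GPP Laplace-functional/Fredholm-determinant identity, and integration of $\delta(q)$ against $f_{Q_{\mathrm{R}}}$) is exactly the computation you spell out explicitly.
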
   
   
\begin{proof} 
The performance of 
the pure ABR
can be derived by setting $\mathrm{R}$ exclusively in ABR mode for relaying as long as the harvested energy is enough to support the function.  
Mathematically, by plugging $\mathbb{P}\big[\mathrm{M}=\mathrm{W}\big]=0$ and $\mathbb{P}\big[\mathrm{M}=\mathrm{A}\big]= \mathbb{P}\big[E_{\mathrm{R}} > E_{\mathrm{A}}]$ into the definition in (\ref{definition_SP}), the corresponding success probability can be expressed as: 
 \begin{align}
 \mathcal{S}_{\textup{ABR}}  =     \mathbb{P} \Big [\nu_{\mathrm{R}} \! > \! \tau_{\mathrm{W}} , \nu^{\mathrm{A}}_{\mathrm{D}} \! > \! \tau_{\mathrm{A}}, E_{\mathrm{R}}  \!>\! E_{\mathrm{A}}  \Big ],
\label{proof:ABR}
 \end{align} 
which is equivalent to the second term of $\mathcal{S}^{\textup{ESAP}}_{\mathrm{A}}$ in (\ref{eqn:C_ARP}) with $E_{\mathrm{W}}$ replaced by $\infty$. Therefore, the analytical expression of $\mathcal{S}_{\textup{ABR}}$ in (\ref{eqn:theorem_ABR}) yields from the derivation of $\mathcal{S}^{\textup{ESAP}}_{\mathrm{A}}$ with the mentioned replacement.
\end{proof}           



We note that the success probability of the hybrid relaying in (\ref{eqn:QI_HR}) can be expanded as follows: 
\begin{align}
\mathcal{S}^{\textup{ESAP}}_{\textup{HR}}  &  =  \mathbb{P} \Big [  \nu_{\mathrm{R}} \! > \! \tau_{\mathrm{W}}, \nu^{\mathrm{W}}_{\mathrm{D}} \! > \! \tau_{\mathrm{W}}, \nu^{\mathrm{A}}_{\mathrm{D}} \! > \! \tau_{\mathrm{A}}, E_{\mathrm{R}}  \! > \! E_{\mathrm{W}}   \Big ]  
+  \mathbb{P} \Big [  \nu_{\mathrm{R}} \! > \! \tau_{\mathrm{W}}, \nu^{\mathrm{W}}_{\mathrm{D}} \! > \! \tau_{\mathrm{W}}, \nu^{\mathrm{A}}_{\mathrm{D}} \! \leq \! \tau_{\mathrm{A}}, E_{\mathrm{R}}  \!>\! E_{\mathrm{W}}   \Big ] \nonumber \\
& \hspace{10mm} + \mathbb{P} \Big [ \nu_{\mathrm{R}} \! > \! \tau_{\mathrm{W}}, \nu^{\mathrm{A}}_{\mathrm{D}} \! > \! \tau_{\mathrm{A}},   \nu^{\mathrm{W}}_{\mathrm{D}} \! \leq \! \tau_{\mathrm{W}},   E_{\mathrm{R}}       \!>\! E_{\mathrm{W} } \Big ] + \mathbb P \Big[ \nu_{\mathrm{R}} \! > \! \tau_{\mathrm{W}}, \nu^{\mathrm{A}}_{\mathrm{D}} \!>\! \tau_{\mathrm{A}}, E_{\mathrm{W}} \! \geq \! E_{\mathrm{R}} \! > \! E_{\mathrm{A}} \Big]  \nonumber \\
& = \mathbb{P} \Big [  \nu_{\mathrm{R}} \! > \! \tau_{\mathrm{W}},  \nu^{\mathrm{A}}_{\mathrm{D}} \! > \! \tau_{\mathrm{A}}, E_{\mathrm{R}}  \! > \! E_{\mathrm{A}}   \Big ] +  \mathbb{P} \Big [  \nu_{\mathrm{R}} \! > \! \tau_{\mathrm{W}}, \nu^{\mathrm{W}}_{\mathrm{D}} \! > \! \tau_{\mathrm{W}}, \nu^{\mathrm{A}}_{\mathrm{D}} \! \leq \! \tau_{\mathrm{A}}, E_{\mathrm{R}}  \!>\! E_{\mathrm{W}}   \Big ],   \label{eqn:expended_SP_HR}
\end{align}
where the first equality follows by expanding the first term of (\ref{eqn:C_ARP}) into two cases when $\nu^{\mathrm{A}}_{\mathrm{D}} > \tau_{\mathrm{A}}$ and $\nu^{\mathrm{A}}_{\mathrm{D}} \leq \tau_{\mathrm{A}}$ and the last equality follows by combining the first, the third and the fourth terms before the equality.
One finds that the probability representation of $\mathcal{S}_{\textup{ABR}}$ in (\ref{proof:ABR}) is exactly the first term of (\ref{eqn:expended_SP_HR}).
Given that the second term of (\ref{eqn:expended_SP_HR}) is always positive, we have the following observation.

{\bf Remark 1}:  
The success probability of the hybrid relaying with ESAP is strictly higher than that of the pure ABR. 

Let $\mathcal{G}^{\textup{ESAP}}_{\textup{ABR}} 
$ denote the performance improvement of the hybrid relaying with ESAP over the pure ABR 
in terms of the success probability, i.e., $ \mathcal{G}^{\textup{ESAP}}_{\textup{ABR}} = \mathcal{S}^{\textup{ESAP}}_{\textup{HR}}- \mathcal{S}_{\textup{ABR}}$.  
In particular, we have 
\begin{align} \label{eqn:G_HR_ABR}
& \mathcal{G}^{\textup{ESAP}}_{\textup{ABR}} =  \exp \! \Big(\! - \!  \kappa(\tau_{\mathrm{W}}) \sigma^2   \! \Big) \! \Bigg( \! \underbrace{\chi_{\mathbb{A}} (\tau_{\mathrm{W}},\rho_C)   \bigg(\! 1 \! - \! 
F_{Q_{\mathrm{R}}} \big(\varrho_{\mathrm{W}}\!+\!\varrho_{C} \big)
- \! \int^{\infty}_{ \! \varrho_{\mathrm{W}}\!+\!\varrho_{C} } \! \delta(q)  
 f_{Q_{\mathrm{R}}}(q) 
 \mathrm{d}q  \! \bigg)}_{\mathcal{G}_{1}}  
\nonumber \\
& \hspace{75mm} \!  \!   + \! \underbrace{\int^{ \varrho_{\mathrm{W}}\!+\!\varrho_{C} }_{\!  \varrho_{\mathrm{W}} } \!\!\! \big (  1 \! - \! 
\delta(q)  \big)   
\chi_{\mathbb{W}} (\tau_{\mathrm{W}}, \omega \beta q \! - \! \rho_{\mathrm{W}} )  
f_{Q_{\mathrm{R}}}(q) \mathrm{d}q}_{\mathcal{G}_2} \! \Bigg).
\end{align}
Based on the expansion of the Fredholm determinant in~\cite[eqn. 14]{Kong2016}, $\chi_{\mathbb{A}}(\tau_{\mathrm{W}},\rho_{C})$  can be expressed as: 
\begin{eqnarray}  \label{evaluation_chi_A}
\chi_{\mathbb{A}}(\tau_{\mathrm{W}},\rho_{C}) \!= \exp \! \bigg( \! \! - \! \frac{  d^{\mu}_{\mathrm{R},\mathrm{D}} \tau_{\mathrm{W}} (1\!-\!\omega) \sigma^2  }{ 2 \rho_{C} }   \! \bigg) \! \prod_{n \geq 0} \! \left(\! 1 \!+\! \frac{2 \alpha (\pi \zeta)^{n+1}}{n!} \!\! \int^{R}_{0} \! \! 
\frac{ \exp(-\pi \zeta r^2) r^{2n+1}}{ 
\! 1\!+\! 
2 \rho_{C} r^{\mu}
 (d^{\mu}_{\mathrm{R},\mathrm{D}} \tau_{\mathrm{W}} (1\!-\!\omega)  P_{T})^{-1} 
  } 
\mathrm{d} r \! \right)^{\!\!\!-\frac{1}{\alpha}}\!\!\!. 
\end{eqnarray}
As the repulsion factor $\alpha \in [-1,0)$ and all the other parameters take positive values, it is readily checked that $\chi_{\mathbb{A}}$ is an increasing function of $\varrho_{C}$. Given that the physical capacity of the capacitor, i.e., 
$\varrho_{\mathrm{W}}+\varrho_{C}$, is fixed, both $\mathcal{G}_{1}$ and $\mathcal{G}_{2}$ in (\ref{eqn:G_HR_ABR}), and thus $\mathcal{G}^{\textup{ESAP}}_{\textup{ABR}}$, 
increase with $\varrho_{C}$.   
As a result, $\chi_{\mathbb{A}}$ is a decreasing function of $\varrho_{\mathrm{W}}$.
It is also noted that  $\varrho_{\mathrm{A}}$ does not appear in the expression of $\mathcal{G}^{\textup{ESAP}}_{\textup{ABR}}$. 


{\bf Remark 2}: According to (\ref{eqn:G_HR_ABR}), the improvement of the hybrid relaying with ESAP over the pure ABR 
$\mathcal{G}^{\textup{HR}}_{\textup{ABR}}$ can be increased with the reduced circuit power consumption $E_{\mathrm{W}}$, while $\mathcal{G}^{\textup{HR}}_{\textup{ABR}}$ is not affected by any change of the circuit power consumption $E_{\mathrm{A}}$.

Furthermore, considering the special case of $\textbf{Corollary}$~\ref{corollary1} where the distributions of $\Psi$ and $\Phi$ exhibit no repulsion, i.e., the Poisson field of the ambient emitters and interferers with $\widetilde{\alpha} \to 0$ and $\alpha \to 0$, we can simplify $\mathcal{S}_{\textup{ABR}}$ in a closed form. 

   
\begin{corollary}  \label{corollary2}
When the path-loss exponent $\mu$ equals $4$,
the success probability of the pure ABR 
in the Poisson field of ambient emitters and interferers is 
\begin{align} \label{eqn:SP_ABR}
& \mathcal{S}_{\textup{ABR}} \! = \! 
\frac{\pi^2}{4} \widetilde{\zeta} \sqrt{\frac{\widetilde{P}_{T}}{N}} \hspace{ 1mm} 
\exp \! \bigg(\! \! - \!\frac{d^{4}_{\mathrm{S},\mathrm{R}} \sigma^2   \tau_{\mathrm{W}}  }{ P_{\mathrm{S}} }  \! 
-  \frac{ \pi^2 \zeta  d^{2}_{\mathrm{S},\mathrm{R}} }{ 2 } \sqrt{\frac{ \tau_{\mathrm{W}} P_{T}}{P_{\mathrm{S}}} }  \bigg)  \mathrm{Erf} \Bigg( \! \sqrt{  \frac{N}{\varrho_{\mathrm{A}}}  } \Bigg) 
,
\end{align} 
where $N \triangleq \frac{d^4_{\mathrm{R},\mathrm{D}} \widetilde{\sigma}^2 \tau_{\mathrm{A}}}{\eta \xi   } + \frac{\pi^4 \zeta^2_{Q} \widetilde{P}_{T} }{16}$ and $\mathrm{Erf}(t) = \frac{1}{\sqrt{t}} \int^{t}_{-t} \exp(-x^2) \mathrm{d}x $ is the error function~\cite{M.1972Abramowitz}. 

\end{corollary}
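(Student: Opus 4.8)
The plan is to start from the general pure-ABR expression (\ref{eqn:theorem_ABR}) of Corollary~\ref{corollary1} and specialize it to the Poisson regime $\widetilde{\alpha}\to 0$, $\alpha\to 0$ with $\mu=\widetilde{\mu}=4$, handling the three factors of (\ref{eqn:theorem_ABR}) separately. The noise prefactor $\exp(-\kappa(\tau_{\mathrm{W}})\sigma^2)$ is immediate once $\kappa(\tau_{\mathrm{W}})=d^4_{\mathrm{S},\mathrm{R}}\tau_{\mathrm{W}}/P_{\mathrm{S}}$ is inserted, and it reproduces the first term in the exponent of (\ref{eqn:SP_ABR}). The remaining work is to (i) collapse the Fredholm determinant into the second exponential term and (ii) reduce the integral $\int_{\varrho_{\mathrm{A}}}^{\infty}\delta(q)f_{Q_{\mathrm{R}}}(q)\,\mathrm{d}q$ to the error-function form.

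For the determinant, I would use the standard Poisson limit $\lim_{\alpha\to0}\mathrm{Det}(\mathrm{Id}+\alpha\mathbb{K})^{-1/\alpha}=\exp(-\operatorname{tr}\mathbb{K})$, which follows from $\log\mathrm{Det}(\mathrm{Id}+\alpha\mathbb{K})=\operatorname{tr}\log(\mathrm{Id}+\alpha\mathbb{K})=\alpha\operatorname{tr}\mathbb{K}+O(\alpha^2)$ and recovers the PPP Laplace functional. Here $\mathbb{K}$ has kernel $\mathbb{G}_{\Phi}(\mathbf{x},\mathbf{y})\psi_{\mathbf{x}}(\kappa(\tau_{\mathrm{W}}))\psi_{\mathbf{y}}(\kappa(\tau_{\mathrm{W}}))$, whose diagonal equals $\zeta\,\psi_{\mathbf{x}}(\kappa(\tau_{\mathrm{W}}))^2$, so $\operatorname{tr}\mathbb{K}=\zeta\int_{\mathbb{R}^2}\big(1-(1+\kappa(\tau_{\mathrm{W}})P_{T}\|\mathbf{x}\|^{-4})^{-1}\big)\,\mathrm{d}\mathbf{x}$. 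Passing to polar coordinates and substituting $u=r^2$ turns the radial integral $\int_0^{\infty} r/(r^4+a)\,\mathrm{d}r$ into an elementary arctangent, giving $\operatorname{tr}\mathbb{K}=\tfrac{\pi^2\zeta}{2}\sqrt{\kappa(\tau_{\mathrm{W}})P_{T}}=\tfrac{\pi^2\zeta}{2}d^2_{\mathrm{S},\mathrm{R}}\sqrt{\tau_{\mathrm{W}}P_{T}/P_{\mathrm{S}}}$, which is exactly the second exponent in (\ref{eqn:SP_ABR}).

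For the integral factor, I would first obtain $f_{Q_{\mathrm{R}}}$ in the Poisson limit. Applying the same determinant-to-trace limit to (\ref{pdf}) shows the Laplace transform of $Q_{\mathrm{R}}$ collapses to $\exp(-\tfrac{\pi^2\widetilde{\zeta}}{2}\sqrt{s\widetilde{P}_{T}})$, the transform of a one-sided $1/2$-stable (L\'evy) law; inverting it yields $f_{Q_{\mathrm{R}}}(q)=\sqrt{c/(2\pi)}\,q^{-3/2}\exp(-c/(2q))$ with $c/2=\pi^4\widetilde{\zeta}^2\widetilde{P}_{T}/16$. The crucial observation is that this exponent and the exponent of $\delta(q)=\exp(-d^4_{\mathrm{R},\mathrm{D}}\widetilde{\sigma}^2\tau_{\mathrm{A}}/(\eta\xi q))$ are both of the form $\mathrm{const}/q$, so their product carries $\exp(-N/q)$ with $N$ precisely the sum $\tfrac{d^4_{\mathrm{R},\mathrm{D}}\widetilde{\sigma}^2\tau_{\mathrm{A}}}{\eta\xi}+\tfrac{\pi^4\widetilde{\zeta}^2\widetilde{P}_{T}}{16}$ defined in the statement. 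The substitution $t=\sqrt{N/q}$ then reduces $\int_{\varrho_{\mathrm{A}}}^{\infty}q^{-3/2}\exp(-N/q)\,\mathrm{d}q$ to $\sqrt{\pi/N}\,\mathrm{Erf}(\sqrt{N/\varrho_{\mathrm{A}}})$, and multiplying the prefactor $\sqrt{c/(2\pi)}$ by $\sqrt{\pi/N}$ gives $\sqrt{c/(2N)}=\tfrac{\pi^2}{4}\widetilde{\zeta}\sqrt{\widetilde{P}_{T}/N}$, the claimed error-function factor. Collecting the three factors produces (\ref{eqn:SP_ABR}).

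The main obstacle I anticipate is the bookkeeping of which density ($\zeta$ versus $\widetilde{\zeta}$) and power ($P_{T}$ versus $\widetilde{P}_{T}$) enters each factor, together with correctly identifying the inverse Laplace transform as the L\'evy PDF: the $\mu=\widetilde{\mu}=4$ assumption is exactly what makes the radial integral elementary and the transform $\exp(-c'\sqrt{s})$ invertible in closed form, whereas other exponents would only yield special-function representations. I would also keep an eye on the nonstandard normalization of $\mathrm{Erf}$ used in the statement to ensure the leading constant $\pi^2/4$ comes out correctly.
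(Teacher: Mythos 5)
Your proposal is correct and follows essentially the same route as the paper's Appendix B: specializing (\ref{eqn:theorem_ABR}) via the Poisson limit of the Fredholm determinant (the trace/exponential expansion), evaluating the radial integral for $\mu=4$ to get the second exponent, reducing $f_{Q_{\mathrm{R}}}$ to the $\exp(-\tfrac{\pi^2\widetilde{\zeta}}{2}\sqrt{s\widetilde{P}_T})$ transform, and collapsing the tail integral into the error function. The only difference is cosmetic --- you identify the inverse Laplace transform as the known one-sided $1/2$-stable (L\'evy) density, whereas the paper derives the same PDF explicitly via Mellin's inverse formula and a Bromwich contour; both yield $f_{Q_{\mathrm{R}}}(q)=\tfrac{1}{4}(\pi/q)^{3/2}\widetilde{\zeta}\sqrt{\widetilde{P}_T}\exp\!\big(-\pi^4\widetilde{\zeta}^2\widetilde{P}_T/(16q)\big)$ and the same final constant $\pi^2/4$.
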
   

The proof of $\textbf{Corollary}$~\ref{corollary2} is presented in $\mathbf{Appendix}$ $\mathbf{B}$. 

The closed-form expression in (\ref{eqn:SP_ABR}) directly reveals the effects of the parameters on the success probability. As the circuit power consumption of the pure ABR is ultra-low, we have $\varrho_{\mathrm{A}} \to 0$, and thus $\mathrm{Erf} \Big( \! \sqrt{  \frac{N}{\varrho_{\mathrm{A}}}  } \Big) \to 1$ and $\mathcal{S}_{\textup{ABR}} \! \approx \! 
\frac{\pi^2}{4} \widetilde{\zeta} \sqrt{\frac{\widetilde{P}_{T}}{N}} \hspace{ 1mm} 
\exp \! \bigg(\! \! - \!\frac{d^{4}_{\mathrm{S},\mathrm{R}} \sigma^2  \tau_{\mathrm{W}}  }{ P_{\mathrm{S}} }  \! 
-  \frac{ \pi^2 \zeta  d^{2}_{\mathrm{S},\mathrm{R}} }{ 2 } \sqrt{\frac{ \tau_{\mathrm{W}} P_{T}}{P_{\mathrm{S}}} }  \bigg)$. One easily observes that $\mathcal{S}_{\textup{ABR}}$ is an increasing function of $\widetilde{\zeta}$, $P_{\mathrm{S}}$, and $\widetilde{P}_{T}$, and a decreasing function of $\tau_{\mathrm{W}}$, $\tau_{\mathrm{A}}$, $d_{\mathrm{S},\mathrm{R}}$, $d_{\mathrm{R},\mathrm{D}}$, $\zeta$, and $P_{T}$. Among these parameters, $P_{\mathrm{S}}$ is the only one controllable by the hybrid relaying system. 
In order to maintain a certain target value for $\mathcal{S}_{\textup{ABR}}$,
$P_{\mathrm{S}}$ needs to scale linearly with transmit power $P_{T}$ and at a rate of $P_{\mathrm{S}} \propto \zeta^2$ with the spatial density of the interferers.

\subsubsection{Pure Wireless-Powered Relaying}
Next, we consider another special case when $\mathrm{R}$ forwards information over the relay-to-destination link only with wireless-powered transmission referred to as the pure WPR. 
The corresponding success probability is given in the following corollary.  

\begin{corollary}   \label{corollary3}
The success probability of the pure WPR 
is 
\begin{align} \label{eqn:theorem_WPR} 
& \mathcal{S}_{\textup{WPR}} \! = \!  \exp \! \Big(\! \! - \! \kappa(\tau_{\mathrm{W}}) \sigma^2  
\!  \Big)  \!
\bigg( \! \int^{ \varrho_{\!\mathrm{W}} + \varrho_{C}  }_{  \varrho_{\!\mathrm{W}}  } \! \! \! \!   \chi_{\mathbb{A}}(\tau_{\mathrm{W}},\omega \beta q  \! - \! \rho_{\mathrm{W}}) f_{Q_{\mathrm{R}}}(q) \mathrm{d} q 
+ \chi_{\mathbb{A}}(\tau_{\mathrm{W}},\rho_{C}) \Big(\! 1 \! - \! 
F_{Q_{\mathrm{R}}}\big(   \varrho_{\mathrm{W}}\!+\!\varrho_C   \big) \! \Big) \! \! \bigg) ,  
  \hspace{-2mm}
\end{align}  
where 
$\chi_{\mathbb{A}}$, $F_{Q_{\mathrm{R}}}(x)$, and $f_{Q_{\mathrm{R}}}(p)$ are given in (\ref{chi1}), (\ref{CDF}), and (\ref{pdf}), respectively.
\end{corollary}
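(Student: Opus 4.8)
The plan is to mirror the reduction used for $\textbf{Corollary}$~\ref{corollary1}: the pure WPR is obtained from the general ESAP operation by disabling the ABR mode, i.e., by forcing $\mathbb{P}\big[\mathrm{M}=\mathrm{A}\big]=0$ and $\mathbb{P}\big[\mathrm{M}=\mathrm{W}\big]=\mathbb{P}\big[E_{\mathrm{R}}>E_{\mathrm{W}}\big]$. Substituting these into the definition in (\ref{definition_SP}) collapses the success event onto a single branch,
\[
\mathcal{S}_{\textup{WPR}} = \mathbb{P}\Big[\nu_{\mathrm{R}} > \tau_{\mathrm{W}},\, \nu^{\mathrm{W}}_{\mathrm{D}} > \tau_{\mathrm{W}},\, E_{\mathrm{R}} > E_{\mathrm{W}}\Big],
\]
so that a transmission succeeds precisely when $\mathrm{R}$ decodes the source signal, has harvested enough energy to drive its active front-end, and $\mathrm{D}$ decodes the forwarded signal over the WPR link.

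First I would condition on the received ambient power $Q_{\mathrm{R}}=q$. Since $E_{\mathrm{R}}=\omega T\beta Q_{\mathrm{R}}$, the energy constraint reads $E_{\mathrm{R}}>E_{\mathrm{W}}\Leftrightarrow q>\varrho_{\mathrm{W}}$, and $P^{\mathrm{W}}_{\mathrm{R}}$ is then the deterministic function of $q$ given in (\ref{eqn:AR_R_dual}). This splits the integral over $q$ into the linear regime $\varrho_{\mathrm{W}}<q\le\varrho_{\mathrm{W}}+\varrho_{C}$, where $P^{\mathrm{W}}_{\mathrm{R}}=\tfrac{2(\omega\beta q-\rho_{\mathrm{W}})}{1-\omega}$, and the saturated regime $q>\varrho_{\mathrm{W}}+\varrho_{C}$, where $P^{\mathrm{W}}_{\mathrm{R}}=\tfrac{2\rho_{C}}{1-\omega}$ and the point mass $1-F_{Q_{\mathrm{R}}}(\varrho_{\mathrm{W}}+\varrho_{C})$ appears. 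For each fixed $q$ I would evaluate the conditional joint probability $\mathbb{P}\big[\nu_{\mathrm{R}}>\tau_{\mathrm{W}},\,\nu^{\mathrm{W}}_{\mathrm{D}}>\tau_{\mathrm{W}}\mid\Phi,q\big]$: integrating out the two direct-link gains $h_{\mathrm{S},\mathrm{R}},h_{\mathrm{R},\mathrm{D}}\sim\exp(1)$ turns the two SINR thresholds into exponential complementary-CDF factors carrying the aggregate interferences at $\mathrm{R}$ and $\mathrm{D}$ together with the noise terms $\kappa(\tau_{\mathrm{W}})\sigma^2$ and $\ell(\tau_{\mathrm{W}},p)\sigma^2$, where $p=\omega\beta q-\rho_{\mathrm{W}}$ in the linear regime and $p=\rho_{C}$ in the saturated regime, so that $\ell(\tau_{\mathrm{W}},p)=\tau_{\mathrm{W}}d^{\mu}_{\mathrm{R},\mathrm{D}}/P^{\mathrm{W}}_{\mathrm{R}}$. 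Integrating out the independent interferer gains $h_{j,\mathrm{R}},h_{j,\mathrm{D}}$ then leaves, conditioned on $\Phi$, a product over the points of $\Phi$ of the combined per-point factor encoded in $\varphi_{\mathbf{z}}$.

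The crux is averaging this product over the $\alpha$-GPP interferer field. Because both hops are obstructed by the same static point set $\Phi$, the two threshold events are spatially and temporally correlated and cannot be decoupled into two separate determinants; instead, conditioning on the point locations and exploiting the independence of the fading gives a single product over $\Phi$, to which the determinantal Laplace functional of the $\alpha$-GPP \cite{L.August2015Decreusefond} applies, producing exactly the one Fredholm determinant packaged inside $\chi_{\mathbb{A}}(\tau_{\mathrm{W}},p)$ in (\ref{chi1}). Collecting the two regimes and factoring out the common noise term $\exp\!\big(-\kappa(\tau_{\mathrm{W}})\sigma^2\big)$ yields (\ref{eqn:theorem_WPR}); I expect this correlated-interference step to be the main obstacle, since it is precisely where a single determinant, rather than a product of two, is forced. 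As a consistency check, the same expression follows from $\textbf{Theorem}$~\ref{thm:QSI_SP_HR} by letting $\tau_{\mathrm{A}}\to\infty$, so that $\delta(q)=\exp\!\big(-\tfrac{d^{\mu}_{\mathrm{R},\mathrm{D}}\widetilde{\sigma}^2\tau_{\mathrm{A}}}{\eta\xi q}\big)\to0$: every ABR term in (\ref{eqn:QI_HR}) vanishes and the two surviving terms coincide with (\ref{eqn:theorem_WPR}).
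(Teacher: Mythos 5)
Your proposal is correct and follows essentially the same route as the paper: the paper likewise reduces $\mathcal{S}_{\textup{WPR}}$ to the joint probability $\mathbb{P}\big[\nu_{\mathrm{R}}>\tau_{\mathrm{W}},\,\nu^{\mathrm{W}}_{\mathrm{D}}>\tau_{\mathrm{W}},\,E_{\mathrm{R}}>E_{\mathrm{W}}\big]$, identifies it with $\mathcal{S}^{\textup{ESAP}}_{\mathrm{W}}$ in (\ref{eqn:C_ARP}), and reads off (\ref{eqn:theorem_WPR}) from (\ref{eqn:SP_HR_A}), whose derivation is exactly the two-regime conditioning on $Q_{\mathrm{R}}$ and single-determinant averaging over the common interferer field that you describe. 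Your $\tau_{\mathrm{A}}\to\infty$ consistency check is a nice addition but does not change the argument.
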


\begin{proof}
The performance of the pure WPR 
can be obtained by letting $\mathcal{\mathrm{R}}$ forward the information from $\mathrm{S}$ to  $\mathrm{D}$ with active transmission only, once the harvested energy is sufficient for the function. Mathematically, we have   $\mathbb{P}\big[\mathrm{M}=\mathrm{W}\big] =  \mathbb{P} \Big[\nu_{\mathrm{R}} \!>\! \tau_{\mathrm{W}}, \nu^{\mathrm{W}}_{\mathrm{D}} \! > \! \tau_{\mathrm{W}} ,  E_{\mathrm{R}} \! > \! E_{\mathrm{W}}\Big]$ and $\mathbb{P}\big[\mathrm{M}=\mathrm{A}\big] = 0$. 
By assigning the above conditions to the definition in (\ref{definition_SP}), the corresponding success probability can be expressed as:  
\begin{align}
\mathcal{S}_{\textup{WPR}} \! = \!       \mathbb{P} \Big [ \nu_{\mathrm{R}} \! > \! \tau_{\mathrm{W}}, \nu^{\mathrm{W}}_{\mathrm{D}} \! > \! \tau_{\mathrm{W}}, E_{\mathrm{R}}  \! > \! E_{\mathrm{W}}  \Big ], \label{eqn:SP_WPR_QI}
\end{align}
which is exactly the probability representation of $\mathcal{S}^{\textup{ESAP}}_{\mathrm{W}}$ in (\ref{eqn:C_ARP}). Therefore, the analytical expression of  $\mathcal{S}_{\textup{WPR}}$ in (\ref{eqn:theorem_WPR}) can be directly obtained from (\ref{eqn:SP_HR_A}). 
\end{proof} 

{\bf Remark 3}: As the probability representation of $\mathcal{S}_{\textup{WPR}}$ in (\ref{eqn:SP_WPR_QI}) is exactly $\mathcal{S}^{\textup{ESAP}}_{\mathrm{W}}$ in (\ref{eqn:C_ARP}), we have $\mathcal{S}^{\textup{ESAP}}_{\textup{HR}}=\mathcal{S}^{\textup{ESAP}}_{\mathrm{W}}+\mathcal{S}^{\textup{ESAP}}_{\mathrm{A}} > \mathcal{S}_{\textup{WPR}}$, given that $\mathcal{S}^{\textup{ESAP}}_{\mathrm{A}}$ is positive. Therefore, the success probability of the hybrid relaying with ESAP is strictly higher than that of the pure WPR. 

Let $\mathcal{G}^{\textup{HR}}_{\textup{WPR}} 
$ denote the performance improvement of the hybrid relaying over the pure WPR 
in terms of the success probability,  i.e., $ \mathcal{G}^{\textup{ESAP}}_{\textup{WPR}} = \mathcal{S}^{\textup{ESAP}}_{\textup{HR}}- \mathcal{S}_{\textup{WPR}}$. 
In particular, we have  
\begin{align} \label{eqn:G_HR_WPR}
& \mathcal{G}^{\textup{ESAP}}_{\textup{WPR}} =  \exp \! \Big(\! - \!  \kappa(\tau_{\mathrm{W}}) \sigma^2  \! \Big)   \! \Bigg( \! 
\underbrace{ \mathrm{Det} \Big( \mathrm{Id} \! + \! \alpha \mathbb{G}_{\Phi}(\mathbf{x},\mathbf{y}) \psi_{\mathbf{x}}\big(\kappa(\tau_{\mathrm{W}})\big)   \psi_{\mathbf{y}} \big(\kappa(\tau_{\mathrm{W}}) \big) \! \Big)^{\!\!-\!\frac{1}{\alpha}} }_{\mathcal{G}_{3}}  
\!  \int^{ \infty }_{\! \varrho_{\mathrm{A}} }  \!\!  \delta(q)  
f_{Q_{\mathrm{R}}}(q) \mathrm{d} q \nonumber \\
& \hspace{25mm} \! 
 - \underbrace{ \chi_{\mathbb{A}} (\tau_{\mathrm{W}},\rho_C)    
 \bigg( \! \int^{\infty}_{ \! \varrho_{\mathrm{W}}\!+\!\varrho_{C} } \! \delta(q)  
 f_{Q_{\mathrm{R}}}(q) 
 \mathrm{d}q  
  + \! \int^{ \varrho_{\mathrm{W}}\!+\!\varrho_{C} }_{\!  \varrho_{\mathrm{W}} } \!  
\delta(q)    \chi_{\mathbb{A}} (\tau_{\mathrm{W}}, \omega \beta q \! - \! \rho_{\mathrm{W}} )  
f_{Q_{\mathrm{R}}}(q) \mathrm{d}q  \bigg)  }_{\mathcal{G}_{4}}  \! \Bigg) .
\end{align}

Recall that $\chi_{\mathbb{A}}$ is a decreasing function of $\varrho_{\mathrm{W}}$ from \textbf{Remark 2}. 
Given that the overall capacity of the capacitor is fixed, i.e., $\varrho_{\mathrm{W}}+\varrho_{C}$, it is readily checked that $\mathcal{G}_{4}$ decreases with $\varrho_{\mathrm{W}}$. Hence, $\mathcal{G}^{\textup{ESAP}}_{\textup{WPR}}$ is an increasing function of $\varrho_{\mathrm{W}}$.
In addition, we have $\frac{\partial{\mathcal{G}^{\textup{ESAP}}_{\textup{WPR}} } }{ \partial{\varrho_{\mathrm{A}}} } =  - \exp \! \big(\! - \!  \kappa(\tau_{\mathrm{W}})  \sigma^2   \big) \mathcal{G}_{3}    \delta(\varrho_{\mathrm{A}}) f_{Q_{\mathrm{A}}}(\varrho_{\mathrm{A}}) < 0 $. Thus, $\mathcal{G}^{\textup{ESAP}}_{\textup{WPR}}$ is a decreasing function of $\varrho_{\mathrm{A}}$.

{\bf Remark 4:} 
It is observed from
(\ref{eqn:G_HR_WPR}) that the improvement of the hybrid relaying with ESAP over the pure WPR 
becomes more remarkable with increased circuit power consumption $E_{\mathrm{W}}$ and reduced circuit power consumption $E_{\mathrm{A}}$.

\subsection{General-Case Results for ETCP}

Next, we continue to investigate the performance of hybrid relaying with ETCP at the steady states when $\mathrm{R}$ has committed to a certain mode based on its selection criteria. The mode selection probability of ETCP depends on the average success probabilities of the pure ABR and WPR, which have been obtained in $\mathbf{Corollary}$ \ref{corollary1}  and $\mathbf{Corollary}$ \ref{corollary3}, respectively. Based on these results with ETCP, we have the success probability of hybrid relaying in the following theorem.  

\begin{theorem}  \label{thm:SP_ETC}
The success probability of the hybrid relaying with ETCP at the steady states is 
\begin{align} \label{eqn:SP_ETC}
& \mathcal{S}^{\textup{ETCP}}_{\textup{HR}} \! = \frac{1}{2} (\mathcal{S}_{\textup{WPR}}+\mathcal{S}_{\textup{ABR}}) + \frac{1}{2} \Big(  \phi_{1}\big(\mathcal{S}_{\textup{WPR}}\big) \phi_{2}\big(\mathcal{S}_{\textup{ABR}}\big)  -  \phi_{1}\big(\mathcal{S}_{\textup{ABR}}\big) \phi_{2}\big(\mathcal{S}_{\textup{WPR}}\big) \Big) (\mathcal{S}_{\textup{WPR}}-\mathcal{S}_{\textup{ABR}}),   
\end{align}
where $n \in \mathbb{N}^+$, $\mathcal{S}_{\textup{ABR}}$ and $\mathcal{S}_{\textup{WPR}}$ are given in  (\ref{eqn:theorem_ABR}) and (\ref{eqn:theorem_WPR}), respectively, and $\phi_{1}$ and $\phi_{2}$ are given, respectively, as
\begin{align} 
\phi_{1}(x)=\sum^{n}_{i=1} {n \choose i} 
x^{i}
(1-x)^{n-i} \quad \text{and} \quad \phi_2(x) =\sum^{i}_{j=1}  {n \choose i \!-\! j} x^{i-j} (1-x)^{n-i+j} . \nonumber 
\end{align}
\end{theorem}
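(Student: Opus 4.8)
The plan is to reduce the steady-state analysis to a pair of independent binomial counts and then identify the resulting double sums with $\phi_1$ and $\phi_2$. During the $2n$-slot exploration period the relay operates $n$ slots in each mode, and under the block-fading model each slot yields an independent success whose marginal probability is the per-slot figure computed earlier in $\textbf{Corollary}$~\ref{corollary3} and $\textbf{Corollary}$~\ref{corollary1}. Hence I would model $N_{\mathrm{WPR}}$ and $N_{\mathrm{ABR}}$ as independent binomials with parameters $(n,\mathcal{S}_{\textup{WPR}})$ and $(n,\mathcal{S}_{\textup{ABR}})$, so that
\[
\mathbb{P}[N_{\mathrm{WPR}} = i] = {n \choose i}\, \mathcal{S}_{\textup{WPR}}^{\,i}(1-\mathcal{S}_{\textup{WPR}})^{n-i},
\]
and likewise for $N_{\mathrm{ABR}}$.

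Next, I would translate the commitment rule of ETCP into a mixing probability. Writing $A = \mathbb{P}[N_{\mathrm{WPR}} > N_{\mathrm{ABR}}]$, $B = \mathbb{P}[N_{\mathrm{ABR}} > N_{\mathrm{WPR}}]$, and $C = \mathbb{P}[N_{\mathrm{WPR}} = N_{\mathrm{ABR}}]$, the rule commits to the WPR mode with probability $A + \tfrac{1}{2}C$ and to the ABR mode with probability $B + \tfrac{1}{2}C$, since a tie is broken by a fair coin. Because each committed mode then succeeds with its own marginal probability, the steady-state success probability is the weighted average
\[
\mathcal{S}^{\textup{ETCP}}_{\textup{HR}} = \big(A + \tfrac{1}{2}C\big)\mathcal{S}_{\textup{WPR}} + \big(B + \tfrac{1}{2}C\big)\mathcal{S}_{\textup{ABR}}.
\]
Using $A + B + C = 1$ to eliminate $C$, this rearranges into $\tfrac{1}{2}(\mathcal{S}_{\textup{WPR}} + \mathcal{S}_{\textup{ABR}}) + \tfrac{1}{2}(A - B)(\mathcal{S}_{\textup{WPR}} - \mathcal{S}_{\textup{ABR}})$, so it only remains to show that $A - B = \phi_1(\mathcal{S}_{\textup{WPR}})\phi_2(\mathcal{S}_{\textup{ABR}}) - \phi_1(\mathcal{S}_{\textup{ABR}})\phi_2(\mathcal{S}_{\textup{WPR}})$.

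The core computation is then to expand $A$ by conditioning on $N_{\mathrm{WPR}} = i$ (which forces $i \geq 1$) and summing the ABR count over $k \in \{0,\dots,i-1\}$. I would substitute $k = i-j$, turning the inner sum into $\sum_{j=1}^{i} {n \choose i-j}\, \mathcal{S}_{\textup{ABR}}^{\,i-j}(1-\mathcal{S}_{\textup{ABR}})^{n-i+j}$, which is precisely the inner part of $\phi_2(\mathcal{S}_{\textup{ABR}})$ sharing the outer index $i$ of $\phi_1(\mathcal{S}_{\textup{WPR}})$; this identifies $A = \phi_1(\mathcal{S}_{\textup{WPR}})\phi_2(\mathcal{S}_{\textup{ABR}})$, and by swapping the roles of the two modes, $B = \phi_1(\mathcal{S}_{\textup{ABR}})\phi_2(\mathcal{S}_{\textup{WPR}})$. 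Substituting back yields the claimed expression. The only delicate points are (i) reading the compact definition of $\phi_2$ correctly, namely that its summation index $i$ is the shared outer index inherited from $\phi_1$, and (ii) justifying that the per-slot successes may be treated as independent Bernoulli trials with marginal probabilities $\mathcal{S}_{\textup{WPR}}$ and $\mathcal{S}_{\textup{ABR}}$ rather than being correlated through the static point-process geometry; everything else is elementary binomial bookkeeping together with the tie-breaking argument.
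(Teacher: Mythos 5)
Your proposal is correct and follows essentially the same route as the paper's proof in Appendix C: independent binomial exploration counts, the tie-broken commitment rule yielding $\big(A+\tfrac{1}{2}C\big)\mathcal{S}_{\textup{WPR}}+\big(B+\tfrac{1}{2}C\big)\mathcal{S}_{\textup{ABR}}$, the rearrangement via $A+B+C=1$, and the identification of $A-B$ with the $\phi_1,\phi_2$ double sums under the shared index $i$. Your two flagged caveats (the abusive shared-index notation in $\phi_2$ and the assumed slot-to-slot independence despite the static interferer geometry) are real but are likewise glossed over by the paper itself.
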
 
The proof of $\textbf{Theorem}$~\ref{thm:SP_ETC} can be found in  \textbf{Appendix C}. 

When the hybrid relay uniformly selects between the ABR and WPR modes at random, the corresponding success probability can be easily obtained by averaging the success probabilities of the two modes, i.e., $\mathcal{S}^{\textup{URMS}}_{\textup{HR}}=\frac{1}{2} (\mathcal{S}_{\textup{ABR}}+\mathcal{S}_{\textup{WPR}})$. 
Let $\Upsilon \triangleq \mathcal{S}^{\textup{ETCP}}_{\textup{HR}} - \mathcal{S}^{\textup{URMS}}_{\textup{HR}} = \frac{1}{2} \Big  ( \phi_{1}\big(\mathcal{S}_{\mathrm{R}}\big)  \phi_{2}\big(\mathcal{S}_{\textup{WPR}}\big) -  \phi_{1}\big(\mathcal{S}_{\textup{WPR}})\phi_{2}\big(\mathcal{S}_{\textup{ABR}}\big) \Big) (\mathcal{S}_{\textup{WPR}}-\mathcal{S}_{\textup{ABR}})$. It is readily checked that $\phi_{1}\big(\mathcal{S}_{\textup{WPR}}\big)  \phi_{2}\big(\mathcal{S}_{\textup{ABR}}\big) -  \phi_{1}\big(\mathcal{S}_{\textup{ABR}})\phi_{2}\big(\mathcal{S}_{\textup{WPR}}\big)$ and $\mathcal{S}_{\textup{WPR}}-\mathcal{S}_{\textup{ABR}}$ take positive, negative, or zero at the same time. Therefore, we have $\Upsilon \geq 0 $, which yields the following observation.  

 
{\bf Remark 5:}  The success probability of the hybrid relaying with ETCP is strictly no worse than that with uniformly random mode selection.

\section{Analysis of Ergodic Capacity}
 
This section investigates the end-to-end ergodic capacity that can be achieved from the hybrid relaying. Specifically, we provide the general-case results for hybrid relaying with ESAP and ETCP and the special-case results for pure ABR and WPR.

The ergodic capacity of the hybrid relaying can be defined as: 
\begin{align} \label{def:ergodic_capacity}
& \mathcal{C}_{\textup{HR}}  \triangleq  \frac{(1-\omega)}{2} \Big ( C_{\mathrm{W}} 
\mathbb{P}\big[\mathrm{M}=\mathrm{W}\big] +    C_{\mathrm{A}} \mathbb{P}\big[\mathrm{M}=\mathrm{A}\big] \Big) \nonumber \\
& \!\!\!= \frac{(1-\omega)}{2} \Big ( \mathbb{E} \Big[ W \log_{2} (1 + \nu ) \big| \mathrm{M}  = \mathrm{W}  \Big] \mathbb{P} [ \mathrm{M} \! = \! \mathrm{W}  ]  \! + \! C_{\mathrm{A}} \mathbb{P} \Big[\nu_{\mathrm{R}} > \tau_{\mathrm{W}}, \nu^{\mathrm{A}}_{\mathrm{D}} >  \tau_{\mathrm{A}} | \mathrm{M} \! = \! \mathrm{A} \Big] \mathbb{P} \big[\mathrm{M}  \! = \! \mathrm{A} \big] \Big),\!\!
\end{align}  
where the coefficient $\frac{1-\omega}{2}$ comes from the fact that the transmission for each hop occupies $\frac{1-\omega}{2}$ fraction of a time slot duration and the last equality follows 
the assumption that $C_{\mathrm{A}} \ll C_{\mathrm{W}} $.
 
\subsection{General-Case Results for ESAP} 
 
According to the definition in (\ref{def:ergodic_capacity}), the ergodic capacity of the hybrid relaying with ESAP is presented in the following theorem.

\begin{theorem}\label{the:3}
The ergodic capacity of the hybrid relaying with ESAP is   
\begin{align} 
& \hspace{-2mm} \mathcal{C}_{\textup{HR}}   \! = \!  \frac{1\!-\!\omega}{2}   \Bigg(  \frac{W }{ \ln(2)}  \! \int^{\infty}_{\tau_{\mathrm{W}}}  
\frac{ \exp \! \big( \! -   \kappa (v)   \sigma^2     \big)}{1+v} \bigg( \! \chi_{\mathbb{A}}(v,\rho_{C})  
\Big( 1 \! - \! 
F_{Q_{\mathrm{R}}}
\big( \varrho_{\mathrm{W}}\!+\!\varrho_C \big) \! \Big)
\! + \! \int^{  \varrho_{\mathrm{W}}+\varrho_C }_{\!  \varrho_{\mathrm{W}} } \!\! \! \!  
\chi_{\mathbb{A}}(v,\omega \beta q \!- \!  \rho_{\mathrm{W}}) 
\nonumber  \\ 
& \hspace{110mm}  
\times  f_{Q_{\mathrm{R}}}(q) \mathrm{d} q \! \bigg)   
\mathrm{d} v + \!  C_{\mathrm{A}}   \mathcal{S}^{\textup{ESAP}}_{\mathrm{A}} \Bigg),
 \label{the:QI_Capacity_HR}
\end{align}
where 
$\chi_{\mathbb{A}}$, $F_{Q_{\mathrm{R}}}(x)$,  $f_{Q_{\mathrm{R}}}(q)$, and $\mathcal{S}^{\textup{ESAP}}_{\mathrm{A}}$ are given in  (\ref{chi1}), (\ref{CDF}), (\ref{pdf}), and (\ref{eqn:SP_HR_B}), respectively.

\end{theorem}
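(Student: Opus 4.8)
The plan is to start from the definition of the ergodic capacity in (\ref{def:ergodic_capacity}), which already splits $\mathcal{C}_{\textup{HR}}$ into a WPR--mode contribution $\mathbb{E}[W\log_2(1+\nu)\,|\,\mathrm{M}=\mathrm{W}]\,\mathbb{P}[\mathrm{M}=\mathrm{W}]$ and an ABR--mode contribution $C_{\mathrm{A}}\,\mathbb{P}[\nu_{\mathrm{R}}>\tau_{\mathrm{W}},\nu^{\mathrm{A}}_{\mathrm{D}}>\tau_{\mathrm{A}}\,|\,\mathrm{M}=\mathrm{A}]\,\mathbb{P}[\mathrm{M}=\mathrm{A}]$, both scaled by $\tfrac{1-\omega}{2}$. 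The ABR contribution is immediate: by the same Bayes decomposition used in (\ref{definition_SP}), the product of the conditional decoding probability and the mode probability is exactly the ABR success term $\mathcal{S}^{\textup{ESAP}}_{\mathrm{A}}$ of (\ref{eqn:SP_HR_B}), so this piece becomes $C_{\mathrm{A}}\mathcal{S}^{\textup{ESAP}}_{\mathrm{A}}$ with no further work. Hence the whole task reduces to evaluating the WPR--mode contribution.

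For the WPR term I would first note that, under ESAP, the event $\{\mathrm{M}=\mathrm{W}\}$ coincides with $\{\nu_{\mathrm{R}}>\tau_{\mathrm{W}},\,\nu^{\mathrm{W}}_{\mathrm{D}}>\tau_{\mathrm{W}},\,E_{\mathrm{R}}>E_{\mathrm{W}}\}$, i.e.\ with $\{\nu>\tau_{\mathrm{W}},\,E_{\mathrm{R}}>E_{\mathrm{W}}\}$ for $\nu=\min(\nu_{\mathrm{R}},\nu^{\mathrm{W}}_{\mathrm{D}})$; in particular the mode is selected only when the forwarded data is decodable, so the threshold indicator is automatically enforced. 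I would then pass to the complementary--CDF representation of the logarithm, writing $\ln(1+\nu)=\int_0^{\nu}\tfrac{\mathrm{d}v}{1+v}$, multiplying by $\mathbf{1}_{\mathrm{M}=\mathrm{W}}$, and applying Tonelli's theorem to interchange expectation and integration. Because the event forces $\nu>\tau_{\mathrm{W}}$, the tail integral has lower limit $\tau_{\mathrm{W}}$, yielding
\[
\mathbb{E}\big[W\log_2(1+\nu)\mathbf{1}_{\mathrm{M}=\mathrm{W}}\big]=\frac{W}{\ln 2}\int_{\tau_{\mathrm{W}}}^{\infty}\frac{\mathbb{P}\big[\nu_{\mathrm{R}}>v,\nu^{\mathrm{W}}_{\mathrm{D}}>v,E_{\mathrm{R}}>E_{\mathrm{W}}\big]}{1+v}\,\mathrm{d}v .
\]
The bookkeeping at $v=\tau_{\mathrm{W}}$ is the one point that must be handled with care and reconciled with the capacity convention of (\ref{def:ergodic_capacity}).

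The decisive observation is that the integrand is structurally identical to the WPR success probability of $\mathbf{Corollary}$~\ref{corollary3} (equivalently the WPR term of (\ref{eqn:QI_HR})), but with the fixed threshold $\tau_{\mathrm{W}}$ replaced by the running variable $v$. I would therefore transplant that derivation verbatim: condition on $Q_{\mathrm{R}}=q$, which fixes the transmit power $P^{\mathrm{W}}_{\mathrm{R}}$ piecewise through (\ref{eqn:AR_R_dual}); use the exponential (Rayleigh) law of the direct channel gains to convert each SINR constraint into an exponential factor; and evaluate the contribution of the interference field $\Phi$ through the $\alpha$--GPP Laplace functional exactly as in the proof of $\mathbf{Theorem}$~\ref{thm:QSI_SP_HR}, producing the Fredholm determinant $\chi_{\mathbb{A}}(v,\cdot)$ of (\ref{chi1}). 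Integrating over $q$ against $f_{Q_{\mathrm{R}}}$ on the two feasibility regimes $q\in(\varrho_{\mathrm{W}},\varrho_{\mathrm{W}}+\varrho_C]$ and $q>\varrho_{\mathrm{W}}+\varrho_C$ then gives precisely $\exp(-\kappa(v)\sigma^2)\big(\chi_{\mathbb{A}}(v,\rho_C)(1-F_{Q_{\mathrm{R}}}(\varrho_{\mathrm{W}}+\varrho_C))+\int_{\varrho_{\mathrm{W}}}^{\varrho_{\mathrm{W}}+\varrho_C}\chi_{\mathbb{A}}(v,\omega\beta q-\rho_{\mathrm{W}})f_{Q_{\mathrm{R}}}(q)\,\mathrm{d}q\big)$. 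Substituting into the tail integral, adding $C_{\mathrm{A}}\mathcal{S}^{\textup{ESAP}}_{\mathrm{A}}$, and restoring the $\tfrac{1-\omega}{2}$ prefactor yields (\ref{the:QI_Capacity_HR}).

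The main obstacle is the joint tail probability in the third step: since the same process $\Phi$ produces the interference at both $\mathrm{R}$ and $\mathrm{D}$, the two hop--constraints are correlated and cannot be factorized, so they must be evaluated jointly through a single Laplace functional whose displacement $\varphi_{\mathbf{z}}$ carries both the relay and the destination interference terms. This correlation is exactly what is already resolved in the proof of $\mathbf{Theorem}$~\ref{thm:QSI_SP_HR}, so the genuine content here is recognizing the self--similarity $\tau_{\mathrm{W}}\!\to\!v$ that lets that computation be reused pointwise inside the capacity integral, together with justifying the Tonelli interchange and the threshold bookkeeping flagged above.
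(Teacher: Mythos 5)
Your proposal follows essentially the same route as the paper's Appendix D: decompose via the definition in (\ref{def:ergodic_capacity}), identify the ABR contribution as $C_{\mathrm{A}}\,\mathcal{S}^{\textup{ESAP}}_{\mathrm{A}}$ from (\ref{eqn:SP_HR_B}), and evaluate the WPR term through the tail-probability representation of the expectation followed by the $\mathcal{S}^{\textup{ESAP}}_{\mathrm{W}}$ computation of (\ref{eqn:SP_HR_A}) with $\tau_{\mathrm{W}}$ replaced by the running variable $v$. The only divergence is in the threshold bookkeeping you flag: the paper's own appendix writes the lower limit of the $v$-integral as $0$ while the theorem states $\tau_{\mathrm{W}}$, and your explicit attention to the contribution from $v<\tau_{\mathrm{W}}$ (where the indicator pins the probability at $\mathcal{S}^{\textup{ESAP}}_{\mathrm{W}}$) is, if anything, more careful than the paper's treatment.
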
    

The proof of $\mathbf{Theorem}$~\ref{the:3} is presented in  $\mathbf{Appendix}$ $\mathbf{D}$.

\subsection{Special-Case results} 

For the analysis of the ergodic capacity, we also investigate the special cases when $\mathrm{R}$ relays the information from $\mathrm{S}$ to  $\mathrm{D}$ by using only ambient backscattering or only wireless-powered transmission. In particular, the result of the former case is presented  in the following corollary.

\begin{corollary}  \label{corrollary4}
The capacity of the pure ABR is
\begin{align} \label{Coro:capacity_ABR}
\mathcal{C}_{\textup{ABR}} \! = \! \frac{ (1\!-\!\omega) C_{\mathrm{A}} }{2}  
\exp \! \Big(\! - \!  \kappa(\tau_{\mathrm{W}}) \sigma^2  \! \Big) \mathrm{Det} \Big( \mathrm{Id} \! + \! \alpha \mathbb{G}_{\Phi}(\mathbf{x},\mathbf{y}) \psi_{\mathbf{x}}\big(\kappa(\tau_{\mathrm{W}})\big)   \psi_{\mathbf{y}}\big(\kappa(\tau_{\mathrm{W}}) \big) \! \Big)^{\!\!-\!\frac{1}{\alpha}}
\! \! \!  \int^{ \infty }_{\! \varrho_{\mathrm{A}} } \! \! \!
\delta(q)  f_{Q_{\mathrm{R}}}(q) \mathrm{d} q, \!  \!
\end{align}
where 
$f_{Q_{\mathrm{R}}}(q)$ is given in 
(\ref{pdf}). 
\end{corollary}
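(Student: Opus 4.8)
The plan is to specialize the general ergodic-capacity definition in (\ref{def:ergodic_capacity}) to the pure-ABR regime, where the hybrid relay is restricted to backscatter forwarding. First I would impose $\mathbb{P}[\mathrm{M}=\mathrm{W}]=0$, which eliminates the first (WPR) term in (\ref{def:ergodic_capacity}) entirely. Since the relay in pure ABR always selects the backscatter mode whenever it can power its circuit, the event $\{\mathrm{M}=\mathrm{A}\}$ coincides with $\{E_{\mathrm{R}}>E_{\mathrm{A}}\}$, so I would set $\mathbb{P}[\mathrm{M}=\mathrm{A}]=\mathbb{P}[E_{\mathrm{R}}>E_{\mathrm{A}}]$.

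With only the second term surviving, I would apply Bayes' theorem to merge the conditional success probability with the mode-selection marginal, obtaining
\begin{align}
\mathcal{C}_{\textup{ABR}} = \frac{(1-\omega)C_{\mathrm{A}}}{2}\, \mathbb{P}\Big[\nu_{\mathrm{R}}>\tau_{\mathrm{W}},\, \nu^{\mathrm{A}}_{\mathrm{D}}>\tau_{\mathrm{A}},\, E_{\mathrm{R}}>E_{\mathrm{A}}\Big]. \nonumber
\end{align}
The key observation is that the remaining joint probability is precisely the probability representation of $\mathcal{S}_{\textup{ABR}}$ given in (\ref{proof:ABR}). Crucially, because the backscatter rate is a fixed pre-designed constant $C_{\mathrm{A}}$ rather than a channel-dependent $\log$-rate, there is no SINR expectation to integrate over: a successful backscatter transmission delivers the deterministic capacity $C_{\mathrm{A}}$, so the ergodic capacity collapses to $C_{\mathrm{A}}$ scaled by the success probability and the per-hop time fraction.

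Finally, I would invoke $\mathbf{Corollary}$~\ref{corollary1} to substitute the closed analytical form of $\mathcal{S}_{\textup{ABR}}$ from (\ref{eqn:theorem_ABR}), yielding exactly the expression (\ref{Coro:capacity_ABR}). The whole argument reduces to $\mathcal{C}_{\textup{ABR}}=\tfrac{(1-\omega)C_{\mathrm{A}}}{2}\,\mathcal{S}_{\textup{ABR}}$ followed by a direct substitution.

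There is essentially no technical obstacle in this corollary, since the stochastic-geometry computation of the interference Laplace functional and the energy-harvesting distribution has already been carried out in establishing $\mathbf{Corollary}$~\ref{corollary1}. The only point requiring care is confirming that the capacity in the backscatter mode is genuinely rate-independent of the instantaneous SNR once the decoding threshold $\tau_{\mathrm{A}}$ is met — this is what justifies factoring $C_{\mathrm{A}}$ out of the expectation and reducing the ergodic capacity to a scaled success probability rather than an SNR integral of the form appearing in $\mathbf{Theorem}$~\ref{the:3}.
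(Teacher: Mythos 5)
Your proposal is correct and follows essentially the same route as the paper: set $\mathbb{P}[\mathrm{M}=\mathrm{W}]=0$, reduce the definition in (\ref{def:ergodic_capacity}) to $\mathcal{C}_{\textup{ABR}}=\tfrac{(1-\omega)}{2}C_{\mathrm{A}}\mathcal{S}_{\textup{ABR}}$ (the paper folds the Bayes step into writing $\mathbb{P}[\mathrm{M}=\mathrm{A}]=\mathcal{S}_{\textup{ABR}}$ directly, but this is the same computation), and then substitute the expression from $\mathbf{Corollary}$~\ref{corollary1}. Your observation that the fixed backscatter rate $C_{\mathrm{A}}$ removes any SNR integral is exactly the justification the paper relies on.
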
   
   
\begin{proof}
If the relay-to-destination transmission is performed  only through ambient backscattering, we have
\begin{align}\label{eqn:conditions} \mathbb{P}[\mathrm{M}=\mathrm{W}]=0 \quad \text{and} \quad \mathbb{P}[\mathrm{M}=\mathrm{A}]=\mathbb{P} \big [\nu_{\mathrm{R}} \! > \! \tau_{\mathrm{A}} , \nu^{\mathrm{A}}_{\mathrm{D}} \! > \! \tau_{\mathrm{A}}, E_{\mathrm{R}}  \!>\! E_{\mathrm{A}}  \big ]=\mathcal{S}_{\textup{ABR}},
\end{align}
By assigning (\ref{eqn:conditions}) into the definition in (\ref{def:ergodic_capacity}), 
we have the corresponding ergodic capacity as: 
\begin{align} \label{eqn:capacity_ABR}
\mathcal{C}_{\textup{ABR}} = \frac{1-\omega}{2} C_{\mathrm{A}} \mathcal{S}_{\textup{ABR}},
\end{align}
where $\mathcal{S}_{\textup{ABR}}$ has been obtained in $\textbf{Corollary}$ \ref{corollary1}. 

Subsequently, $\mathcal{C}_{\textup{ABR}}$ in (\ref{Coro:capacity_ABR})  directly yields by inserting  $\mathcal{S}_{\textup{ABR}}$ in (\ref{eqn:theorem_ABR}) into (\ref{eqn:capacity_ABR}).
\end{proof}

Moreover, if $\mathrm{R}$ performs relaying with only the wireless-powered transmission, we have the capacity of the pure WPR in the following corollary.  
   
\begin{corollary} \label{corollary5} 
The capacity of the pure WPR 
is
\begin{align} \label{eqn:QI_WPR} 
& \hspace{-2mm} \mathcal{C}_{\textup{WPR}} \! = \!  \frac{W(1\!-\!\omega)}{2\ln(2)} \! \int^{\infty}_{\tau_{\mathrm{W}}} \! \frac{  
 \exp \! \big( \! -   \kappa (v)   \sigma^2      \big)
 }{1+v} \! \bigg( \!  
\chi_{\mathbb{A}}(v,\rho_{C})  
\Big( 1 \!- \! 
F_{Q_{\mathrm{R}}}
\big(  \varrho_{\mathrm{W}}\!+\!\varrho_C \big)  \! \Big)
\nonumber \\
& \hspace{90mm} + \!\!\!  \int^{ \varrho_{\mathrm{W}}+\varrho_C }_{\! \varrho_{\mathrm{C}}}  \!\! \!  
\chi_{\mathbb{A}}(v,\omega \beta q \! - \! \rho_{\mathrm{W}}) f_{Q_{\mathrm{R}}}(q) \mathrm{d} q \! \bigg) 
  \mathrm{d} v,
\end{align}
where $\chi_{\mathbb{A}}$, $F_{Q_{\mathrm{R}}}(x)$, and $f_{Q_{\mathrm{R}}}(q)$ are given in (\ref{chi1}), (\ref{CDF}), and (\ref{pdf}), respectively.
\end{corollary}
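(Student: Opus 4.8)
The plan is to specialize the general ergodic-capacity definition in (\ref{def:ergodic_capacity}) to the pure-WPR regime, and then to convert the resulting expected logarithm into an integral of a complementary CDF that coincides, threshold by threshold, with the WPR success probability already computed in $\textbf{Corollary}$~\ref{corollary3}. First I would set $\mathbb{P}[\mathrm{M}=\mathrm{A}]=0$ and $\mathbb{P}[\mathrm{M}=\mathrm{W}]=\mathbb{P}[\nu_{\mathrm{R}} > \tau_{\mathrm{W}}, \nu^{\mathrm{W}}_{\mathrm{D}} > \tau_{\mathrm{W}}, E_{\mathrm{R}} > E_{\mathrm{W}}]$, exactly as in the proof of $\textbf{Corollary}$~\ref{corollary3}, so that only the WPR term of (\ref{def:ergodic_capacity}) survives, giving
\begin{align}
\mathcal{C}_{\textup{WPR}} = \frac{(1-\omega)W}{2}\,\mathbb{E}\Big[\log_2(1+\nu)\,\mathbf{1}_{\{\nu > \tau_{\mathrm{W}},\, E_{\mathrm{R}} > E_{\mathrm{W}}\}}\Big], \nonumber
\end{align}
where $\nu=\min(\nu_{\mathrm{R}},\nu^{\mathrm{W}}_{\mathrm{D}})$ is the decode-and-forward bottleneck SINR.

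Next I would apply the layer-cake identity $\log_2(1+\nu)=\frac{1}{\ln 2}\int_0^{\nu}\frac{\mathrm{d}v}{1+v}$ and exchange expectation and integration by Tonelli's theorem (all integrands are nonnegative). Since below $\tau_{\mathrm{W}}$ the WPR transmission is declared unsuccessful, the threshold sets the lower limit and one obtains
\begin{align}
\mathcal{C}_{\textup{WPR}} = \frac{(1-\omega)W}{2\ln 2}\int_{\tau_{\mathrm{W}}}^{\infty}\frac{\mathbb{P}[\nu > v,\, E_{\mathrm{R}} > E_{\mathrm{W}}]}{1+v}\,\mathrm{d}v. \nonumber
\end{align}
This is the representation matching the stated closed form; I would make explicit that taking $\tau_{\mathrm{W}}$ rather than $0$ as the lower limit is the convention under which the capacity equals the integrated rate above the decoding threshold.

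The final and most substantive step reuses the heavy machinery rather than redoing it: I would observe that $\mathbb{P}[\nu > v, E_{\mathrm{R}} > E_{\mathrm{W}}]$ is structurally identical to $\mathcal{S}_{\textup{WPR}}$ of $\textbf{Corollary}$~\ref{corollary3} with the fixed threshold $\tau_{\mathrm{W}}$ replaced by the integration variable $v$. The threshold enters that derivation only through the source-to-relay factor $\exp(-\kappa(v)\sigma^2)$ and through the destination-side Fredholm-determinant term $\chi_{\mathbb{A}}(v,\cdot)$, both carrying the correlated $\alpha$-GPP interference from $\Phi$ shared by the two hops. The piecewise transmit power $P^{\mathrm{W}}_{\mathrm{R}}$ in (\ref{eqn:AR_R_dual}) again splits the $Q_{\mathrm{R}}$-average into the saturated branch $Q_{\mathrm{R}} > \varrho_{\mathrm{W}}+\varrho_C$, producing $\chi_{\mathbb{A}}(v,\rho_C)\big(1-F_{Q_{\mathrm{R}}}(\varrho_{\mathrm{W}}+\varrho_C)\big)$, and the linear branch $\varrho_{\mathrm{W}} < Q_{\mathrm{R}} \le \varrho_{\mathrm{W}}+\varrho_C$, producing $\int \chi_{\mathbb{A}}(v,\omega\beta q-\rho_{\mathrm{W}})f_{Q_{\mathrm{R}}}(q)\,\mathrm{d}q$. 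Substituting this generalized success probability and collecting the $\frac{(1-\omega)W}{2\ln 2}$ prefactor yields (\ref{eqn:QI_WPR}). I expect the main obstacle to lie not in the algebra but in justifying this substitution: one must confirm that replacing $\tau_{\mathrm{W}}$ by $v$ is legitimate inside the joint event $\{\min(\nu_{\mathrm{R}},\nu^{\mathrm{W}}_{\mathrm{D}}) > v\}$, where $\nu_{\mathrm{R}}$ and $\nu^{\mathrm{W}}_{\mathrm{D}}$ are correlated through the common interferer process $\Phi$. This is precisely the correlated-interference computation already carried out via the determinantal Laplace-functional identity underlying $\chi_{\mathbb{A}}$ and the harvested-energy distribution of the preliminary proposition, so no new point-process estimate is required and the result follows by threshold-wise specialization.
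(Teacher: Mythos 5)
Your proposal is correct and follows essentially the same route as the paper: the paper likewise specializes the definition in (\ref{def:ergodic_capacity}) by setting $\mathbb{P}[\mathrm{M}=\mathrm{A}]=0$, reduces $\mathcal{C}_{\textup{WPR}}$ to $\frac{(1-\omega)}{2}\mathbb{E}[W\log_2(1+\nu)\mathbf{1}_{\{\nu_{\mathrm{R}}>\tau_{\mathrm{W}},\,\nu^{\mathrm{W}}_{\mathrm{D}}>\tau_{\mathrm{W}},\,E_{\mathrm{R}}>E_{\mathrm{W}}\}}]$, and evaluates this expectation via the layer-cake identity followed by reuse of the $\mathcal{S}^{\textup{ESAP}}_{\mathrm{W}}$ derivation with $\tau_{\mathrm{W}}$ replaced by $v$ (equation (\ref{eqn:C_A_QI}) in Appendix D), which is exactly your threshold-wise specialization of \textbf{Corollary}~\ref{corollary3}. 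Your explicit remark about the lower integration limit $\tau_{\mathrm{W}}$ and the correct use of $\chi_{\mathbb{A}}(v,\cdot)$ inside the integrand are consistent with the stated result.
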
   
   
\begin{proof}
From the proof of $\textbf{Corollary}$~\ref{corollary5}, we have 
the conditions that $\mathrm{R}$ only performs 
WPR
as:   
\begin{align} \label{eqn:condition2}  \mathbb{P}\big[\mathrm{M}=\mathrm{W}\big] =  \mathbb{P} \Big[\nu_{\mathrm{R}} \!>\! \tau_{\mathrm{W}}, E_{\mathrm{R}} \! > \! E_{\mathrm{W}}, \nu^{\mathrm{W}}_{\mathrm{D}} \! > \! \tau_{\mathrm{W}}  \Big] \quad \text{and} \quad \mathbb{P}\big[\mathrm{M}=\mathrm{A}\big] = 0.   
\end{align} 
By inserting (\ref{eqn:condition2}) into 
(\ref{def:ergodic_capacity}), we have   
\begin{align} \label{eqn:QI_Capacity_WPR}
\mathcal{C}_{\textup{WPR}} = \frac{(1-\omega)}{2}  \mathbb{E} \Big[ W \log_{2} (1 + \nu )  \mathbf{1}_{\{\nu_{\mathrm{R}} > \tau_{\mathrm{W}}, \nu^{\mathrm{W}}_{\mathrm{D}} >  \tau_{\mathrm{W}}, E_{\mathrm{R}} >  E_{\mathrm{W}}  \}} \Big]    ,  
\end{align}
where  $\mathbb{E} \Big[ W \log_{2} (1 + \nu )  \mathbf{1}_{\{\nu_{\mathrm{R}} > \tau_{\mathrm{W}}, \nu^{\mathrm{W}}_{\mathrm{D}} >  \tau_{\mathrm{W}}, E_{\mathrm{R}} >  E_{\mathrm{W}}  \}} \Big] $  has been obtained in (\ref{eqn:C_A_QI}). We, therefore, have $\mathcal{C}_{\textup{WPR}}$ in (\ref{eqn:QI_WPR}) by plugging (\ref{eqn:C_A_QI}) into (\ref{eqn:QI_Capacity_WPR}).
\end{proof}      
  
\subsection{General-Case Results for ETCP}   
\begin{theorem}\label{the:4}
The ergodic capacity of the hybrid relaying with ETCP at the steady states is    
\begin{equation} \label{thm:EC_ETC}  \mathcal{C}^{\textup{ETCP}}_{\textup{HR}}  =   \frac{1}{2} (\mathcal{C}_{\textup{ABR}}+\mathcal{C}_{\textup{WPR}}) + \frac{1}{2} \Big( 
\phi_{1}(\mathcal{S}_{\textup{WPR}})\phi_{2}(\mathcal{S}_{\textup{ABR}}) - \phi_{1}(\mathcal{S}_{\textup{ABR}})\phi_{2}(\mathcal{S}_{\textup{WPR}}) \Big) (\mathcal{C}_{\textup{WPR}}-\mathcal{C}_{\textup{ABR}}), 
\end{equation} 
where $\mathcal{C}_{\textup{ABR}}$, $\mathcal{C}_{\textup{WPR}}$, $\mathcal{S}_{\textup{ABR}}$, and $\mathcal{S}_{\textup{WPR}}$ have been obtained in (\ref{Coro:capacity_ABR}), (\ref{eqn:QI_WPR}), (\ref{eqn:theorem_ABR}), and (\ref{eqn:theorem_WPR}), respectively.
\end{theorem}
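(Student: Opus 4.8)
The plan is to reduce the statement to the steady-state mode-selection probabilities, which were already characterised in the proof of $\mathbf{Theorem}$~\ref{thm:SP_ETC}, and then attach the pure-mode capacities in place of the pure-mode success probabilities. Once the exploration period of $2n$ slots ends, ETCP has committed to a single mode and its long-run per-slot operation coincides exactly with the pure WPR (resp.\ pure ABR) scheme analysed in $\mathbf{Corollary}$~\ref{corollary5} (resp.\ $\mathbf{Corollary}$~\ref{corrollary4}). Averaging over which mode is committed, the steady-state ergodic capacity is therefore the convex combination
\[
\mathcal{C}^{\textup{ETCP}}_{\textup{HR}} = \mathbb{P}\big[\mathrm{M}=\mathrm{W}\big]\,\mathcal{C}_{\textup{WPR}} + \mathbb{P}\big[\mathrm{M}=\mathrm{A}\big]\,\mathcal{C}_{\textup{ABR}},
\]
where $\mathbb{P}[\mathrm{M}=\mathrm{W}]$ and $\mathbb{P}[\mathrm{M}=\mathrm{A}]$ are the probabilities that ETCP commits to WPR and ABR, respectively. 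The key point is that these commitment probabilities depend only on the exploration-phase success counts $N_{\mathrm{WPR}}$ and $N_{\mathrm{ABR}}$, so they are identical to those in $\mathbf{Theorem}$~\ref{thm:SP_ETC}; only the payoff carried by each mode changes from a success probability to an ergodic capacity.

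First I would establish the law of the success counts. Since the relay runs each mode over $n$ distinct slots with i.i.d.\ block Rayleigh fading, each slot is an independent Bernoulli trial succeeding with probability $\mathcal{S}_{\textup{WPR}}$ from $\mathbf{Corollary}$~\ref{corollary3} (resp.\ $\mathcal{S}_{\textup{ABR}}$ from $\mathbf{Corollary}$~\ref{corollary1}), whence $N_{\mathrm{WPR}}\sim\mathrm{Bin}(n,\mathcal{S}_{\textup{WPR}})$ and $N_{\mathrm{ABR}}\sim\mathrm{Bin}(n,\mathcal{S}_{\textup{ABR}})$ are independent. Conditioning on $N_{\mathrm{WPR}}=i\ge 1$ and summing over $N_{\mathrm{ABR}}=i-j$ with $j\ge 1$ gives
\[
\mathbb{P}\big[N_{\mathrm{WPR}}>N_{\mathrm{ABR}}\big] = \phi_{1}\big(\mathcal{S}_{\textup{WPR}}\big)\,\phi_{2}\big(\mathcal{S}_{\textup{ABR}}\big),
\]
read with the nested-sum convention in which the outer summation index $i$ of $\phi_{1}$ serves as the upper limit of the inner sum $\phi_{2}$; by symmetry $\mathbb{P}[N_{\mathrm{ABR}}>N_{\mathrm{WPR}}]=\phi_{1}(\mathcal{S}_{\textup{ABR}})\phi_{2}(\mathcal{S}_{\textup{WPR}})$.

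Next I would incorporate the uniform tie-breaking. Using $\mathbb{P}[N_{\mathrm{WPR}}>N_{\mathrm{ABR}}]+\mathbb{P}[N_{\mathrm{ABR}}>N_{\mathrm{WPR}}]+\mathbb{P}[N_{\mathrm{WPR}}=N_{\mathrm{ABR}}]=1$ together with the $\tfrac{1}{2}$ split on ties yields $\mathbb{P}[\mathrm{M}=\mathrm{W}]=\tfrac{1}{2}+\tfrac{1}{2}\big(\phi_{1}(\mathcal{S}_{\textup{WPR}})\phi_{2}(\mathcal{S}_{\textup{ABR}})-\phi_{1}(\mathcal{S}_{\textup{ABR}})\phi_{2}(\mathcal{S}_{\textup{WPR}})\big)$ and $\mathbb{P}[\mathrm{M}=\mathrm{A}]=1-\mathbb{P}[\mathrm{M}=\mathrm{W}]$. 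Substituting these into the convex combination above and grouping the symmetric term $\tfrac{1}{2}(\mathcal{C}_{\textup{ABR}}+\mathcal{C}_{\textup{WPR}})$ reproduces (\ref{thm:EC_ETC}) verbatim, by linearity of expectation and with $\mathcal{C}_{\textup{ABR}},\mathcal{C}_{\textup{WPR}}$ taken from $\mathbf{Corollary}$~\ref{corrollary4} and $\mathbf{Corollary}$~\ref{corollary5}.

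I expect the main obstacle to be the modelling justification rather than the algebra: I must argue that the finite $2n$-slot exploration phase contributes negligibly to the long-run capacity so that the committed-mode capacity is genuinely $\mathcal{C}_{\textup{WPR}}$ or $\mathcal{C}_{\textup{ABR}}$, and that successive exploration slots are truly independent Bernoulli trials even though the interferer locations are static within a slot---this is exactly where the per-slot i.i.d.\ fading assumption is indispensable. Granting these, the capacity result is inherited step-for-step from $\mathbf{Theorem}$~\ref{thm:SP_ETC} by swapping the mode payoffs $\mathcal{S}_{\textup{WPR}},\mathcal{S}_{\textup{ABR}}$ for $\mathcal{C}_{\textup{WPR}},\mathcal{C}_{\textup{ABR}}$.
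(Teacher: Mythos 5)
Your proposal is correct and follows essentially the same route as the paper: the paper likewise writes the steady-state capacity as the mode-commitment-weighted combination of $\mathcal{C}_{\textup{WPR}}$ and $\mathcal{C}_{\textup{ABR}}$ (using the independence of the committed mode from the current slot's channel state), recalls $\mathbb{P}[N_{\textup{WPR}}>N_{\textup{ABR}}]$ and $\mathbb{P}[N_{\textup{ABR}}>N_{\textup{WPR}}]$ from the binomial computation in the proof of Theorem~\ref{thm:SP_ETC}, and performs the same tie-breaking algebra to reach (\ref{thm:EC_ETC}). The only cosmetic difference is that you re-derive the commitment probabilities rather than citing (\ref{P_METCP_B}) and (\ref{P_METCP_A}).
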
    
\begin{proof}
Recall that, from the proof of Theorem \ref{thm:SP_ETC}, we have obtained the probability of $\mathrm{R}$ selecting the ABR mode and WPR mode under ETCP at the steady states 
in (\ref{P_METCP_B}) and (\ref{P_METCP_A}), respectively. According to the mode selection criteria of ETCP described in Section \ref{sec:network_model}, the ergodic capacity of the hybrid relaying with ETCP at the steady states can be expressed as 
\begin{equation} \label{EC_ETCP}
\mathcal{C}^{\textup{ETCP}}_{\textup{HR}}=   \frac{1}{2}( \mathcal{C}_{\textup{ABR}}+\mathcal{C}_{\textup{WPR}} ) + \frac{1}{2} ( \mathbb{P} [N_{\textup{WPR}} >  N_{\textup{ABR}} ] - \mathbb{P} [N_{\textup{ABR}} >  N_{\textup{WPR}} ]  )(\mathcal{C}_{\textup{WPR}} - \mathcal{C}_{\textup{ABR}} ), 
\end{equation}
By inserting the expressions of $\mathbb{P} [ N_{\textup{ABR}} >  N_{\textup{WPR}}]$, $\mathbb{P} [ N_{\textup{WPR}} >  N_{\textup{ABR}}]$, $\mathcal{C}_{\textup{ABR}}$, and $\mathcal{C}_{\textup{WPR}}$ obtained in (\ref{P_METCP_B}), (\ref{P_METCP_A}), (\ref{Coro:capacity_ABR}), and (\ref{eqn:QI_WPR}), respectively, into (\ref{EC_ETCP}), we have the expression of $\mathcal{C}^{\textup{ETCP}}_{\textup{HR}}$ in (\ref{thm:EC_ETC}).
\end{proof}


\section{Numerical Results}

In this section, we show numerical results to validate and evaluate the success probabilities and ergodic capacity of the hybrid relaying system analyzed in Section~\ref{sec:SP}. 
To demonstrate the advantage of the proposed hybrid relaying with the mode selection protocols,
we compare their performance with that of the pure WPR, and the pure ABR. 
The performance results of the hybrid relaying with ESAP and those with ETCP, the pure WPR and the pure ABR are labeled as ``HR-EASP", ``HR-ETCP", ``WPR" and ``ABR", respectively. 
In the simulation, the ambient emitters $\Psi$ and interferers $\Phi$ are distributed on a circular disc of radius $R = 500$ m with the relay node $\mathrm{R}$ centered at the origin. Besides, the source node $\mathrm{S}$ and destination node $\mathrm{D}$ are placed at (-$d_{\mathrm{S}, \mathrm{R}}$,0) and ($d_{\mathrm{R}, \mathrm{D}}$,0), respectively.    
$\Psi$ and $\Phi$ are considered as base stations and sensor devices with transmit power $\widetilde{P}_{T}=40$ dBm and $P_{T}=20$ dBm, respectively.  The frequency bandwidth of $\Psi$ and $\Phi$ are set as 20 MHz~\cite{B2014Kellogg} and 50 kHz~\cite{C.2010Gomez}, respectively. The noise variance is -120 dBm/Hz. We set the transmit power of the source node as $P_{\mathrm{S}}=P_{T}$.
If the WPR is adopted, the average circuit power consumption rate of 
$\mathrm{R}$ 
is set at $\rho_{\mathrm{W}}=50\mu W$, which is within the typical power consumption range of a wireless-powered transmitter~\cite{X.2016Lu}.   For the ABR, we set $\rho_{\mathrm{A}}=5\mu W$, an order of magnitude smaller than $\rho_{\mathrm{W}}$. The normalized capacitor capacity $\rho_{C}$ is $0.02$ Joules/second.
The other system parameters adopted are listed in Table \ref{parameter_setting} unless otherwise stated.


\begin{table*}[htp]
 \centering
 \caption{\footnotesize Parameter Setting.} \label{parameter_setting}
 \begin{tabular}{|l|l|l|l|l|l|l|l|l|l|l|l|l|l|l|l|} 
 \hline
 Symbol & 
 $\alpha$  
 & $\widetilde{\alpha}$  &  $\mu $ & $\widetilde{\mu}$ & $d_{\mathrm{S},\mathrm{R}}$, $d_{\mathrm{R},\mathrm{D}}$  & 
  $\tau_{\mathrm{W}} $ & 
  $\tau_{\mathrm{A}}$   & $\eta$ & $\beta$   & $\xi$   & $\omega$  & $ C_{\mathrm{A}}$ \\ 
 \hline 
 Value 
 & -0.5 
 & -1  & 3.5  & 3.0 & 5 m &    0 dB 
  &  20 dB  
  & 0.375 & 0.5 & 0.25  & 0.4  & 50 kbps
      \\
 \hline 
\end{tabular}
\end{table*}

\subsubsection{SINR Threshold on Success Probability}
In Fig.~\ref{fig:tauAB}, the success probabilities $\mathcal{S}^{\textup{ESAP}}_{\textup{HR}}$ and $\mathcal{S}^{\textup{ETCP}}_{\textup{HR}}$ obtained in $\mathbf{Theorem}$ \ref{thm:QSI_SP_HR} and $\mathbf{Theorem}$ \ref{thm:SP_ETC}, respectively, are shown as functions of the SINR threshold $\tau_{\mathrm{W}}$. To demonstrate the accuracy of the analytical expressions, we compare them with the results generated by Monte Carlo simulations. It can be seen that for both $\mathcal{S}^{\textup{ESAP}}_{\textup{HR}}$ and $\mathcal{S}^{\textup{ETCP}}_{\textup{HR}}$, the analytical results match closely with the simulation results over a wide range of $\tau_{\mathrm{W}}$ and $\tau_{\mathrm{A}}$. 
Fig.~\ref{fig:ETCP_tauA} depicts $\mathcal{S}^{\textup{ETCP}}_{\textup{HR}}$ under different settings of $n$. 
For comparison, we also show the success probability of uniform random mode selection, i.e., $\mathcal{S}^{\textup{URMS}}_{\textup{HR}}=\frac{1}{2}(\mathcal{S}_{\textup{ABR}}+\mathcal{S}_{\textup{WPR}})$, labeled as ``URMS". 
It can be found that $\mathcal{S}_{\textup{ETCP}}$ under different settings of $n$ outperforms all $\mathcal{S}^{\textup{URMS}}_{\textup{HR}}$, which agrees with $\mathbf{Remark }$ $\mathbf{5}$.
Moreover, $\mathcal{S}_{\textup{ETCP}}$ monotonically increases with $n$. This is due to the fact that the more number of the time slots to explore, the higher chance the hybrid relay finds the averagely better-performed mode. 
We can see that the performance gap resulted from the increase of $n$ decreases when $n$ is large.
In the following simulations, the value of $n$ is set as 5 to avoid a lengthy exploration period.




 \begin{figure} 
 \centering
  \subfigure [ESAP\vspace{0mm}]
   {
 \label{fig:ESAP_tauA}
  \centering   
  \includegraphics[width=0.48 \textwidth]{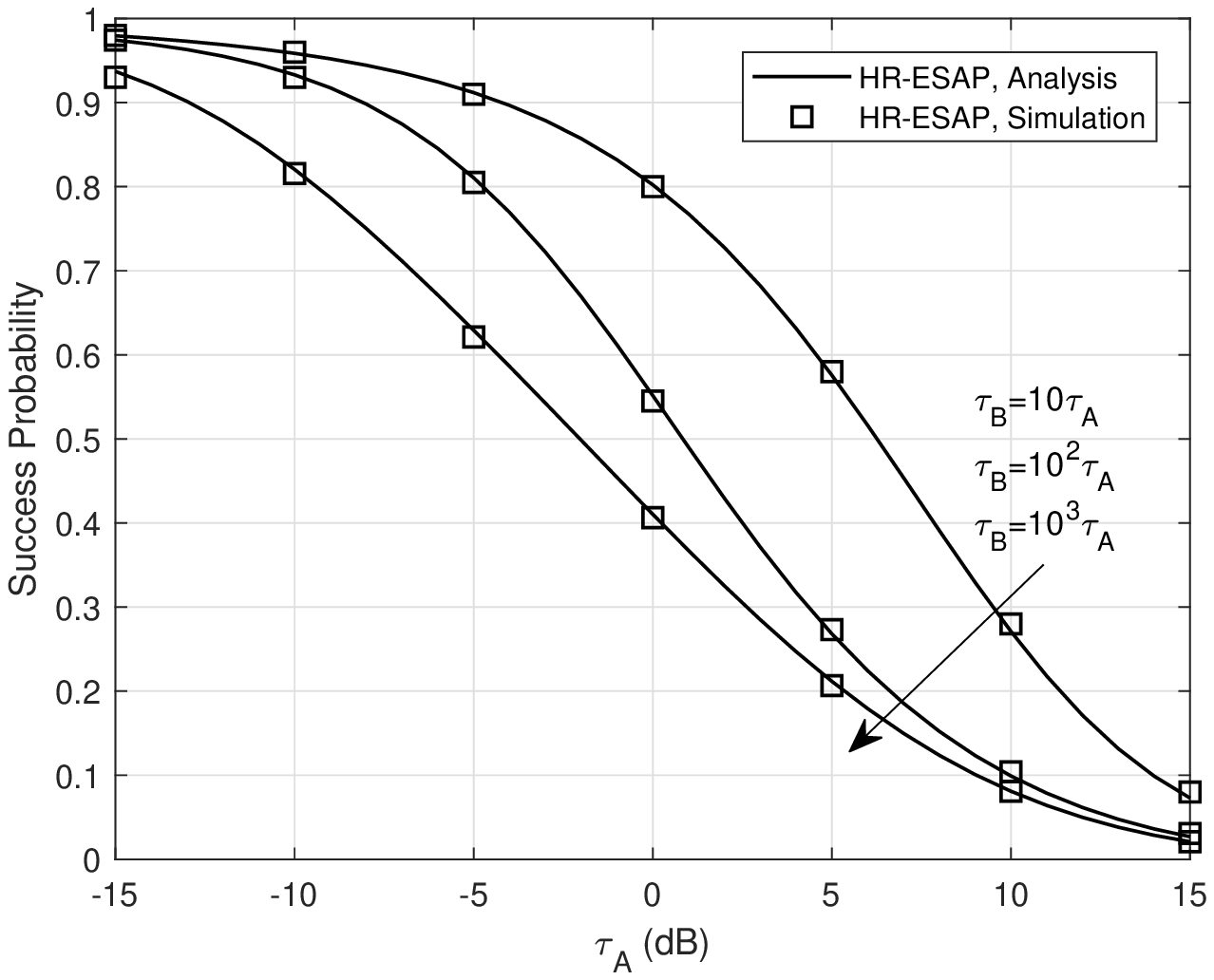}} 
  \centering  
  \subfigure  [ 
 ETCP ($\tau_{\mathrm{A}}=10\tau_{\mathrm{W}} $) ] {
 \label{fig:ETCP_tauA}
  \centering
 \includegraphics[width=0.48 \textwidth]{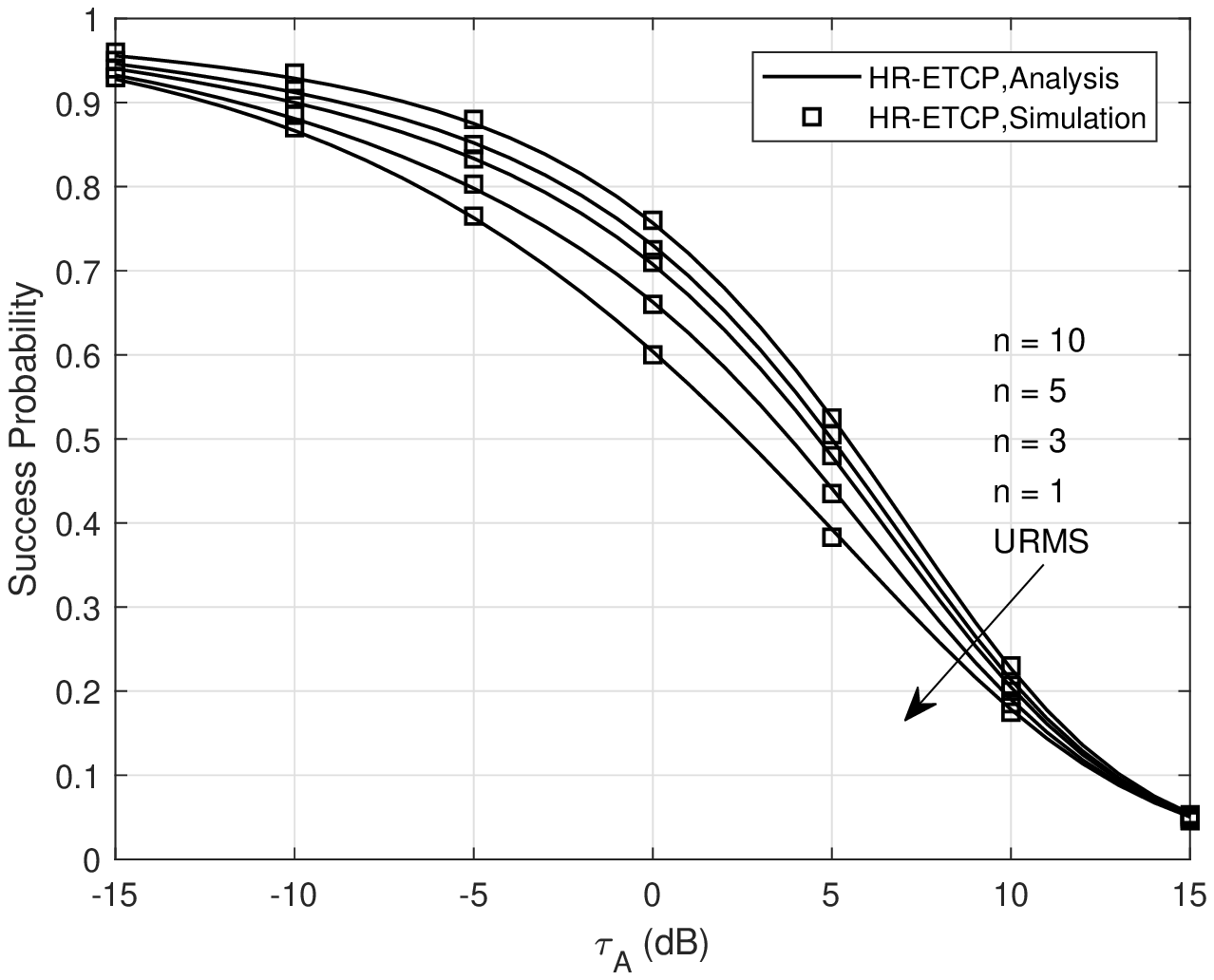}}
 \caption{Success probability as a function of $\tau_{\mathrm{W}}$.   } 
 \centering
 \label{fig:tauAB}
 \end{figure}
 
 \begin{figure}  
 \centering 
  \begin{minipage}[c]{0.48\textwidth}
  \includegraphics[width=0.95\textwidth]{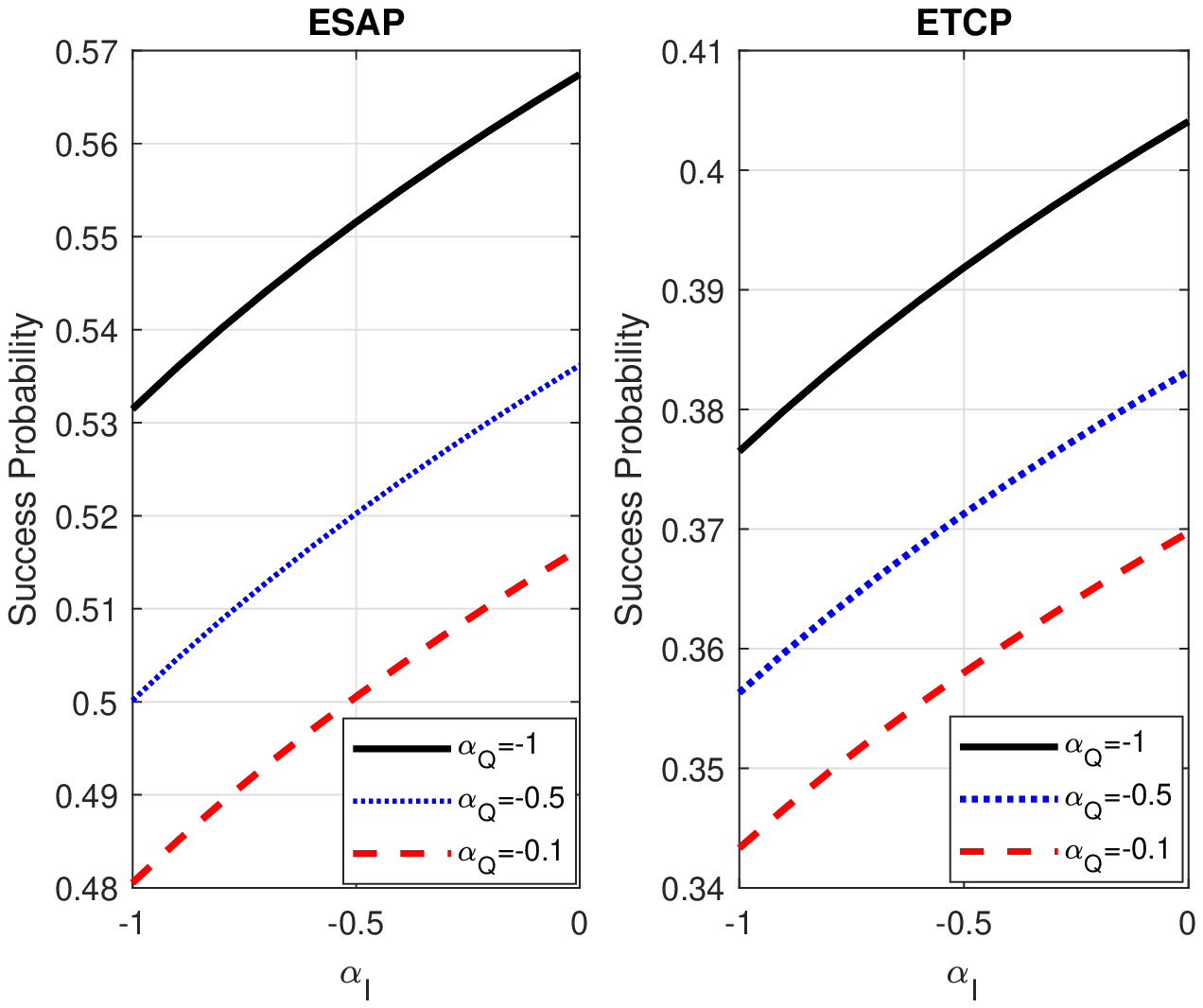} 
  \caption{Success probability as a function of $\alpha$. 
  } \label{fig:SP_alpha}
  \end{minipage} 
   \begin{minipage}[c]{0.49\textwidth}
   \includegraphics[width=0.95\textwidth]{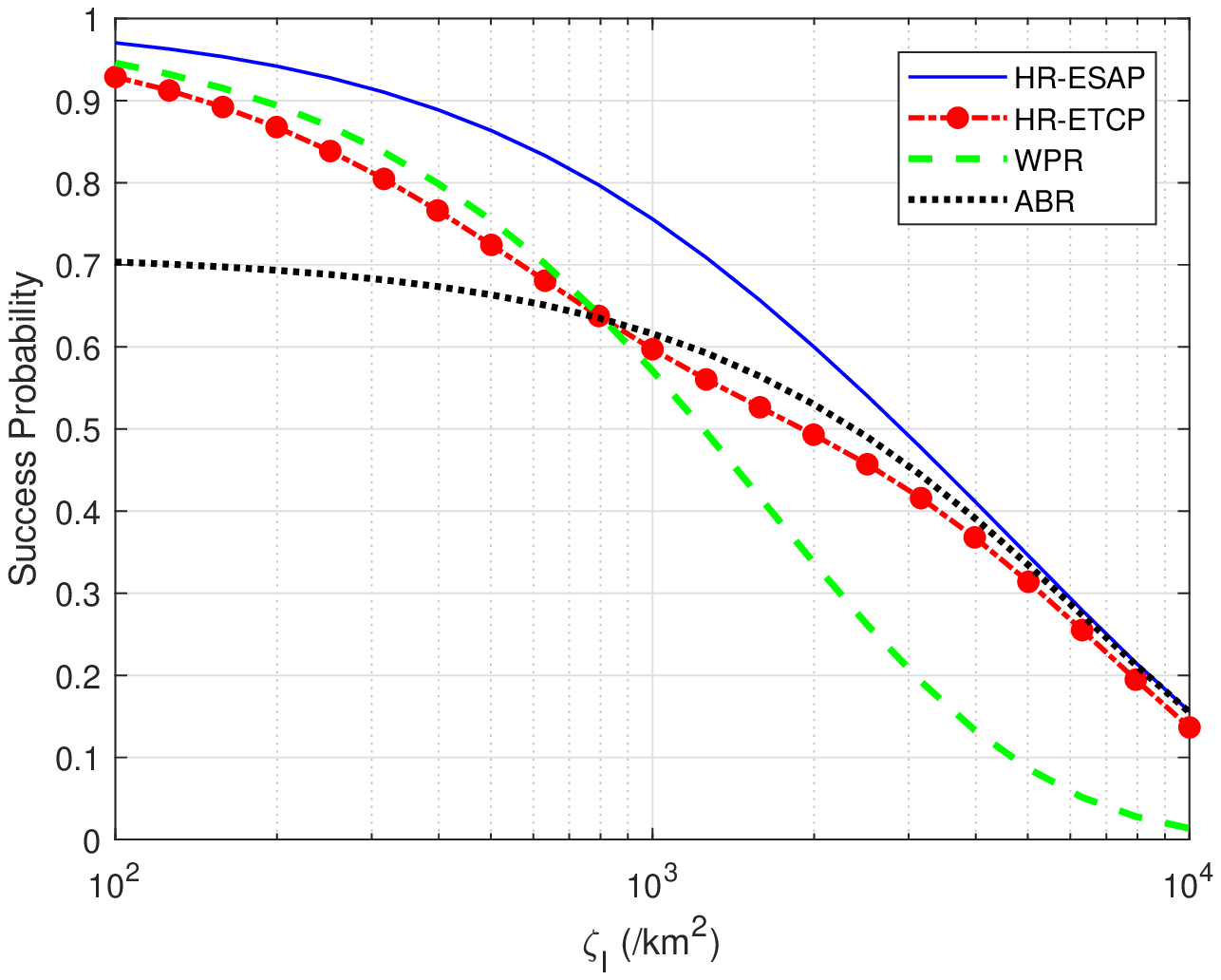}  
   \caption{Success probability as a function of $\zeta$.   ($\widetilde{\zeta}=2000$/km$^2$) }\label{fig:SP_Rho_I} 
   \end{minipage}
  \end{figure}


\subsubsection{Impact of System Environment (i.e., repulsion factors $\widetilde{\alpha}$ and $\alpha$ and densities $\widetilde{\zeta}$ and $\zeta$ of $\Psi$ and $\Phi$) on Success                 Probability}
Fig.~\ref{fig:SP_alpha} shows the success probabilities $\mathcal{S}^{\textup{ESAP}}_{\textup{HR}}$ and $\mathcal{S}^{\textup{ETCP}}_{\textup{HR}}$ as functions of $\alpha$ under different $\widetilde{\alpha}$. 
It can be found that greater repulsion among the ambient emitters $\Psi$, i.e., smaller $\widetilde{\alpha}$, increases the success probabilities.
By contrast, greater repulsion among the interferers $\Phi$ decreases the success probabilities.
This comes from the fact that, given the spatial density, larger repulsion among the transmitters in $\Psi$ ($\Phi$)
results in the higher probability that some transmitters in $\Psi$ ($\Phi$) locate near $\mathrm{R}$, and thus stronger received signals from $\Psi$ ($\Phi$). As a result, smaller $\widetilde{\alpha}$ leads to more carrier signals for energy harvesting and smaller $\alpha$ generates more interference at $\mathrm{R}$. 

Fig.~\ref{fig:SP_Rho_I} studies how the success probabilities vary under different density of interferers $\zeta$. As expected, the success probabilities monotonically deceases with $\zeta$. 
We observe that 
the hybrid relaying with ESAP achieves a higher success probability 
than that of the pure ABR and the pure WPR, 
which corroborates the $\mathbf{Remark}$ $\mathbf{1}$  and $\mathbf{Remark}$ $\mathbf{3}$, 
respectively. 
Given the knowledge of receive SINR at the destination node, ESAP can switch the hybrid relay to the ABR 
mode when the detected interference is high. Therefore, 
the hybrid relaying with ESAP outperforms the pure WPR 
and achieves comparable performance with the pure ABR 
when $\zeta$ is large, e.g.,  $\zeta=10^4$/km$^2$, as shown in Fig.~\ref{fig:SP_Rho_I}. 
For the hybrid relaying with ETCP, 
$\mathcal{S}^{\textup{ETCP}}_{\textup{HR}}$ approaches the better-performed one than the worse-performed one in most conditions, which demonstrates the effectiveness of the exploration period in determining the better-performed mode.  
For the pure ABR, it is worth noting that, though the interference on the transmit frequency of $\Phi$ does not affect the 
ABR 
link on the transmit frequency of $\Psi$, the interference still affects the 
source-to-relay link and thus $\mathcal{S}_{\textup{ABR}}$. 
Since only the source-to-relay link is affected, $\mathcal{S}_{\textup{ABR}}$ 
is more robust to the impact of increased interference than  $\mathcal{S}_{\textup{WPR}}$. 
This is evident from Fig.~\ref{fig:SP_Rho_I} that $\mathcal{S}_{\textup{ABR}}$ decreases with the increase of $\rho_{I}$ at a much slower rate than that of $\mathcal{S}_{\textup{WPR}}$.

Fig.~\ref{fig:SP_Rho_Q} examines the impact of the density of the ambient emitters $\widetilde{\zeta}$. In contrast to the influence of $\zeta$, the increase of $\widetilde{\zeta}$ augments the success probabilities of all types of relaying. The reason is that a larger $\widetilde{\zeta}$ increases not only the 
harvested energy at $\mathrm{R}$ for the circuit operation but also the transmit power of $\mathrm{R}$, either for wireless-powered transmission or ambient backscattering.
When $\widetilde{\zeta}$ is relatively large, (e.g., above $2000$/km$^2$), 
the pure ABR outperforms
the pure WPR. 
The reason is that when the signal power from ambient emitters is strong, the transmit power of the 
pure WPR
is largely limited by its capacitor capacity while that of the 
pure ABR
does not have such a limitation. In other words, $P^{\mathrm{W}}_{\mathrm{R}}$ in (\ref{eqn:AR_R_dual}) stops increasing with $\widetilde{\zeta}$ once the capacitor is fully charged. By contrast, $P^{\mathrm{A}}_{\mathrm{R}}$ in (\ref{eqn:MB_R}) keeps increasing with $\widetilde{\zeta}$.
 

\begin{figure} 
\centering
 \begin{minipage}[c]{0.49\textwidth}
 \includegraphics[width=0.95\textwidth]{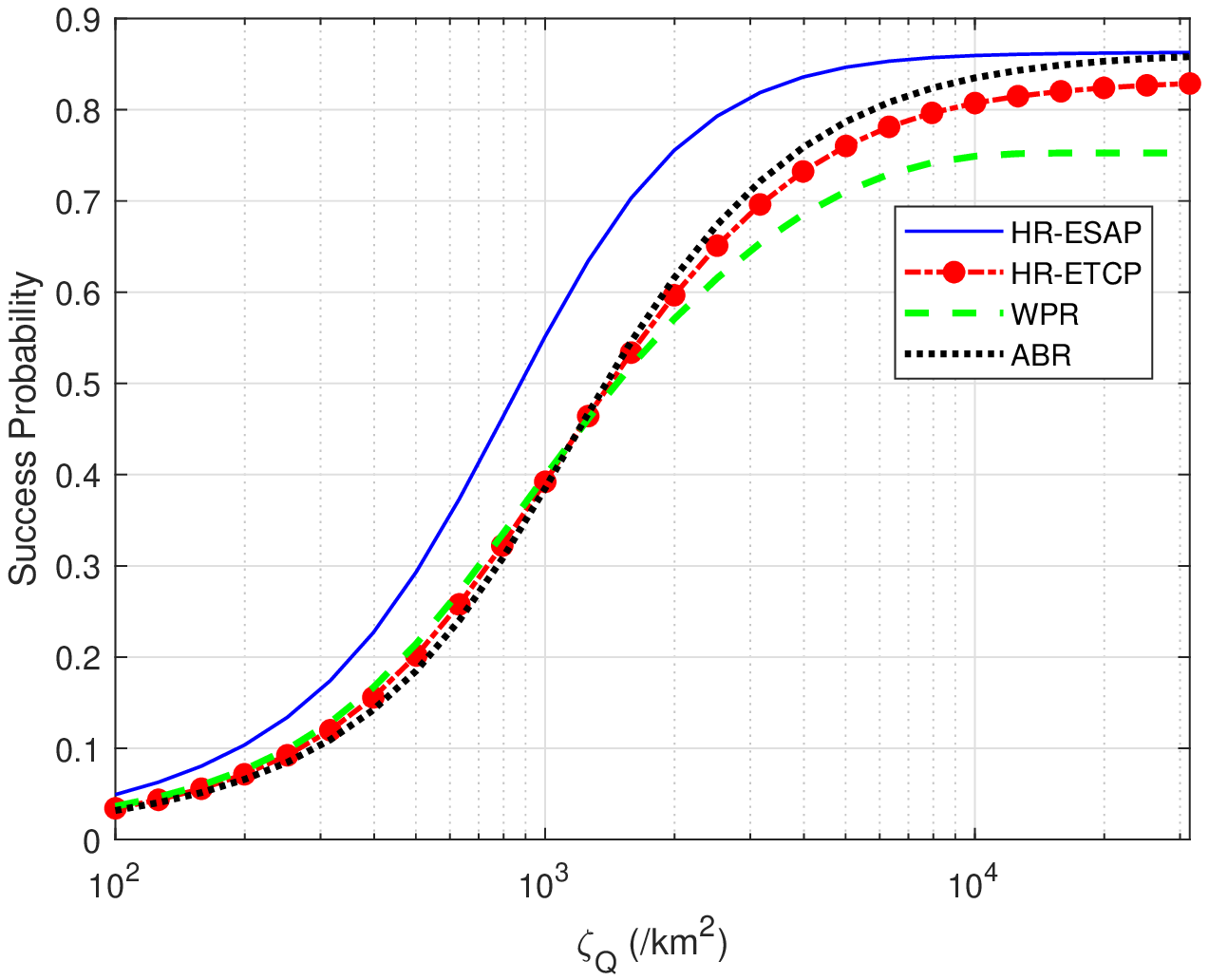} 
 \caption{Success probability as a function of $\widetilde{\zeta}$. 
 } 
 \label{fig:SP_Rho_Q}
 \end{minipage}
 \begin{minipage}[c]{0.48\textwidth}
 \includegraphics[width=0.95\textwidth]{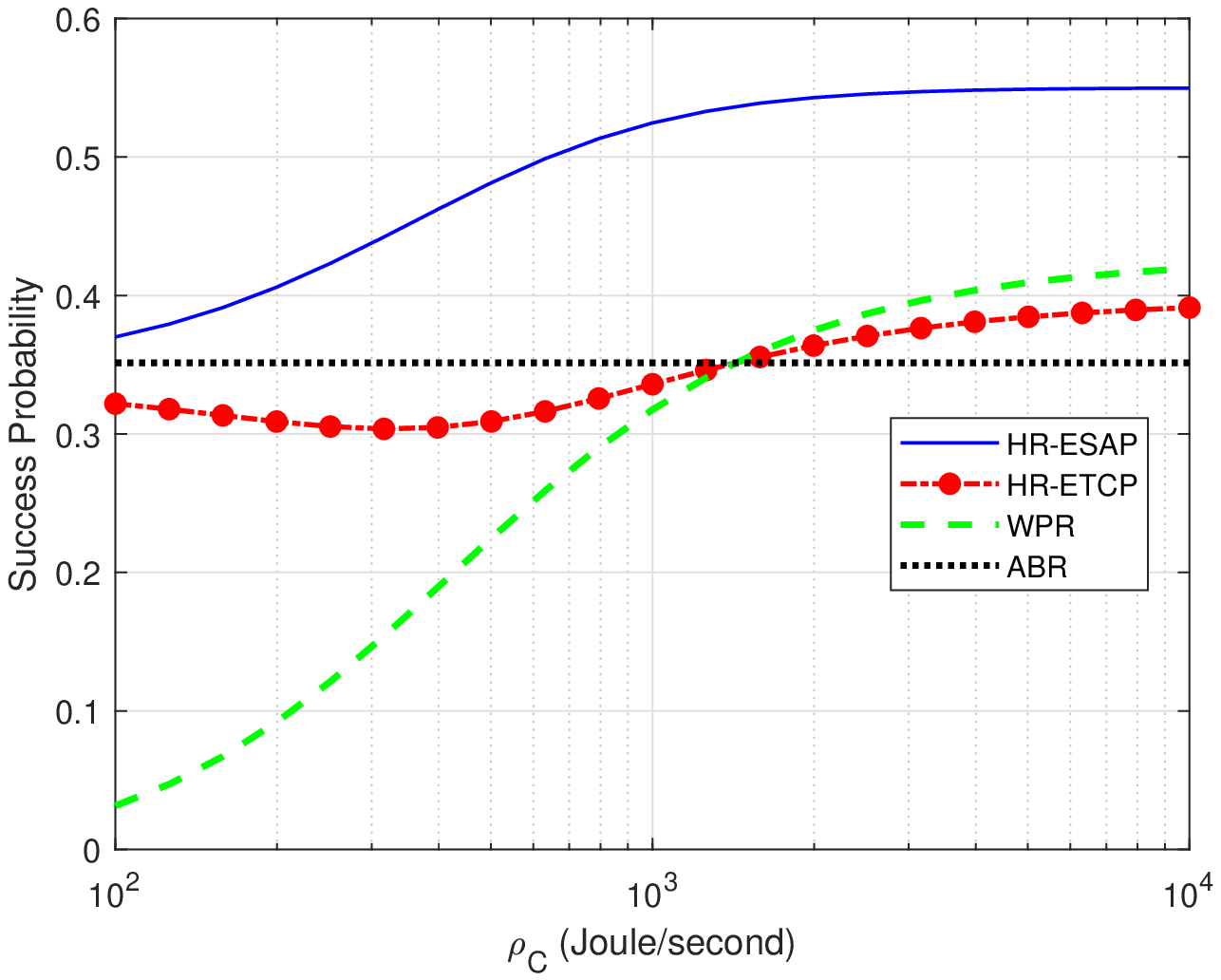}  
 \caption{Success probability as a function of $\rho_{C}$. 
 } \label{fig:P_C} 
 \end{minipage}
 \end{figure}

\subsubsection{Impact of Normalized Capacitor Capacity $\rho_{C}$ on Success Probability}
  
In Fig.~\ref{fig:P_C}, we investigate the impact of the normalized capacitor capacity 
i.e., $\rho_{C}=\frac{E_{C}}{T}$. It can be found that the capacity of the capacitor has a considerable impact on the success probabilities of the hybrid relaying with ESAP and the pure WPR. 
The reason is that the capacitor capacity is directly related to the transmit power of the wireless-powered transmission. 
By contrast, $\mathcal{S}_{\textup{ABR}}$ remains steady with the variation of $\rho_{C}$, as the transmit power of the pure ABR 
is not related to the capacitor capacity. 
As $\mathcal{S}_{\textup{ABR}}$ outperforms $\mathcal{S}_{\textup{WPR}}$ in most of the shown range,  $\mathcal{S}^{\textup{ETCP}}_{\textup{HR}}$ approaches $\mathcal{S}_{\textup{ABR}}$ closely due to the exploration process, and thus less affected by $\rho_{C}$ compared to $\mathcal{S}^{\textup{ESAP}}_{\textup{HR}}$ and $\mathcal{S}_{\textup{WPR}}$.
We note that both 
the success probabilities of the hybrid relaying with ESAP and  
the pure WPR
are saturated when $\varrho_{C}$ becomes large. This implies that it is not necessary to equip an capacitor with oversized capacity 
for the hybrid relaying and the pure WPR. In practice, the capacitor capacity can be properly chosen to achieve a certain objective of success probability in the target network environment.  
Furthermore, by integrating ambient backscattering, the hybrid relaying with either ESAP or ETCP can relieve the requirement on capacitor capacity compared with the pure WPR. 

\subsubsection{Effect of Energy Harvesting Time Fraction $\omega$ on Ergodic Capacity}
Fig.~\ref{fig:capacity_omega} examines the impact of the energy harvesting time fraction $\omega$ of the relaying protocol on the ergodic capacity performance. 
We first validate the analytical expressions of $\mathcal{C}^{\textup{ESAP}}_{\textup{HR}}$, $\mathcal{C}^{\textup{ETCP}}_{\textup{HR}}$, $\mathcal{C}_{\textup{ABR}}$, and $\mathcal{C}_{\textup{WPR}}$ obtained in $\textbf{Theorem}$~\ref{the:3}, $\textbf{Theorem}$~\ref{the:4}, $\textbf{Corollary}$~\ref{corollary3},  and $\textbf{Corollary}$~\ref{corrollary4}, respectively.
It can be seen that our analytical results of ergodic capacity well match the Monte Carlo simulation results over a wide range of $\omega$. In terms of the ergodic capacity, the hybrid relaying with ESAP still achieves the higher ergodic capacity than those of the pure ABR and the pure WPR. 
We can observe that the plots of the hybrid relaying with ESAP and 
the pure WPR 
are unimodal functions of $\omega$ within the shown range. This reveals that there can be an optimal value of $\omega$ to maximize the ergodic capacity of the hybrid relaying and the pure WPR. 
It is noted that the ergodic capacity of the pure ABR 
is also a unimodal function of $\omega$. Due to the ultra-lower circuit power consumption, the maximal $\mathcal{C}_{\textup{ABR}}$ is achieved at a value much smaller than the shown range, and thus we omit displaying it. 
Moreover, 
the performance gap between the hybrid relaying with ESAP and the pure WPR 
becomes larger with the decrease of the energy harvesting time fraction.
This implies that the smaller the energy harvesting time, the greater the performance gain of the hybrid relaying with ESAP over the pure WPR.
The reason is that ABR is adopted when the harvested energy is deficient for active transmission, which largely lowers the demand for capacitor compared to the pure WPR.

\begin{figure} 
\centering
\begin{minipage}[c]{0.48\textwidth}
\includegraphics[width=0.95\textwidth]{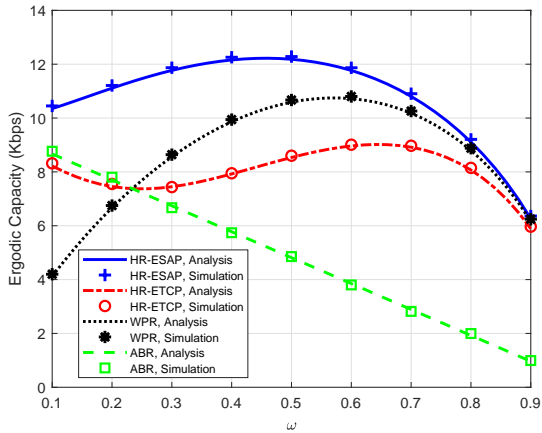}  
\caption{Success probability as a function of the energy harvesting time fraction  
$\omega$.}\label{fig:capacity_omega}
\end{minipage}
\begin{minipage}[c]{0.48\textwidth}
\includegraphics[width=0.95\textwidth]{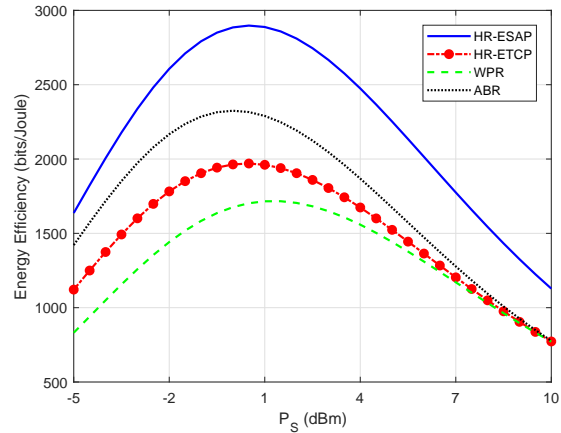}  
\caption{Energy efficiency as a function of $P_{\mathrm{S}}$ ($\tau_{\mathrm{A}}=10\tau_{\mathrm{A}}$).} 
\end{minipage} 
\end{figure}

\subsubsection{Impact of Transmit Power $P_{\mathrm{S}}$ on Energy Efficiency}

Based on the analytical results of ergodic capacity, we also evaluate the energy efficiency of the relaying system,  defined as the ergodic capacity versus the transmit power of the source node, i.e., $\mathcal{E}_{\textup{HR}}=\frac{\mathcal{C}_{\textup{HR}}}{P_{\mathrm{S}}}$.
As shown in Fig.~\ref{fig:EE_PS}, 
the ergodic capacities of all types of the relaying are unimodal functions of $P_{\mathrm{S}}$. Specifically, increasing $P_{\mathrm{S}}$ at first enhances the energy efficiency because of an increase of the ergodic capacity. However, as $P_{\mathrm{S}}$ keeps increasing, it becomes more dominant than the ergodic capacity which consequently
deteriorates the energy efficiency. It can also be observed that the maximum energy efficiencies for different types of the relaying are achieved at different values of $P_{\mathrm{S}}$.
The reason is that the energy efficiency depends not only on $P_{\mathrm{S}}$, but also $\omega$ which determines the amount of harvesting energy of $\mathrm{R}$ and the transmission time of each hop. 
Thus, other than $P_{\mathrm{S}}$, $\omega$ is also another key design parameter to improve the energy efficiency.

\subsubsection{Applications of Analytical Framework}

Furthermore, we demonstrate applications of the derived analytical framework in optimizing system parameters. 
In 
energy-constrained communication systems, power allocation is a crucial design issue. Therefore, we consider two power allocation-related design problems: transmit power minimization and energy efficiency maximization. 
For the first problem, we minimize the transmit power of the source node with constraints on the minimum capacity in order to optimize the energy harvesting time fraction $\omega$. The formulation is expressed as follows: 
\begin{align*}
& 
\mathbf{P1}: \hspace{3mm} \min_{\omega  }    \quad \quad P_{\mathrm{S}}  
\\   &  \begin{array}{r@{\quad}r@{}l@{\quad}l}
\text{subject to} &     \quad   \mathbf{C1}:  \quad   \mathcal{C}_{\textup{HR}}  \hspace{1mm}     \geq \hspace{1mm}  \mathcal{C}^{\textup{Target}}_{\textup{HR}}, \hspace{8mm} & 
\\
 &    \mathbf{C2}: \hspace{3mm} 0 \hspace{1mm}  \leq \hspace{1mm}  P_{\mathrm{S}}  \leq \hspace{1mm}  \bar{P}, \hspace{8.6mm}& \hspace{3mm} 0 \hspace{1mm}  \leq  \hspace{1mm}  P^{\mathrm{W}}_{\mathrm{R}} \hspace{1mm}  \leq \hspace{1mm}  \bar{P}, \\
  &     \mathbf{C3}: \hspace{3.5mm} 0 \hspace{1mm}  \leq  \hspace{1mm}  \omega  \hspace{1mm}  \leq \hspace{1mm}  1 . \hspace{9.8mm} & 
  \\   
\end{array} 
\end{align*}
where $\mathcal{C}^{\textup{Target}}_{\textup{HR}}$ denotes the target ergodic capacity and $\bar{P}$ denotes the maximum transmit power for the source node and the relay node in the WPR mode. $\mathbf{C2}$ denotes the transmit power constraints for the source node and the hybrid relay, and $\mathbf{C3}$ denotes the time allocation constraint.
 
For the second problem, we maximize the energy efficiency of the relaying system with the reliability constraint that the success probability of the hybrid relaying should be above some target value, denoted as $\mathcal{S}^{\textup{Target}}_{\textup{HR}}$. 
The formulation is shown as follows: 
\begin{align*}
& \mathbf{P2}:  \hspace{3mm} \max_{\omega, P_{\mathrm{S}} }    \quad \quad \mathcal{E}_{\textup{HR}} = \frac{\mathcal{C}_{\textup{HR}}}{P_{\mathrm{S}}} \\ 
& \begin{array}{r@{\quad}r@{}l@{\quad}l}
\text{subject to} &     \quad   \mathbf{C1}:  \quad   \mathcal{S}_{\textup{HR}}  \hspace{1mm}     \geq \hspace{1mm}  \mathcal{S}^{\textup{Target}}_{\textup{HR}}, \hspace{4.7mm} & 
\\
 &    \mathbf{C2}: \hspace{3mm} 0 \hspace{1mm}  \leq \hspace{1mm}  P_{\mathrm{S}}  \leq \hspace{1mm}  \bar{P}, \hspace{6mm}& \hspace{3mm} 0 \hspace{1mm}  \leq  \hspace{1mm}  P^{\mathrm{W}}_{\mathrm{R}} \hspace{1mm}  \leq \hspace{1mm}  \bar{P}, \\
  &     \mathbf{C3}: \hspace{3.4mm} 0 \hspace{1mm}  \leq  \hspace{1mm}  \omega  \hspace{1mm}  \leq \hspace{1mm}  1 . \hspace{7.5mm} &  \\
\end{array} 
\end{align*}
where $\mathbf{C1}$ is the reliability requirement and $\mathbf{C2}$ and $\mathbf{C3}$ are the same as those in $\mathbf{P1}$.
Solving this problem provides us optimal choices of the energy harvesting time fraction $\omega$ and transmit power $P_{\mathrm{S}}$.

\begin{figure} 
 \centering  
 \subfigure  [ Maximum energy efficiency as a function of target success probability.
 ] {
\label{fig:MinPS_EC}
 \centering
\includegraphics[width=0.45 \textwidth]{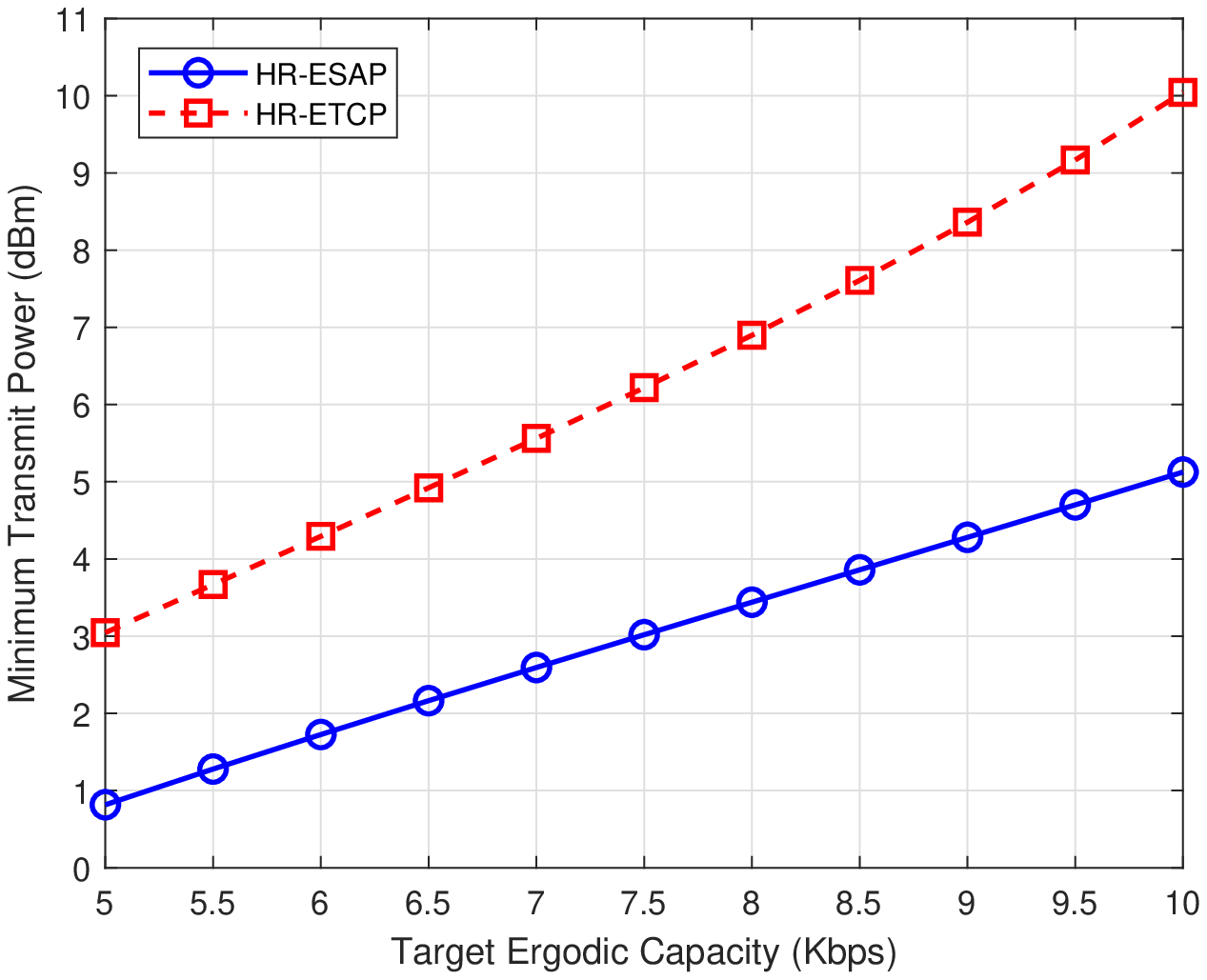}}
\centering
 \subfigure [ Optimal energy harvesting time fraction as a function of target success probability. \vspace{0mm}]
  {
\label{fig:Opt_omega_EC}
 \centering   
 \includegraphics[width=0.45 \textwidth]{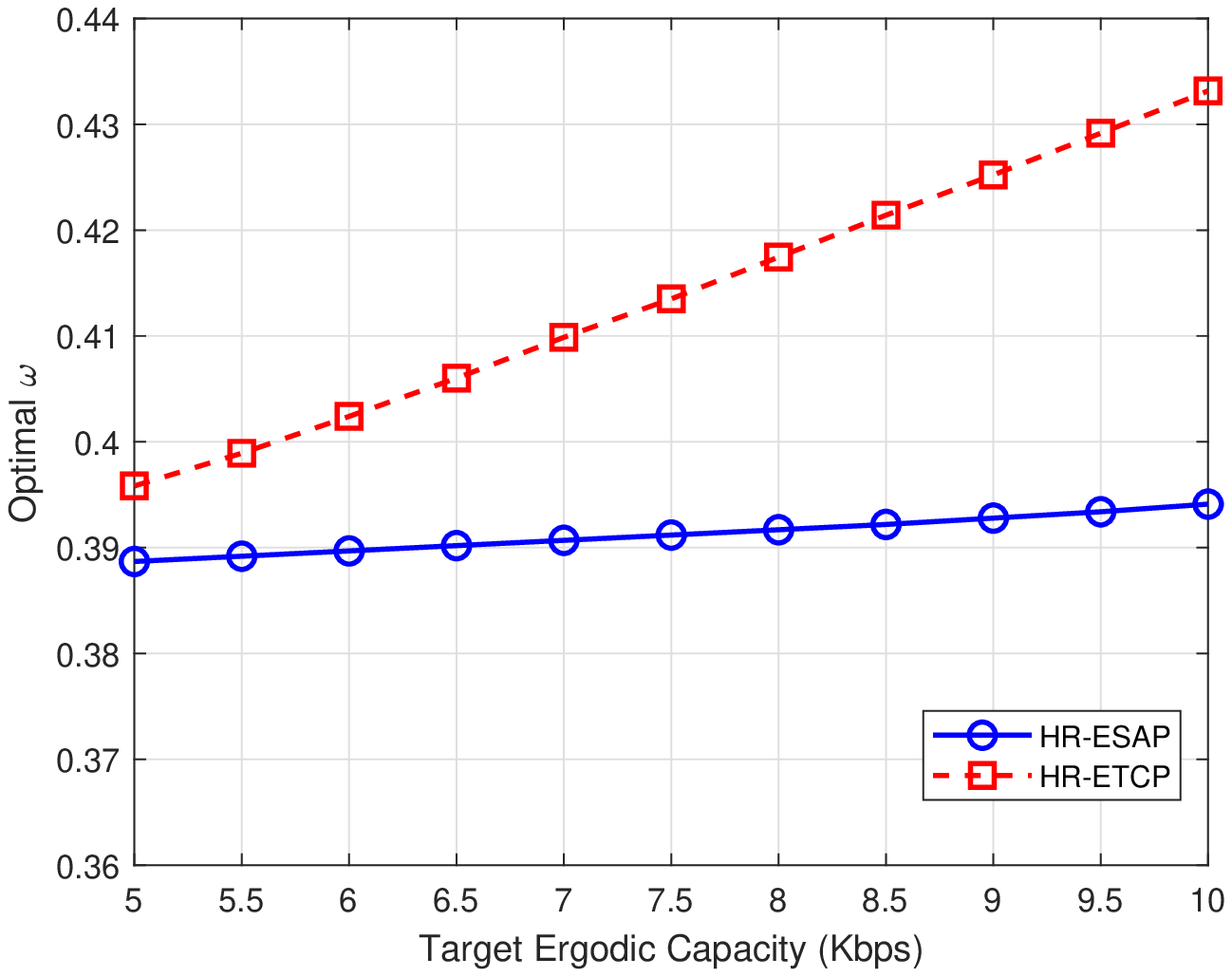}}
\caption{Optimal energy harvesting time fraction and maximum energy efficiency with reliability constraints. ($\rho_Q=2000/ $km$^2$, $\bar{P}=25$ dBm)} 
\centering
\label{fig:optimization_P1}
\vspace{0mm}
\end{figure}

\subsubsection{Numerical Solutions of Optimization Problems}
Next, we numerically solve the formulated optimization problems.  
Fig.~\ref{fig:MinPS_EC} and Fig.~\ref{fig:Opt_omega_EC} illustrates the minimum transmit power allocation and corresponding energy harvesting time allocation as functions of the target ergodic capacity, respectively, for $\mathbf{P1}$. As expected, the minimum transmit power is an increasing function of the target ergodic capacity. By utilizing the instantaneous CSI, the hybrid relaying with ESAP achieves the same target ergodic capacity with much lower transmit power than that with ETCP. Minimizing the transmit power has a larger impact on the optimal solutions of $\omega$ for the hybrid relaying with ETCP than that with ESAP.

Fig.~\ref{fig:MEE_reliability} and Fig.\ref{fig:omega_reliability} demonstrate the maximum energy efficiency and the corresponding energy harvesting time allocation as functions of the target success probability, respectively, for $\mathbf{P2}$. 
Again, the hybrid relaying with ESAP outperforms that with ETCP in terms of maximum energy efficiency due to higher achieved capacity. It is observed that with the increase of the target success probability the maximum energy efficiency first remains steady and then decreases. 
The reason is that larger
$P_{\mathrm{S}}$ is required to ensure higher reliability, which may sacrifice the energy efficiency.
Moreover, the optimal $\omega$ increases with the reliability requirement. This can be understood straightforwardly that more harvested energy is needed for the hybrid relay to perform WPR in order to guarantee a high reliability. 
The results of the optimal resource allocation problem can be used as guidance for setting the hybrid relaying to balance the tradeoff between energy efficiency and the reliability.





\begin{figure} 
 \centering  
 \subfigure  [ Maximum energy efficiency as a function of target success probability.
 ] {
\label{fig:MEE_reliability}
 \centering
\includegraphics[width=0.45 \textwidth]{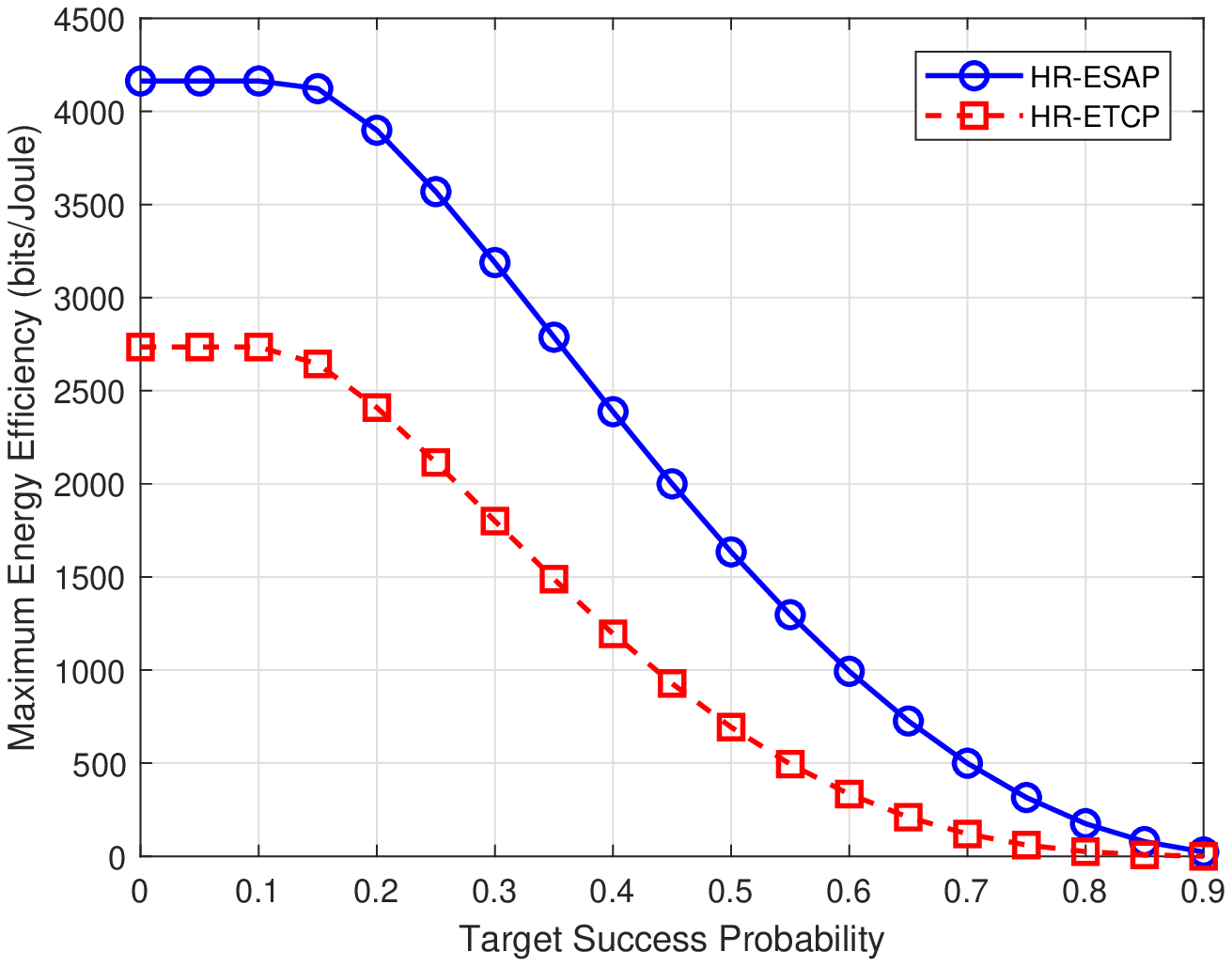}}
\centering
 \subfigure [ Optimal energy harvesting time fraction as a function of target success probability. \vspace{0mm}]
  {
\label{fig:omega_reliability}
 \centering   
 \includegraphics[width=0.45 \textwidth]{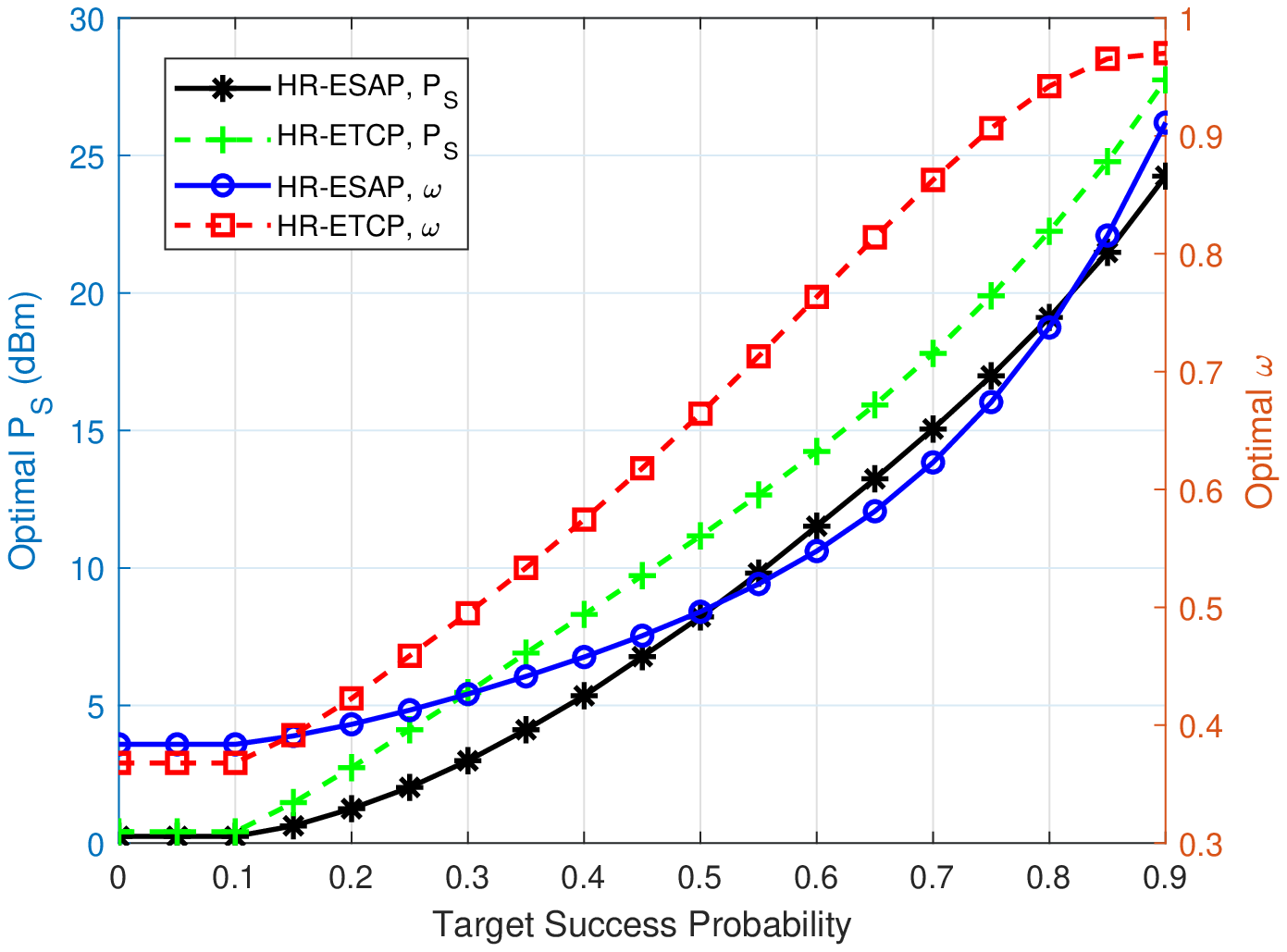}}
\caption{Optimal energy harvesting time fraction and maximum energy efficiency with reliability constraints. ($\rho_Q=2000/ $km$^2$, $\bar{P}=30$ dBm)} 
\centering
\label{fig:optimization}
\vspace{0mm}
\end{figure}

\section{Conclusion}
We have proposed a hybrid relaying paradigm that is capable of operating 
in either ambient backscatter relaying mode or wireless-powered relaying mode.  
Both relaying modes are based on, but have different ways of utilizing ambient RF signals. 
Therefore, how to 
switch between the two relaying modes
under different network environment largely determines the performance of the hybrid relaying.
To address this issue,
we have devised two 
protocols for the hybrid relaying to 
perform operational mode selection
with and without instantaneous CSI. 
Considering the use of the hybrid relaying in a dual-hop relay system with spatially randomly located 
ambient transmitters,
we have derived the end-to-end success probabilities and ergodic capacity of the system under different mode selection protocols based on stochastic geometry analysis.
We have demonstrated analytically and numerically the superiority of the hybrid relaying 
over the pure wireless-powered relaying and the pure ambient backscatter relaying, 
as the proposed mode selection protocols effectively select the proper operation to adapt to the system environment. 
The analytical results reveal 
the impacts of different system parameters on the studied performance metrics and allow us to optimize the system parameters based on the objective.
Our analytical framework can be extended to investigate intelligent reconfigurable surface \cite{X.June2020Lu,S.2020Gong} assisted relaying.
Another promising direction is to design online learning (e.g., bandit learning~\cite{G.Aug.2020Li}) based mode selection protocol for the hybrid relaying.


\section*{Appendix} 
 
\subsection{Proof of Theorem 1}

\begin{proof}

According to the  
criteria of ESAP 
described in Section \ref{sec:network_model}, we have the probability that the hybrid relay being in the WPR mode and the ABR mode, respectively, expressed as 
\begin{align}  
\mathbb{P}\big[\mathrm{M}_{\textup{ESAP} }=\mathrm{W}\big] =  \mathbb{P} \Big[\nu_{\mathrm{R}} \!>\! \tau_{\mathrm{W}}, E_{\mathrm{R}} \! > \! E_{\mathrm{W}}, \nu^{\mathrm{W}}_{\mathrm{D}} \! > \! \tau_{\mathrm{W}}  \Big], \label{eqn:CP_HR2}
\end{align}
\begin{align}
\text{and} \quad & \mathbb{P}\big[\mathrm{M}_{\textup{ESAP} }=\mathrm{A}\big] =  \mathbb{P}\Big[\nu_{\mathrm{R}} \!>\! \tau_{\mathrm{W}}, E_{\mathrm{R}} \!>\! E_{\mathrm{W}}, \nu^{\mathrm{W}}_{\mathrm{D}} \! \leq \! \tau_{\mathrm{W}}\Big]  + \mathbb{P} \Big[ E_{\mathrm{W}}  \! \geq \! E_{\mathrm{R}} \! > \! E_{\mathrm{A}} \Big]. \label{eqn:CP_HR3} 
\end{align} 
 
By inserting (\ref{eqn:CP_HR2}) and (\ref{eqn:CP_HR3}) into (\ref{definition_SP}), we have   
\begin{align}
& \mathcal{S}^{\textup{ESAP} }_{\textup{HR}} = \! 
\underbrace{\mathbb{P} \Big [  \nu_{\mathrm{R}} \! > \! \tau_{\mathrm{W}}, \nu^{\mathrm{W}}_{\mathrm{D}} \! > \! \tau_{\mathrm{W}}, E_{\mathrm{R}}  \! > \! E_{\mathrm{W}}   \Big ] }_{ \mathcal{S}^{\textup{ESAP} }_{\mathrm{W}} }  
\!   
\nonumber \\
& \hspace{15mm} + \underbrace{\mathbb{P} \Big [ \nu_{\mathrm{R}} \! > \! \tau_{\mathrm{W}}, \nu^{\mathrm{A}}_{\mathrm{D}} \! > \! \tau_{\mathrm{A}},   \nu^{\mathrm{W}}_{\mathrm{D}} \! \leq \! \tau_{\mathrm{W}},   E_{\mathrm{R}}       \!>\! E_{\mathrm{W} } \Big ] + \mathbb P \Big[ \nu_{\mathrm{R}} \! > \! \tau_{\mathrm{W}}, \nu^{\mathrm{A}}_{\mathrm{D}} \!>\! \tau_{\mathrm{A}}, E_{\mathrm{W}} \! \geq \! E_{\mathrm{R}} \! > \! E_{\mathrm{A}} \Big]}_{  \mathcal{S}^{\textup{ESAP} }_{\mathrm{A}}  } ,  \label{eqn:C_ARP} 
\end{align}
where $\mathcal{S}^{\textup{ESAP} }_{\mathrm{W}}$ ($\mathcal{S}^{\textup{ESAP} }_{\mathrm{A}}$) represents the joint probability that the transmission is successful and the hybrid relay is in the WPR (ABR) mode under ESAP. We first obtain $\mathcal{S}^{\textup{ESAP} }_{\mathrm{W}}$ as follows:
\begin{align} 
& \mathcal{S}^{\textup{ESAP} }_{\mathrm{W}} 
\!\! = \mathbb{P} \bigg[  h_{\mathrm{S},\mathrm{R}} > \frac{   d^{\mu}_{\mathrm{S}, \mathrm{R}} \tau_{\mathrm{W}} (I_{\mathrm{R}} \! + \! \sigma^2 ) }{ P_{\mathrm{S}} } ,  h_{\mathrm{R},\mathrm{D}}  > \frac{   d^{\mu}_{\mathrm{R}, \mathrm{D}} \tau_{\mathrm{W}} (I_{\mathrm{D}} \! + \! \sigma^2 ) }{ P^{\mathrm{W}}_{\mathrm{R}} } ,  \omega \beta T Q_{\mathrm{R}} > E_{\mathrm{W}}  \bigg] \nonumber \\ 
& 
\!\! \overset{\text{(a)}}{=}  \mathbb E  
\bigg[ \exp \! \bigg( \! \! - \! \frac{   d^{\mu}_{\mathrm{S}, \mathrm{R}} \tau_{\mathrm{W}} 
}{ P_{\mathrm{S}} } \Big( \! \sum_{j \in \Phi} \!\! \frac{P_{T} h_{j,\mathrm{R}}}{ d^{\mu}_{j,\mathrm{R}} }  \! + \! \sigma^2  \! \Big) \! \! \bigg) \exp \! \bigg( \! \! - \! \frac{   d^{\mu}_{\mathrm{R}, \mathrm{D}} \tau_{\mathrm{W}}  }{ P^{\mathrm{W}}_{\mathrm{R}} } \Big( \! \sum_{j \in \Phi} \!\! \frac{P_{T} h_{j,\mathrm{D}}}{ d^{\mu}_{j,\mathrm{D}} }  \! + \! \sigma^2  \! \Big) \! \! \bigg)   {\bf 1}_{\{Q_{\mathrm{R}} > \varrho_{\mathrm{W}} \} } \bigg]    \nonumber \\  
& \!\! \overset{\text{(b)}}{=} \exp \! \Big( \!\! - \! 
\kappa(\tau_{\mathrm{W}}) \sigma^2  \!  \Big) \! \Bigg( \! \!
 \exp \! \Big( \!\! -    
  \!  \ell(\tau_{\mathrm{W}},\rho_{C})     \sigma^2   \!   \Big)    \mathbb{E}_{   \Phi} \!  \Bigg[   \prod_{j \in \Phi} \! \Big( \! 1 \! + \! \frac{\kappa(\tau_{\mathrm{W}}) P_{T}}{ d^{\mu}_{j,\mathrm{R}} } \Big)^{\!\!\!-1} \!   \Big(\! 1 \! + \! \frac{ \ell(\tau_{\mathrm{W}},\rho_{C}) P_{T}  }{  d^{\mu}_{j,\mathrm{D}} } \Big)^{\!\!-1} \! \Bigg] \!\!\int^{\infty}_{\!\!\varrho_{\mathrm{W}}+\varrho_C} \!\!\!\!\! f_{Q_{\mathrm{R}}}(q) \mathrm{d} q  \nonumber \\
& \!\! \hspace{2mm} 
+ \!\! 
\int^{\varrho_{\mathrm{W}}+\varrho_C}_{\varrho_{\mathrm{W}} } \!\!\! \exp \! \Big( \!\! - \! \ell(\tau_{\mathrm{W}}, \omega \beta q - \rho_{\mathrm{W}}) \sigma^2   \! \Big) 
 \mathbb{E}_{\Phi}  \Bigg[   \prod_{j \in \Phi} \! \Big( \! 1 \! + \! \frac{\kappa(\tau_{\mathrm{W}}) P_{T}}{   d^{\mu}_{j,\mathrm{R}} } \!\Big)^{\!\!\!-1} \!   \Big( \! 1 \! + \! \frac{ \ell(\tau_{\mathrm{W}}, \omega \beta q - \rho_{\mathrm{W}}) P_{T}   }{  d^{\mu}_{j,\mathrm{R}} } \! \Big)^{\!\!-1} \!    \Bigg] f_{Q_{\mathrm{R}}}(q) \mathrm{d} q \!\nonumber \\
& \!\! \overset{\text{(c)}}{=}  \exp \! \big( \! - \!   \kappa(\tau_{\mathrm{W}}) \sigma^2      \big) \! \Bigg( \! \exp \! \Big( \! \! - \!  \ell(\tau_{\mathrm{W}},\rho_{C}) \sigma^2 
 \!    \Big)  
\mathrm{Det}    \Big(\mathrm{Id} \! + \! \alpha  \mathbb{G}_{\Phi}(\mathbf{x},\mathbf{y})\varphi_\mathbf{x}\big(\ell(\tau_{\mathrm{W}},\rho_{C})\big)\varphi_\mathbf{y}\big(\ell(\tau_{\mathrm{W}},\rho_{C})\big)  \! \Big)^{\! - \frac{1}{ \alpha}  } \nonumber \\
&  \times \! \Big( 1\!-\!F_{Q_{\mathrm{R}}} \big(\varrho_{\mathrm{W}}\!+\!\varrho_{C}\big) \! \Big) 
\! + \! \! \int^{\varrho_{\mathrm{W}}   +   \varrho_{\mathrm{C}}}_{\! \varrho_{\mathrm{W}}} \! \! \! \mathrm{Det} \Big( \mathrm{Id} \! + \! \alpha 
\mathbb{G}_{\Phi}(\mathbf{x},\mathbf{y})\varphi_\mathbf{x} \big(\ell(\tau_{\mathrm{W}},\omega\beta q \! - \! \rho_{\mathrm{W}}) \varphi_\mathbf{y}\big(\ell(\tau_{\mathrm{W}},\omega\beta q \! - \! \rho_{\mathrm{W}})\big)
\! \Big)^{\!\!- \frac{1}{ \alpha}  }  \nonumber \\
& \hspace{85mm} \times \exp \! \Big( \! \! - \! \ell(\tau_{\mathrm{W}}, \omega \beta q -\rho_{\mathrm{W}} )  \sigma^2  \!  \Big)   f_{Q_{\mathrm{R}}}(q) \mathrm{d} q \! \Bigg), \!\! \label{eqn:SP_HR_A}
\end{align}  
where (a) follows the complementary CDF of the exponential random variable, i.e., $\mathbb{P}[h > x]=\exp(-x)$ for $h \sim \exp(1)$, (b) follows the Laplace transform of an exponential random variable, i.e., 
$\mathcal{L}_{h}(Z) = (1+Z)^{-1}$ for $h \sim \exp(1)$ 
and applies the substitutions $\kappa(v)=\frac{d^{\mu}_{\mathrm{S},\mathrm{R}} v}{P_{\mathrm{S}}}$ and $\ell(v,p)=\! \frac{  d^{\mu}_{\mathrm{R},\mathrm{D}} v (1-\omega) }{ 2 p } $, and (c) applies $\textbf{Theorem}$ $1$ of~\cite{L.August2015Decreusefond}. 

Subsequently, we continue to derive $\mathcal{S}^{\textup{ESAP} }_{\mathrm{A}}$ as follows:
\begin{align}
 & \mathcal{S}^{\textup{ESAP} }_{\mathrm{A}} = \mathbb{P} \bigg[   h_{\mathrm{S},\mathrm{R}} \! > \! \frac{  d^{\mu}_{\mathrm{S},\mathrm{R}} \tau_{\mathrm{W}} (I_{\mathrm{R}} \!  + \! \sigma^2 ) }{ P_{\mathrm{S}}  } ,    \widetilde{h}_{\mathrm{R},\mathrm{D}} \! > \! \frac{ d^{\mu}_{\mathrm{R},\mathrm{D}} \widetilde{\sigma}^2  \tau_{\mathrm{A}}  }{ \eta \xi Q_{\mathrm{R}} } , h_{\mathrm{R},\mathrm{D}} \! \leq \! \frac{  d^{\mu}_{\mathrm{R},\mathrm{D}}\tau_{\mathrm{W}} (I_{\mathrm{D}} \! + \! \sigma^2 ) }{ P^{\mathrm{W}}_{\mathrm{R}}  } ,  \omega \beta T Q_{\mathrm{R}} \! > \! 
 Q_{\mathrm{W}}
 \bigg] \nonumber \\
 & \hspace{50mm}  + \mathbb P \bigg[  h_{\mathrm{S}, \mathrm{R}} \! > \! \frac{ \tau_{\mathrm{W}}d^{\mu}_{\mathrm{S},\mathrm{R}} (I_{\mathrm{R}} \! + \! \sigma^2  )  }{   P_{\mathrm{S}} }  , \widetilde{h}_{\mathrm{R},\mathrm{D}} \! > \! \frac{ d^{\mu}_{\mathrm{R},\mathrm{D}} \widetilde{\sigma}^2  \tau_{\mathrm{A}}  }{ \eta \xi Q_{\mathrm{R}} }   ,   E_{\mathrm{W}} \! \geq \! \omega \beta T Q_{\mathrm{R}}  \! >  \! E_{\mathrm{A}}   \bigg].
 \nonumber \\
& \overset{\text{(d)}}{=}   
\exp \! \big( \!  - \!  \kappa(\tau_{\mathrm{W}})\sigma^2    \big)
\! \bigg ( \!  
\mathrm{Det} \Big( \mathrm{Id} \! + \! \alpha \mathbb{G}_{\Phi}(\mathbf{x},\mathbf{y}) \psi_{\mathbf{x}}\big(\kappa(\tau_{\mathrm{W}})\big)   \psi_{\mathbf{y}} \big(\kappa(\tau_{\mathrm{W}}) \big) \! \Big)^{\!\!-\!\frac{1}{\alpha}}
\!\!\!  \int^{\infty}_{  \varrho_{\mathrm{A}}   } \!\!  
\delta(q)
 f_{Q_{\mathrm{R}}}(q) \mathrm{d}q   \!   \nonumber \\
 & \hspace{30mm} - 
 \chi_{\mathbb{A}} (\tau_{\mathrm{W}}, \rho_{C})   \!  \int^{\infty}_{  \varrho_{\mathrm{W}} + \varrho_{C}  } \! \!\! \delta(q) f_{Q_{\mathrm{R}}}(q) \mathrm{d}q     
 -   \!\int^{ \varrho_{\mathrm{W}} \! + \! \varrho_{C}  }_{ \varrho_{\mathrm{W}}  } \!\!\!    
 \chi_{\mathbb{A}} (\tau_{\mathrm{W}}, \omega\beta q \! - \! \rho_{\mathrm{W}} ) \delta(q) \!  f_{Q_{\mathrm{R}}}(q) \mathrm{d} q \! \bigg).
 \label{eqn:SP_HR_B} 
  \end{align} 
where (d) follows similar derivation steps of $\mathcal{S}^{\textup{ESAP}}_{\mathrm{W}}$ in (\ref{eqn:SP_HR_A}) and $\chi_{\mathbb{A}}(v, p)$ has been defined in (\ref{chi1}).

Finally, by plugging (\ref{eqn:SP_HR_A}) and (\ref{eqn:SP_HR_B}) into (\ref{eqn:C_ARP}), we have 
$\mathcal{S}^{\textup{ESAP}}_{\textup{HR}}$ expressed in (\ref{eqn:QI_HR}).
\end{proof}

\subsection{Proof of Corollary 2}    
    
\begin{proof}
The case when the ambient emitters and interferers are distributed following PPPs can be modeled as a special case of our adopted $\alpha$-GPP when $\alpha \to \infty$.
In particular, by using the expansion~\cite{ShiraiTakahashi}
\begin{eqnarray} \label{eqn:expansion}
	\mathrm{Det} \big(\mathrm{Id}+\alpha \mathbb{B}_{\Phi}(s) \big)^{-\frac{1}{\alpha} }
	\stackrel{\alpha \to 0}{\longrightarrow} \exp \left( -\int_{\mathbb{O} } \mathbb{B}_{\Phi} (\mathbf{x},\mathbf{x} )\mathrm{d} \mathbf{x}\right),
	\end{eqnarray}
	we can simplify 
	the Fredholm determinant in (\ref{eqn:theorem_ABR}) as follows when $\alpha \to \infty$, $\widetilde{\alpha} \to \infty$, and
	$\mu=4$: 
	\begin{align}  
	\mathrm{Det} \Big(\mathrm{Id}\!+\!\alpha \mathbb{B}_{\Phi}\big( \kappa(\tau_{\mathrm{W}}) \big) \!  \Big)^{-\frac{1}{\alpha}}
	&\!= \! \exp\! \left ( \! - 2 \pi \zeta   \int^{R\to\infty}_{0} \! \bigg( 1- \Big( 1\! + \! \frac{ d^{4}_{\mathrm{S},\mathrm{R}} \tau_{\mathrm{W}}  P_{T}}{ P_{\mathrm{S} } r^4} \Big)^{-1} \bigg) r \mathrm{d}r \right)  \nonumber \\ &
	 =  \exp \left (\!-  \frac{ 1}{ 2 } \pi^2 \zeta \sqrt{\frac{ d^{4}_{\mathrm{S},\mathrm{R}} \tau_{\mathrm{W}}  P_{T}}{ P_{\mathrm{S} }  }   }  \right). \label{eqn:Det_PPP}
	\end{align} 
Similarly, following the expansion in (\ref{eqn:expansion}),  $f_{Q_{\mathrm{R}}}$(q) can be simplified as follows: 
\begin{align} 
f_{Q_{\mathrm{R}}}(q) 
&\!= \!\mathcal{L}^{-1} \! \left\{ \exp\! \left [- 2 \pi \widetilde{\zeta} \!  \int^{R\to\infty}_{0} \! \! \frac{r}{ 1\! + \! r^{4} (s  \widetilde{P}_{T})^{-1} } \mathrm{d}r \right]  \right\}(q) 
 =  \mathcal{L}^{-1}\! \left\{   \exp \left (\!-  \frac{\pi^2 }{ 2 }  \widetilde{\zeta} \sqrt{s \widetilde{P}_{T} }   \right) \! \right\}(q) \nonumber \\
&\overset{\text{(e)}}{=} \frac{1}{2\pi i} \lim_{T \to \infty} \int^{C+i T}_{C-i T} \exp \left (q s \right) \exp \left(\! - \frac{\pi^2 }{ 2 } \widetilde{\zeta} \sqrt{s \widetilde{P}_{T} }  \right) \mathrm{d} s \nonumber \\
&\overset{\text{(f)}}{=} \frac{1}{ \pi } \int^{\infty}_{0} \exp(- q t) \left[ \frac{\exp\Big( \frac{i}{2} \pi^2 \widetilde{\zeta} \sqrt{t \widetilde{P}_{T} } \Big)- \exp \Big( \! - \! \frac{i}{ 2 } \pi^2 \widetilde{\zeta} \sqrt{t \widetilde{P}_{T} }  \Big) }{2i} \right]\mathrm{d} t \nonumber \\
&\overset{\text{(g)}}{=}\frac{1}{ \pi } \int^{\infty}_{0} \exp (- q t ) \sin \Big( \frac{1}{2}\pi^2 \widetilde{\zeta} \sqrt{t \widetilde{P}_{T}} \Big)   \mathrm{d}u
 = \frac{ 1 }{4} \left(\frac{\pi}{q}\right)^{\!\!\frac{3}{2}} \widetilde{\zeta} \sqrt{\widetilde{P}_{T}} \exp \left(- \frac{\pi^4 \widetilde{\zeta}^2 \widetilde{P}_{T}}{16 q} \right),  \label{eqn:PDF_PPP}  
\end{align} 
where (e) follows from Mellin's inverse formula~\cite{P.1995Flajolet} which transforms the inverse Laplace
transform into the complex plane, and $C$ is a fixed constant
greater than the real parts of the singularities of 
$\exp \left(\! - \frac{\pi^2 }{ 2 } \widetilde{\zeta} \sqrt{s \widetilde{P}_{T} }  \right)$, 
(f) applies the Bromwich inversion theorem with the modified contour, and (g) holds as $\frac{\exp(ix)-\exp(-ix)}{2i}=\sin(x)$.

Finally, 
by plugging (\ref{eqn:Det_PPP}) and (\ref{eqn:PDF_PPP}) into (\ref{eqn:theorem_ABR}), we can obtain $\mathcal{S}_{\textup{ABR}}$ in  (\ref{eqn:SP_ABR}) after some mathematical manipulations.
\end{proof}
      
\subsection{Proof of Theorem 2}      

\begin{proof}
With ETCP, the mode selection of $\mathrm{R}$ is based on achieved performance in previous time slots  instead of the current one. Therefore, the steady-state success probability of the hybrid relaying in a certain mode is independent of the mode selection probability. 
Recall that the hybrid relay under ETCP eventually selects the ABR mode when the number of successful transmission in the ABR mode is higher than that in the WPR mode during the exploration period.
According to this mode selection criteria,
the corresponding success probability can be expressed as 
\begin{align} \label{SP_ETCP} 
& \mathcal{S}^{\textup{ETCP}}_{\textup{HR}} \! = \! \mathbb{P}\Big[\nu_{\mathrm{R}} \! > \! \tau_{\mathrm{W}}, \nu^{\mathrm{W}}_{\mathrm{D}} \! >\! \tau_{\mathrm{W}},  E_{\mathrm{R}} \! > \! E_{\mathrm{W}} \Big]  \mathbb{P} \Big[ \mathrm{M}_{\textup{ETCP}} \!=\! \mathrm{W}  \Big] \!+\!  \mathbb P \Big[ \nu_{\mathrm{R}} \! >\! \tau_{\mathrm{W}}, 
  \nu^{\mathrm{A}}_{\mathrm{D}} \! > \! \tau_{\mathrm{A}},  E_{\mathrm{R}} \! > \! E_{\mathrm{A}} \Big]   \mathbb{P} \Big[ \mathrm{M}_{\textup{ETCP}} \!=\! \mathrm{A}   \Big] \nonumber \\
& + \! \frac{1}{2}  \bigg ( \! 1 \! - \! \mathbb{P}\Big[\nu_{\mathrm{R}} \! > \! \tau_{\mathrm{W}}, \nu^{\mathrm{W}}_{\mathrm{D}} \! >\! \tau_{\mathrm{W}},  E_{\mathrm{R}} \! > \! E_{\mathrm{W}} \Big] \! -  \mathbb P \Big[ \nu_{\mathrm{R}} \! >\! \tau_{\mathrm{W}}, 
  \nu^{\mathrm{A}}_{\mathrm{D}} \! > \! \tau_{\mathrm{A}},  E_{\mathrm{R}} \! > \! E_{\mathrm{A}} \Big] \! \bigg) \! \bigg( \! \mathbb{P} \Big[ \mathrm{M}_{\textup{ETCP}} \!=\! \mathrm{W}  \Big] \!\! + \!  \mathbb{P} \Big[ \mathrm{M}_{\textup{ETCP}} \!=\! \mathrm{A}  \Big] \! \bigg)  \nonumber \\
& =   \frac{1}{2} (\mathcal{S}_{\textup{WPR}}       +  \mathcal{S}_{\textup{ABR}})  +\frac{1}{2}\bigg( \mathbb{P}   \Big[ N_{\textup{WPR}}  \!>\! N_{\textup{ABR}} \Big]  - \mathbb{P} \Big[ N_{\textup{ABR}} \!>\! N_{\textup{WPR}}\Big]\bigg) (\mathcal{S}_{\textup{WPR}} - \mathcal{S}_{\textup{ABR}}).
\end{align}   

The number of successful transmissions in the ABR mode and that in the WPR mode are dependent on $\mathcal{S}_{\textup{ABR}}$ and $\mathcal{S}_{\textup{WPR}}$, obtained in (\ref{eqn:theorem_ABR}) and (\ref{eqn:theorem_WPR}), respectively. Based on these results, we have 
\begin{align} \label{P_METCP_B}
& \mathbb{P} \big[  N_{\textup{ABR}} \!>\! N_{\textup{WPR}}  \big] = \sum^{n}_{i=1}  {n \choose i } \mathbb{P} \big[ N_{\textup{ABR} } = i \big] \sum^{i}_{j=1} {n \choose i\! -\! j }  \mathbb{P} \big[ N_{\textup{WPR} } = i - j \big]      \nonumber \\
& = \sum^{n}_{i=1} {n \choose i } \mathcal{S}^{i}_{\textup{ABR}} (1-\mathcal{S}_{\textup{ABR}} )^{n-i}   \sum^{i}_{j=1} 
{n \choose i-j}
\mathcal{S}^{i-j}_{\textup{WPR}} \big(1- \mathcal{S}_{\textup{WPR}} \big)^{n-i+j} . 
\end{align}
 
Similarly, we have 
\begin{align}  \label{P_METCP_A}
\mathbb{P} \big[  N_{\textup{WPR}} \!>\! N_{\textup{ABR}}  \big] = \sum^{n}_{i=1} {n \choose i } \mathcal{S}^{i}_{\textup{WPR}} (1-\mathcal{S}_{\textup{WPR}} )^{n-i}  \sum^{i}_{j=1} {n \choose i\!-\!j}
\mathcal{S}^{i-j}_{\textup{ABR}} \big(1- \mathcal{S}_{\textup{ABR}} \big)^{n-i+j} . 
\end{align}

Then, by inserting (\ref{P_METCP_B}) and (\ref{P_METCP_A}) into  (\ref{SP_ETCP}), we have the expression of $\mathcal{S}^{\textup{ETCP}}_{\textup{HR}}$ in  (\ref{eqn:SP_ETC}).
\end{proof}

\subsection{Proof of Theorem 3}
\label{proofThm2}

\begin{proof}

Recall that, with ESAP, the probabilities of the hybrid relaying working in the WPR mode (i.e., $\mathbb{P}[\mathrm{M}=\mathrm{W}]$) and ambient backscatter mode (i.e., $\mathbb{P}[\mathrm{M}=\mathrm{A}]$) have been obtained in (\ref{eqn:CP_HR2}) and (\ref{eqn:CP_HR3}), respectively. 
By inserting (\ref{eqn:CP_HR2}) and (\ref{eqn:CP_HR3})
into the definition in (\ref{def:ergodic_capacity}), we have the ergodic capacity of the hybrid relaying with ESAP as follows:
\begin{align} 
\mathcal{C}^{\textup{ESAP}}_{\textup{HR}} 
&  =  \frac{1-\omega}{2} \Bigg( \mathbb{E} \Big[W \log_{2} (1 + \nu )  {\bf 1}_{ \{\nu_{\mathrm{R}} > \tau_{\mathrm{W}},  \nu^{\mathrm{W}}_{\mathrm{D}} > \tau_{\mathrm{W}}, E_{\mathrm{R}} > E_{ \mathrm{W} } \}} \Big]    
+ C_{\mathrm{A}} 
\bigg( \mathbb{P} \Big[ \nu_{\mathrm{R}} \! > \! \tau_{\mathrm{W}}, \nu^{\mathrm{A}}_{\mathrm{D}} \! > \! \tau_{\mathrm{A}} , \nu^{\mathrm{W}}_{\mathrm{D}} \! \leq \! \tau_{\mathrm{W}},    E_{\mathrm{R}} \! > \! E_{\mathrm{W}} \Big] \nonumber \\
& \hspace{90mm} + \mathbb{P} \Big[ \nu_{\mathrm{R}} \! > \! \tau_{\mathrm{W}},
\nu^{\mathrm{A}}_{\mathrm{D}} \! > \! \tau_{\mathrm{A}} ,    E_{\mathrm{W}} \! \geq \! E_{\mathrm{R}} > E_{\mathrm{A}} \Big] \bigg)  \! \Bigg) \nonumber \\
& = \frac{1-\omega}{2} \bigg( \mathbb{E} \Big[W \log_{2} (1 + \nu )  {\bf 1}_{ \{\nu_{\mathrm{R}} > \tau_{\mathrm{W}},  \nu^{\mathrm{W}}_{\mathrm{D}} > \tau_{\mathrm{W}}, E_{\mathrm{R}} > E_{ \mathrm{W} } \}} \Big]    
+ C_{\mathrm{A}} \mathcal{S}^{\textup{ESAP}}_{\mathrm{A}} \! \bigg),
 \label{def:HR_QI}  
\end{align} 
where $\mathcal{S}^{\textup{ESAP}}_{\mathrm{A}}$ has been obtained in~(\ref{eqn:SP_HR_B}).
 
The first term in the brackets of (\ref{def:HR_QI}) can be calculated as
\begin{align} 
& 
\mathbb{E} \Big[ W \log_{2} (1 + \nu )  {\bf 1}_{ \{  \nu_{\mathrm{R}} > \tau_{\mathrm{W}},  \nu^{\mathrm{W}}_{\mathrm{D}} > \tau_{\mathrm{W}},   E_{\mathrm{R}} > E_{\mathrm{W}}   \}} \Big]  \overset{\text{(h)}}{=} 
W \! \int^{\infty}_{0}   \! \mathbb{P} \Big[  \log_{2} ( 1 + \nu ) {\bf 1}_{ \{\nu_{\mathrm{R}} > \tau_{\mathrm{W}},  \nu^{\mathrm{W}}_{\mathrm{D}} > \tau_{\mathrm{W}}, E_{\mathrm{R}} > E_{\mathrm{W}}   \}} > t \Big] \mathrm{d} t \nonumber \\ 
& \overset{\text{(i)}}{=}  
\frac{ W}{  \ln(2)}   \! \int^{\infty}_{0} \!  \mathbb{P} \Big[   \nu  > v,   \omega T \beta  Q_{\mathrm{R}} \! > \!   E_{\mathrm{W}} \Big]   \frac{1}{1+v} \mathrm{d} v  
=\frac{ W}{  \ln(2)} \! \int^{\infty}_{0} \!  \mathbb{P} \Big[  \min ( \nu_{ \mathrm{R}}, \nu^{\mathrm{W}}_{\mathrm{D}}) > v ,   Q_{\mathrm{R}} > \varrho_{\mathrm{W}}  \Big] \frac{1}{1+v} \mathrm{d} v  \nonumber \\
& = 
\frac{ W}{  \ln(2)}\! \int^{\infty}_{0} \!  \mathbb{P} \Big[   \nu_{\mathrm{R} }  > v ,   \nu^{\mathrm{W}}_{\mathrm{D} } > v ,  Q_{\mathrm{R}} > \varrho_{\mathrm{W}} \Big] \frac{1}{1+v} \mathrm{d} v 
\nonumber \\
& \overset{\text{(j)}}{=} \! 
\frac{ W}{  \ln(2)} \! \int^{\infty}_{0}  
\! \exp \! \Big( \! \! - \!   \kappa(v) \sigma^2    \!   \Big) \!  \bigg( \!  
\chi_{\mathbb{A}} (\tau_{\mathrm{W}}, \rho_{C})  
\Big( 1 \!-\! F_{Q_{\mathrm{R}}} (\varrho_{\mathrm{W}}\!+\!\varrho_C ) \! \Big)   \nonumber \\  
& \hspace{85mm} + \! \int^{ \varrho_{\mathrm{W}} \!  + \! \varrho_{C}  }_{ \varrho_{\mathrm{W}} } \!\! \! \! \chi_{\mathbb{A}} (\tau_{\mathrm{W}}, \omega\beta q \! - \! \rho_{\mathrm{W}} )  f_{Q_{\mathrm{R}}}(q) \mathrm{d} q \! \bigg) \frac{1}{1+v} \mathrm{d}v,  
\label{eqn:C_A_QI} 
\end{align}
where (h) follows $ \mathbb{E}[A] = \int^{\infty}_{0} \mathbb{P} [A > t ] \mathrm{d}t $  \cite{G.2011Andrews}, (i) 
replaces $t$ with $\log_{2}(1+v)$,
and (j) follows the derivation of  $\mathcal{S}^{\textup{ESAP}}_{\mathrm{W}}$ in (\ref{eqn:SP_HR_A}) with $\tau_{\mathrm{W}}$ replaced by $v$.  
 
Then, by plugging (\ref{eqn:SP_HR_B}) and (\ref{eqn:C_A_QI}) into (\ref{def:HR_QI}), we have the ergodic capacity of the hybrid relaying with 
ESAP in (\ref{the:QI_Capacity_HR}).
\end{proof}

  \end{document}